\setlist[itemize]{label=--}
\setlist[enumerate]{label=(\arabic*),labelindent=\parindent,leftmargin=*}
\definecolor{citecolor}{HTML}{0000C0}
\definecolor{urlcolor}{HTML}{000080}
\newtheorem{theorem}{Theorem}
\newtheorem{lemma}[theorem]{Lemma}
\newtheorem{corollary}[theorem]{Corollary}
\newtheorem{definition}[theorem]{Definition}
\theoremstyle{remark}
\newtheorem{remark}[theorem]{Remark}
\newcommand{\namedref}[2]{\hyperref[#2]{#1~\ref*{#2}}}
\newcommand{\sectionref}[1]{\namedref{Section}{#1}}
\newcommand{\figureref}[1]{\namedref{Figure}{#1}}
\newcommand{\tableref}[1]{\namedref{Table}{#1}}
\newcommand{\equationref}[1]{\hyperref[#1]{Eq~(\ref*{#1})}}
\newcommand{\theoremref}[1]{\hyperref[#1]{Theorem~\ref*{#1}}}
\newcommand{\lemmaref}[1]{\hyperref[#1]{Lemma~\ref*{#1}}}
\newcommand{\noteref}[1]{\hyperref[#1]{note~\ref*{#1}}}
\newcommand{\F}{\mathcal{F}}
\newcommand{\A}{\mathcal{A}}
\newcommand{\local}{\ensuremath{\mathsf{LOCAL}}\xspace}
\newcommand{\congest}{\ensuremath{\mathsf{CONGEST}}\xspace}
\newcommand{\id}{\ensuremath{\mathsf{ID}}}
\newcommand{\supported}{\emph{supported}\xspace}
\newcommand{\size}[1]{\left| #1 \right|}
\newcommand{\updated}{\dot{E}}
\newcommand{\cc}{\ensuremath{\mathsf{CC}}\xspace}
\newcommand{\rcc}{\ensuremath{\mathsf{RCC}}\xspace}
\DeclareMathOperator{\poly}{poly}
\DeclareMathOperator{\outdeg}{outdeg}
\newenvironment{myabstract}
{\list{}{\listparindent 1.5em%
		\itemindent    \listparindent
		\leftmargin    1cm
		\rightmargin   1cm
		\parsep        0pt}%
	\item\relax}
{\endlist}
\newenvironment{mycover}
{\list{}{\listparindent 0pt
		\itemindent    \listparindent
		\leftmargin    1cm
		\rightmargin   1cm
		\parsep        0pt}%
	\raggedright
	\item\relax}
{\endlist}
\newcommand{\myemail}[1]{\,$\cdot$\, {\small #1}}
\newcommand{\myaff}[1]{\,$\cdot$\, {\small #1}\par\medskip}
\begin{document}

\begin{mycover}
	{\huge\bfseries\boldmath Input-Dynamic Distributed Algorithms for Communication Networks \par}
	\bigskip
	\bigskip
    \textbf{Klaus-Tycho Foerster}
    \myemail{klaus-tycho.foerster@univie.ac.at}
    \myaff{University of Vienna}
    
    \textbf{Janne H.\ Korhonen}
    \myemail{janne.korhonen@ist.ac.at}
    \myaff{IST Austria}

    \textbf{Ami Paz}
    \myemail{ami.paz@univie.ac.at}
    \myaff{University of Vienna}
    
    \textbf{Joel Rybicki}
    \myemail{joel.rybicki@ist.ac.at}
    \myaff{IST Austria}
    
    \textbf{Stefan Schmid}
    \myemail{stefan\_schmid@univie.ac.at}
    \myaff{University of Vienna}
\end{mycover}

\medskip
\begin{myabstract}
  \noindent\textbf{Abstract.}
Consider a distributed task where the \emph{communication network} is fixed but the \emph{local inputs} given to the nodes of the distributed system may change over time. In this work, we explore the following question: if some of the local inputs change, can an existing solution be updated efficiently, in a dynamic and distributed manner?
	
To address this question, we define the \emph{batch dynamic} \congest model in which we are given a bandwidth-limited communication network and a \emph{dynamic} edge labelling defines the problem input.  The task is to maintain a solution to a graph problem on the labeled graph under \emph{batch changes}. We investigate, when a batch of $\alpha$ edge label changes arrive,
\begin{itemize}[nosep]
	\item how much time as a function of $\alpha$ we need to update an existing solution, and
	\item how much information the nodes have to keep in local memory between batches in order to update the solution quickly.
\end{itemize}
Our work lays the foundations for the theory of input-dynamic distributed network algorithms.  We give a general picture of the complexity landscape in this model, design both universal algorithms and algorithms for concrete problems, and present a general framework for lower bounds. In particular, we derive non-trivial upper bounds for two selected, contrasting problems: maintaining a minimum spanning tree and detecting cliques. 
\bigskip
\end{myabstract}



\thispagestyle{empty}
\setcounter{page}{0}
\newpage


\section{Introduction}

There is an ongoing effort in the networking community to render networks more adaptive and ``self-driving'' by automating network management and configuration tasks~\cite{feamster2017and}. To this end, it is essential to design network protocols that solve various tasks, such as routing and traffic engineering, adaptively and \emph{fast}.

Especially in large and long-lived networks, change is inevitable: 
new demand may increase congestion, 
network properties such as link weights may be updated,
nodes may join and leave, and new links may appear or existing links fail.
As distributed systems often need to maintain data structures and other information related to the operation of the network, it is important to update these structures efficiently and reliably upon changes.
Naturally the naive approach of always recomputing everything from scratch after a change occurs might be far from optimal and inefficient. Rather, it is desirable that if there are only \emph{few} changes, the existing solution could be efficiently utilised for computing a new solution.

However, developing robust, general techniques for \emph{dynamic} distributed graph algorithms --- that is, algorithms that reuse and exploit the existence of previous solutions --- is challenging~\cite{awerbuch2008optimal,KonigW13}: even small changes in the communication topology may force communication and updates over long distances or interfere with ongoing updates.
Nevertheless, a large body of prior work has focused on how to operate in dynamic environments where the underlying communication network changes: temporally dynamic graphs~\cite{casteigts2012time} model systems where the communication structure is changing over time; distributed dynamic graph algorithms consider solving individual graph problems when the graph representing the communication networks is changed by addition and removal of nodes and edges~\cite{awerbuch2008optimal,KonigW13,Censor-HillelHK16,GuptaK18,DuZ2018,barenboim2018locally,bamberger2019local,CensorHillelDKPS20}; and self-stabilisation considers recovery from arbitrary transient faults that may corrupt an existing solution~\cite{dijkstra1974self-stabilization,dolev00self-stabilization}.

\subsection{Input-dynamic distributed algorithms}

In contrast to most prior work which focuses on efficiently maintaining solutions in distributed systems where the underlying communication network itself may abruptly change, we instead investigate how to deal with \emph{dynamic inputs} without changes in the topology: we assume that the local inputs (e.g.\ edge weights) of the nodes may change, but the underlying communication network remains static and reliable. We initiate the study of input-dynamic distributed graph algorithms, with the goal of laying the groundwork for a comprehensive theory of this setting. Indeed, we will see that this move from dynamic topology towards a setting more closely resembling centralised dynamic graph algorithms~\cite{Henzinger18}, where input changes and the computational model are similarly  decoupled from each other, enables a development of a general theory of input-dynamic distributed algorithms.

\subsection{Motivation: towards dynamic network management} 

While the input-dynamic distributed setting is of theoretical interest, it is largely motivated by practical questions arising in network management and optimisation.
In wired communication networks the communication topology is typically relatively static (e.g.\ the layout and connections of the physical network equipment), but the input is highly dynamic. For example, network operators perform link weight updates for dynamic traffic engineering \cite{fortz2000internet} or to adjust
link layer forwarding in local area networks~\cite{DBLP:conf/sigcomm/Perlman85,sdn-stp},
content distribution providers dynamically optimise cache assignments~\cite{frank2013pushing},
and traffic patterns naturally evolve over time~\cite{frank2013pushing,sigmetrics20complexity}.
In all these cases, the underlying topology of the communication network remains fixed, but only some input parameters change.

Formally, the above network tasks can often be modelled as algorithmic graph problems, where
the input, in the form of edge weights or communication demands, changes over time, while the network topology remains fixed or changes infrequently.
In the light of the current efforts to render networks more autonomous and adaptive~\cite{DBLP:conf/sigcomm/2018selfdn,juniper-driving,hu-driving}, it is interesting to
understand the power and limitations of such dynamic
distributed optimisations, also compared to a model where the communication topology frequently changes.

To elucidate the connection between fundamental graph problems and basic network management tasks, we discuss some motivating examples.

\paragraph{Example 1: Link layer spanning trees.}
In the context of link layer networking, 
a typical task is to maintain a spanning tree on the network with e.g.\ the (rapid) Spanning Tree Protocol (STP)~\cite{DBLP:conf/sigcomm/Perlman85}. A standard approach to compute spanning trees is that 
the nodes first elect a root (leader) and then pick their parent in the tree according to the shortest distance to the root.
However, when link weights change, the leader is tasked with broadcasting such changes, and can hence take a long time to converge to a new (minimum) spanning tree.
In the case of centralised solutions, such as software-defined networking (SDN)~\cite{sdn-stp}, we run into the same conceptual issues, namely that the changes need to be gathered at a (logically) centralised location, and pushed out to the~network.

From the theoretical perspective,
there are well-established lower bounds for computing a minimum spanning tree \emph{from scratch} in communication-bounded networks: in the classic \congest{} model, the problem requires $\Omega(\sqrt{n} + D)$ rounds, where $n$ is the number of nodes and $D$ is the diameter of the network~\cite{dassarma12}. However, it is not a priori clear how efficiently \emph{maintaining} an MST under input changes can be done in a distributed manner. As one of our results, we show that (1) it is possible to replace the dependency on $\sqrt{n}$ to be linear in the number edge weight changes while maintaining a small local memory footprint \emph{and} (2) there is a matching lower bound showing that this is essentially the best input-dynamic algorithm one can hope for.

\paragraph{Example 2: Traffic engineering and shortest paths routing.}
The reliable and efficient data delivery in ISP networks typically
relies on a clever traffic engineering algorithm~\cite{awduche2002overview}.
Adaptive traffic engineering is usually performed using dynamic
link weight changes, steering the traffic along certain 
(approximately) shortest paths in the network.
The re-computation of routes can be quite time consuming
and exhibit longer convergence times, including the classic  
all-pairs shortest paths (APSP) algorithms 
based on as distance-vector and link-state protocols~\cite{PetersonD11}.
There are theoretical limits on how fast exact and approximate all-pairs shortest paths 
can be computed from scratch in a distributed manner, even when the networks are sparse~\cite{abboud2016near}. 

In this paper we will show that while similar limitations also extend to the case of input-dynamic algorithms, 
there is a simple universal, near-optimal dynamic algorithm for these types of tasks.

\paragraph{Example 3: Detecting network substructures.}
In many networking applications, it is desirable to detect whether the network contains certain substructures and locate them efficiently. For example, the task of cycle detection is a special case of the task of \emph{loop detection}, i.e., detecting whether a routing scheme contains a directed cycle. On the other hand, clique detection can be used for maintaining and assigning e.g.\ failover nodes and links. 

Once again, we can formally characterise the complexity of such tasks, establishing both fast algorithms and proving non-trivial lower bounds 
in the case of input-dynamic network algorithms, 
for several subgraph detection problems.

\paragraph{Outlook: Towards scalable and efficient network management.}
Above we gave three examples how our work can benefit
the study of
standard network management tasks.
In general, we believe that taking the viewpoint of input-dynamic algorithms
has the potential to lead to a paradigm shift in the design of 
efficient and scalable network management protocols.
Current state-of-the-art approaches to dealing with network management
essentially come in two flavours: distributed control with recomputations
from scratch and centralised control (e.g, SDNs). While the former has the 
drawbacks discussed above, limiting the granularity at which optimisations 
can be performed, the latter can entail scalability issues.
For example, the indirection via a controller, even if it is only logically centralised
but physically distributed, can result in delays: in terms of reaction 
and computation time at the controller and in terms of the required synchronisation
in case of multiple controllers (e.g., to keep states consistent)~\cite{DBLP:conf/sds/MichelK17}.

In contrast to the two prior approaches, the paradigm of input-dynamic distributed algorithms aims to realise the best of both worlds: 
design algorithms that 
benefit from previously collected state in distributed protocols,
rapidly generate outputs based on existing solutions, and update the auxiliary data structures for the next set of changes.
%

\subsection{Batch dynamic \textsf{CONGEST} model}

\begin{figure}
  \begin{center}
    \includegraphics[page=5,scale=0.7]{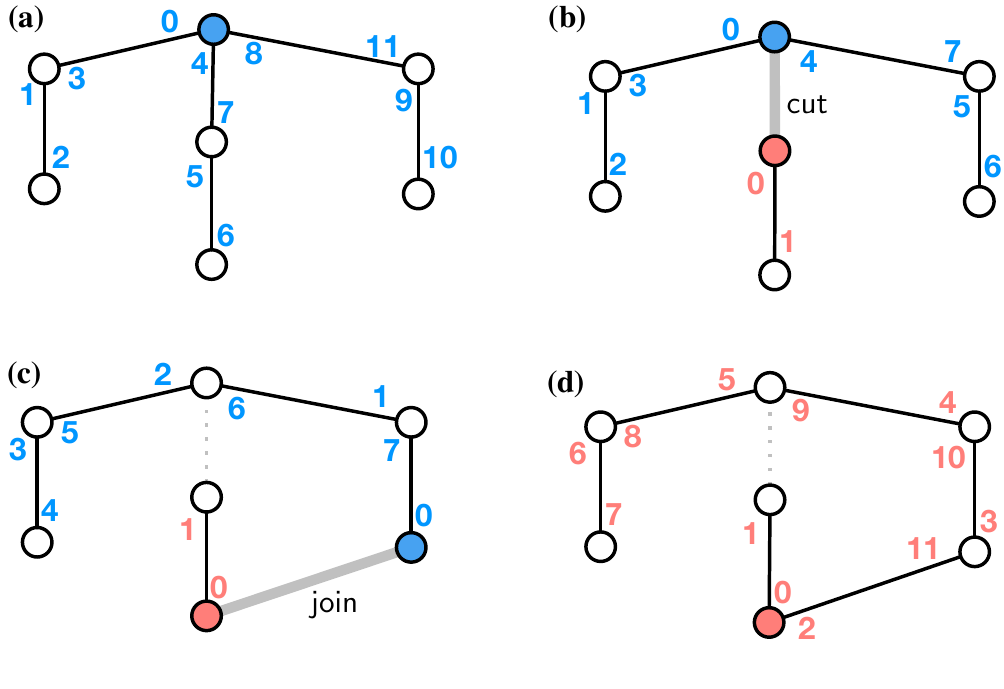}
    \end{center}
  \caption{Examples of input-dynamic minimum-weight spanning tree.  (a) The underlying communication graph, with all edges starting with weight $1$. (b) A feasible minimum-weight spanning tree. (c) A batch of two edge weight increments. (d) Solution to the new input labelling. (e) A new batch of three changes: two decrements and one increment. (f) An updated solution. \label{fig:intro}}
\end{figure}

Our aim is hence to develop a rigorous theoretical framework for reasoning about distributed input-dynamic algorithms. While there are standard models for \emph{non-dynamic} distributed computations~\cite{Peleg00}, there is no established model for \emph{input-dynamic} graph algorithms so far (in \sectionref{sec:related} we overview prior approaches in modelling other aspects of dynamic networks).

To remedy this, we introduce the \emph{batch dynamic} \congest model, which allows us to formally develop a theory of input-dynamic graph algorithms.
In brief, the model is a dynamic variant of the standard \congest model of distributed computation with the following characteristics:
\begin{enumerate}
\item The communication network is represented by a static graph $G = (V,E)$ on $|V|=n$ nodes. The nodes can communicate with each other over the edges, with $O(\log n)$ bandwidth per round.
(This is the standard \congest model~\cite{Peleg00}.)

    \item The input is given by a \emph{dynamic} edge labelling of $G$.
    The input labelling may change and once this happens nodes need to compute a new feasible solution for the new input labelling. The labelling can denote, e.g., edge weights or mark a subgraph of~$G$. 
    We assume that the labels can be encoded using $O(\log n)$ bits so that communicating a label takes a single round.
    \item The goal is to design a distributed algorithm which maintains a solution to a given graph problem on the labelled graph under \emph{batch changes}: up to $\alpha$ labels can change simultaneously, and the nodes should react to these changes.
    The nodes may maintain a \emph{local auxiliary state} to store, e.g., the current output and auxiliary data structures, in order to facilitate efficient updates upon subsequent changes.
\end{enumerate}
\figureref{fig:intro} gives an example of an input-dynamic problem: maintaining a minimum-weight spanning tree.
We define the model in more detail in Section~\ref{sec:model}.

\paragraph{Model discussion.}
As discussed earlier, the underlying motivation for our work is to study how changes to the input can be efficiently handled, while suppressing interferences arising from changing communication topology. A natural starting point for studying communication-efficient solutions for graph problems in networks is the standard \congest model~\cite{Peleg00}, which we extend to model the input-dynamic setting.

Assuming that the communication topology remains static allows us to adopt the basic viewpoint of centralised dynamic algorithms, where an algorithm can fully process changes to input before the arrival of new changes. While this may initially seem restrictive, our algorithms can in fact also tolerate changes arriving during an update: we can simply delay the processing of such changes, and fix them the next time the algorithm starts.
Indeed, this parallels centralised dynamic algorithms, where the processing of changes is not disrupted by a newer change arriving.

While this model has not explicitly been considered in the prior work, we note that input-dynamic distributed algorithms of similar flavour have been studied before in limited manner. In particular, Peleg~\cite{peleg1998distributed} gives an elegant minimum spanning tree update algorithm that, in our language, is a batch dynamic algorithm for minimum spanning tree. Very recently, a series of papers has investigated batch dynamic versions of MPC and $k$-machine models, mainly focusing on the minimum spanning tree problem~\cite{DhulipalaDKPSS20,NowickiO20,GilbertL20}. However, the MPC and $k$-machine models assume a fully-connected communication topology, i.e., every pair of nodes share a direct communication link, making them less suitable for modelling large-scale networks.

Finally, we note that in practice, the communication topology rarely remains static throughout the entire lifetime of a network. However, if the changes in the communication topology are infrequent enough compared to the changes in the inputs, then recomputing a new auxiliary state from scratch, whenever the underlying communication network changes, will have a small cost in the amortised sense. Moreover, any \emph{lower bounds} for the batch dynamic model also hold in the case of networks with changing communication topology.

\section{Contributions}

 \begin{table*}
   \centering
 \caption{Upper and lower bounds for select problems in batch dynamic \congest. Upper bounds marked with $\dagger$ follow from the universal algorithms. The lower bounds apply in a regime where $\alpha$ is sufficiently small compared to $n$, with the threshold usually corresponding to the point where the lower bound matches the complexity of computing a solution from scratch in \congest; see \sectionref{sec:lower} for details. All upper bounds are deterministic. The lower bounds hold for both deterministic and randomised algorithms. \label{tab:results}}
 \vspace{0.5em}

 \begin{tabular*}{0.95\linewidth}{@{}l@{\extracolsep{\fill}}l@{}l@{}l@{}l@{}}
 \toprule
 & \multicolumn{2}{@{}c}{Upper bound} & \multicolumn{1}{@{}c}{Lower bound} &   \\
 \cmidrule{2-3} \cmidrule{4-4}
 Problem & Time & Space & Time & Ref. \\
 \midrule
 any problem & $O(\alpha + D)$ & $O(m \log n)$ & --- & \S\ref{sec:upperbounds} \\
 any $\mathsf{LOCAL}(1)$ problem & $O(\alpha)$ & $O(m \log n)$ & --- & \S\ref{sec:upperbounds} \\
 \midrule
 minimum spanning tree & $O(\alpha + D)$ & $O(\log n)$ & $\Omega(\alpha/\log^2 \alpha + D)$ & \S\ref{sec:MWST}, \S\ref{subsec:lbmst} \\
 $k$-clique & $O(\alpha^{1/2})$ & $O(m \log n)$ & $\Omega(\alpha^{1/4}/\log \alpha)$ & \S\ref{sec:batch-clique}, \S\ref{subsec:lbinst} \\
 \midrule
 $4$-cycle & $O(\alpha)^\dagger$ &$O(m \log n)^\dagger$ & $\Omega(\alpha^{2/3}/\log \alpha)$ & \S\ref{subsec:lbinst} \\
 $k$-cycle, $k \ge 5$ & $O(\alpha)^\dagger$ & $O(m \log n)^\dagger$& $\Omega(\alpha^{1/2}/\log \alpha)$ & \S\ref{subsec:lbinst} \\
 diameter, $(3/2 - \varepsilon)$-apx. & $O(\alpha + D)^\dagger$ & $O(m \log n)^\dagger$ & $\Omega(\alpha/\log^2 \alpha + D)$ & \S\ref{subsec:lbinst} \\
 APSP, $(3/2 - \varepsilon)$-apx. & $O(\alpha + D)^\dagger$ & $O(m \log n)^\dagger$ & $\Omega(\alpha/\log^2 \alpha + D)$ & \S\ref{subsec:lbinst} \\
 \bottomrule
 \end{tabular*}
 \end{table*}

In this work, we focus on the following fundamental questions.
When a batch of $\alpha$ edge label changes arrive,
and the communication graph has diameter $D$, 
\begin{enumerate}[label=(\alph*)]
\item how much time does it take to update an existing solution, as a function of $\alpha$ and $D$, and
\item how much information does a node need to keep in its local memory between batches, in order to achieve optimal running time?
\end{enumerate}
With these questions, we lay the foundations for the theory of input-dynamic distributed graph algorithms. We draw a general picture of the complexity landscape in the batch dynamic \congest model as summarised in \tableref{tab:results}. Our main results are as follows.

\subsection{Universal upper bounds} As an almost trivial baseline, we observe that \emph{any} graph problem can be solved in $O(\alpha + D)$ rounds. Moreover, any graph problem where the output of a node depends only on the constant-radius neighbourhood of the node -- that is, a problem solvable in $O(1)$ rounds in the \textsf{LOCAL} model\footnote{The \textsf{LOCAL} model is similar to the \congest model, but without the $O(\log n)$ limitation on the message sizes~\cite{Peleg00}.} -- can be solved in $O(\alpha)$ rounds. However, these universal algorithms come at a large cost in space complexity: storing the auxiliary state between batches may require up to $O(m \log n)$ bits, where $m$ is the number of edges --- in the input graph if the input marks a subgraph, and in the communication graph if the input represents edge weights. (\sectionref{sec:upperbounds}.)

\subsection{Intermediate complexity: clique enumeration} We give an algorithm for enumerating $k$-cliques in $O(\alpha^{1/2})$ rounds, beating the universal upper bound for local problems, and showing that there exist non-trivial problems that can be solved in $o(\alpha)$ rounds. To complement this result, we show that dynamic clique detection requires $\Omega(\alpha^{1/4})$ rounds. This is an example of a natural problem with time complexity that is neither constant nor $\Theta(\alpha)$. (\sectionref{sec:batch-clique}.)

\subsection{Saving space: minimum-weight spanning trees} We show that a minimum-weight spanning tree can be maintained in $O(\alpha + D)$ rounds using only $O(\log n)$ bits per node for storing the auxiliary state; this exponentially improves the storage requirements of a previous distributed dynamic algorithm of Peleg~\cite{peleg1998distributed}, which uses $O(n \log n)$ bits of memory per node. In addition, we show that our result is tight, in terms of update time, up to $\poly\log\alpha$: for any $\alpha \le n^{1/2}$, maintaining a minimum-weight spanning tree requires $\Omega(\alpha / \log^2 \alpha + D)$ rounds. (\sectionref{sec:MWST}.)

\subsection{A general framework for lower bounds} We develop a framework for lifting \congest lower bounds into the batch dynamic \congest model, providing a vast array of non-trivial lower bounds for input-dynamic problems. These include lower bounds for classic graph problems, such as cycle detection, clique detection, computing the diameter, approximating all-pairs shortest paths, and computing minimum spanning trees.
The lower bounds hold for both deterministic and randomised algorithms.
(\sectionref{sec:lower}.)

\subsection{Dynamic congested clique} We explore the dynamic variant of the \emph{congested clique} model, which arises as a natural special case of the batch dynamic \congest. We show that triangle counting can be solved in $O( (\alpha/n)^{1/3} + 1)$ rounds in this model using $O(n \log n)$ bits of auxiliary state by applying a \emph{dynamic matrix multiplication} algorithm. To contrast this, we show that any problem can be solved in $O(\lceil \alpha / n \rceil)$ rounds using $O(m \log n)$ bits of auxiliary state. (\sectionref{sec:dcm}.)

\subsection{Summary and open questions} As a key takeaway, we have established that the possible time complexities in batch dynamic \congest range from constant to linear-in-$\alpha$, and that there are truly intermediate problems in between. However, plenty of questions remain unanswered; we highlight the following objectives as particularly promising future directions:
\begin{itemize}
    \item \emph{Upper bounds}: Develop new algorithmic techniques for batch dynamic \congest.
    \item \emph{Understanding space}: Develop lower bound techniques for space complexity. In particular, are there problems that exhibit \emph{time-space tradeoffs}, i.e. problems where optimal time and space bounds cannot be achieved at the same time?
    \item \emph{Symmetry-breaking problems}: Understand how problems with subpolynomial complexity in \congest -- in particular, symmetry-breaking problems such as maximal independent set -- behave in the batch dynamic \congest model.
\end{itemize}

\subsection{Technical overview and methodological advancements}

The main \emph{conceptual} contribution of our work is the introduction of the batch dynamic model, which allows the development of a robust complexity theory of input-dynamic distributed algorithms.
A particularly attractive feature of our model is that we can easily leverage standard machinery developed for non-dynamic \congest model in the input-dynamic setting. This for example immediately yields the baseline results given in \sectionref{sec:upperbounds} and the fast triangle counting algorithms for batch dynamic congested clique in \sectionref{sec:dcm}.

However, to obtain \emph{efficient} input-dynamic algorithms, it is necessary to develop new algorithmic and analysis techniques. As our main technical contributions, we analyse two different algorithmic problems, clique enumeration (a local problem) and maintaining minimum-spanning trees (a global problem), and devise a general framework for proving lower bounds for input-dynamic distributed algorithms. 

\paragraph{Clique enumeration.} The clique enumeration problem is a local problem: the nodes need to decide whether their local neighbourhood contains a clique of a certain size. In the dynamic setting, the main challenge is to deal with the fact that nodes do not know the number $\alpha$ of changes in advance, but the running time should be bounded in terms of $\alpha$. When dealing with a global problem that requires $\Omega(D)$ rounds in networks with diameter $D$, we can simply broadcast the number of changes in the network using standard broadcasting techniques.
However, as clique enumeration is a local problem, we wish to obtain running times independent of the diameter of the communication network.

To this end, 
we observe that the subgraph defined by the changed input edges has an useful graph theoretical property, namely, it has bounded degeneracy. This allows us to distributively compute the Nash--Williams decomposition between after a batch of updates, which can be efficiently used to route information about the local changes to input while avoiding congestion.
This resembles to approach taken by e.g.\ Korhonen and Rybicki~\cite{korhonen2017deterministic}, who use a distributed version of the Nash--Williams decomposition by Barenboim and Elkin~\cite{barenboim2010sublogarithmic}, to detect cycles in bounded degeneracy graphs. The main difference to this work is that here we show how to use this approach in the batch dynamic model and we show how to interleave the computation of this decomposition and clique enumeration in a way where nodes only locally have to determine when to halt, without knowing the total number of changes in advance.

\paragraph{Minimum-weight spanning trees.}
Recent work almost exclusively has focused on maintaining minimum-weight spanning trees in fully-connected communication topologies. The main challenge in our work is that we consider general communication topologies, where nodes may need to communicate via large distances. In order to achieve small space complexity, we use a distributed implementation of the standard Eulerian tour tree data structure, which can be used to recover the minimum-weight spanning tree as long as we can maintain the said data structure.

Recently, Eulerian tour trees have also been used in \emph{fully-connected} dynamic models~\cite{italiano2019dynamic,DhulipalaDKPSS20,GilbertL20}, where direct communication is possible between any pair of nodes. In our model, the analysis is complicated by the fact, that communication has to be done over the network e.g.~via broadcast trees -- to avoid congestion, the changes to the input need to be broadcast in a pipelined fashion. The key observation is that the steps required for the Eulerian tour tree update can be formulated as maximum matroid basis problems, which allows us to use the elegant distributed maximum matroid basis algorithm Peleg~\cite{peleg1998distributed} to efficiently compute the required changes to this structure.

\paragraph{Lower bound framework.}
The main technical challenge here is to extend the notion of lower bound family lower bounds into the batch dynamic setting. While the relevant parameter in \congest is the size $n$ of the network, in the batch dynamic model, the time complexity is (mainly) parameterised in terms of the number $\alpha$ of changes. For this reparameterisation, we introduce the notion of \emph{extension properties} and a padding technique that allow us to embed a hard input graph into a larger communication graph.


\section{Related work}\label{sec:related}

As the dynamic aspects of distributed systems have been investigated from numerous different perspectives, giving a comprehensive survey of all prior work is outside the scope of the current work.
Thus, we settle on highlighting the key differences and similarities between the questions studied in related areas and our work.

\subsection{Centralised dynamic graph algorithms}
Before proceeding to the distributed setting, it is worth noting that dynamic graph algorithms in the \emph{centralised setting} have been a major area of research for several years~\cite{Henzinger18}. This area focuses on designing data structures that admit efficient update operations (e.g.\ node/edge additions and removals) and queries on the graph.

Early work in the area investigated how connectivity properties, e.g., connected components and minimum spanning trees, can be maintained~\cite{HenzingerK99,HolmLT01}. Later work has investigated efficient techniques for maintaining other graph structures, such as spanners~\cite{BernsteinFH19}, emulators~\cite{HenzingerKN16}, matchings~\cite{NeimanS16}, maximal independent sets~\cite{assadi2018sublinear,assadi2019sublinear-in-n}; approximate vertex covers, electrical flows and shortest paths~\cite{BhattacharyaHI18,GoranciHP17,DurfeeGGP19}. Recently, conditional hardness results have been established in the centralised setting~\cite{AbboudW14,AnconaHRWW19,HenzingerKNS15}.

Similarly to our work, the input in the centralised setting is dynamic: there is a stream of update operations on the graph and the task is to efficiently provide solutions to graph problems. Naturally, the key distinction is that changes in the centralised setting arrive sequentially and are handled by a single machine. Moreover, in the distributed setting, we can provide unconditional lower bounds for various input-dynamic graph problems, as our proofs rely on communication complexity arguments.

\subsection{Distributed algorithms in changing communication networks}
The challenges posed by dynamic communication networks --- that is, networks where communication links and nodes may appear or be removed --- have been a subject of  ongoing research for decades. Classic works have explored the connection between synchronous static protocols and \emph{fault-prone} asynchronous computation under dynamic changes to communication topology~\cite{awerbuch1992adapting}. Later, it was investigated how to maintain or recompute local~\cite{parter2016local} and global~\cite{elkin2007spanners} graph structures when communication links may appear and disappear or crash. A recent line of work has investigated how to efficiently fix solutions to graph problems under various distributed settings~\cite{KonigW13,Censor-HillelHK16,GuptaK18,DuZ2018,assadi2018sublinear,assadi2019sublinear-in-n,bamberger2019local,CensorHillelDKPS20,DBLP:conf/icdcn/FoersterRSW17,DBLP:journals/tcs/FoersterLSW18,DBLP:conf/nca/Foerster019}. Another line of research has focused on time-varying communication networks which come with temporal guarantees, e.g., that every $T$ consecutive communication graphs share a spanning tree \cite{casteigts2012time,kuhn2010distributed,DBLP:conf/dialm/ODellW05}.

In the above settings, the input graph and the communication network are the same, i.e., the inputs and communication topology are typically coupled. However, there are exceptions to this, as discussed next.

\subsection{Input-dynamic parallel and distributed algorithms}
Several instance of distributed dynamic algorithms can be seen as examples of the input-dynamic approach.
Italiano~\cite{Italiano91} and later Cicerone et al.~\cite{cicerone2003fully}, considered the problem of maintaining a solution all-pairs shortest paths problem when a single edge weight may change at a time. Peleg~\cite{peleg1998distributed} considered the task of correcting a minimum-weight spanning tree after changes to the edge weights, albeit with a large cost in local storage, as the algorithm stores the entire spanning tree locally at each node.

More recently, there has been an increasing interest in developing dynamic graph algorithms for classic parallel models~\cite{acar2011parallelism,simsiri2016work,acar2017parallel,acar2019parallel,tseng2019batch} and massively parallel large-scale systems~\cite{italiano2019dynamic,DhulipalaDKPSS20,NowickiO20,GilbertL20}. In the former, communication is via shared memory, whereas in the latter the communication is via message-passing in a fixed, fully-connected network, but the input is distributed among the nodes and the communication bandwidth (or local storage) of the nodes is limited. Thus, the key difference is that in these parallel models, the communication topology always forms a fully-connected graph, whereas in the batch dynamic \congest considered in our work, the communication topology can be arbitrary, and thus, communication also incurs a distance cost. However, we note that the dynamic congested clique model we study in Section~\ref{sec:dcm} falls under this category.

\subsection{Self-stabilisation} The area of self-stabilisation \cite{dijkstra1974self-stabilization,dolev00self-stabilization} considers robust algorithms that \emph{eventually} recover from \emph{arbitrary} transient failures that may corrupt the state of the system. Thus, unlike in our setting where the auxiliary state and communication network are assumed to be reliable, the key challenge in self-stabilisation is coping with possibly adversarial corruption of local memory and inconsistent local states, instead of changing inputs.

\subsection{Supported models}
Similar in spirit to our model is the \supported \congest model, a variant of the \congest model designed for software-defined networks~\cite{schmid2013exploiting}. In this model, the communication graph is known to all nodes and the task is to solve a graph problem on a given \emph{subgraph}, whose edges are given to the nodes as inputs. The idea is that the knowledge of the communication graph may allow for \emph{preprocessing}, which may potentially offer speedup for computing solutions in the subgraph. However, unlike the batch dynamic \congest model, the supported \congest model focuses on \emph{one-shot} computation. 
Korhonen and Rybicki~\cite{korhonen2017deterministic} studied the complexity of subgraph detection problems in supported \congest. Later, somewhat surprisingly, Foerster et al.~\cite{foerster2019preprocessing} showed that in many cases knowing the communication graph does not help to circumvent \congest lower bounds. Lower bounds were also studied in the supported \local model, for maximum independent set approximation~\cite{supported-local}.


\section{Batch dynamic \textsf{CONGEST} model}\label{sec:model}

In this section, we formally define the the batch dynamic \congest model.

\subsection{Communication graph and computation} The communication graph is an undirected, connected graph $G = (V,E)$ with $n$ nodes and $m$ edges. We use the short-hands $E(G) = E$ and $V(G) = V$.
Each node has a unique identifier of size $O(\log n)$ bits. In all cases, $n$ and $m$ denote the number of vertices and edges, respectively, in $G$, and $D$ denotes the diameter of $G$.

All computation is performed using the graph $G$ for communication, as in the case of the standard \congest model~\cite{Peleg00}: in a single synchronous round, all nodes in lockstep
\begin{enumerate}
\item send messages to their neighbours,
\item wait for messages to arrive, and
\item update their local states.
\end{enumerate}
We assume $O(\log n)$ bandwidth per edge per synchronous communication round. To simplify presentation, we assume that any $O(\log n)$-bit message can be sent in one communication round. Clearly, this only affects constant factors in the running times of the algorithms we obtain.

\subsection{Graph problems} A \emph{graph problem} $\Pi$ is given by sets of input labels $\Sigma$ and output labels $\Gamma$. For each graph $G = (V,E)$, unique ID assignment $\id \colon V \to \{1, \ldots, \poly(n) \}$ for $V$ and input labelling of edges $\ell \colon E \to \Sigma$, the problem $\Pi$ defines a set $\Pi(G,\ell)$ of valid output labellings of form $f \colon V \to \Gamma$. We assume that input labels can be encoded using $O(\log n)$ bits, and that the set $\Pi(G,\ell)$ is finite and computable.
We focus on the following problem categories:
\begin{itemize}
    \item \emph{Subgraph problems:} The input label set is $\Sigma = \{ 0, 1 \}$, and we interpret a labelling as defining a subgraph $H = (V, \{ e \in E \colon \ell(e) =1 \})$. Note that in this case, the diameter of the input graph $H$ can be much larger than the diameter $D$ of the communication graph $G$, but we still want the running times of our algorithms to only depend on $D$.
    \item \emph{Weighted graph problems:} The input label set is $\Sigma = \{ 0, 1, 2, \dotsc, n^C \}$ for a constant $C$, i.e., the labelling defines weights on edges. We can also allow negative weights of absolute value at most $n^C$, or allow some weights to be infinite, denoted by~$\infty$.
\end{itemize}

\subsection{Batch dynamic algorithms}

We define batch dynamic algorithms via the following setting: assume we have some specified input labels $\ell_1$ and have computed a solution for input $\ell_1$. We then change $\alpha$ edge labels on the graph to obtain new inputs $\ell_2$, and want to compute a solution for $\ell_2$. In addition to seeing the local input labellings, each node can store auxiliary information about the previous labelling $\ell_1$ and use it in the computation of the new solution.

More precisely, let $\Pi$ be a problem. Let $\Lambda$ be a set of local auxiliary states; we say that a (global) auxiliary state is a function $x \colon V \to \Lambda$. A \emph{batch dynamic algorithm} is a pair $(\xi, \mathcal{A})$ defined by a set of valid auxiliary states $\xi(G,\ell)$ and a \congest algorithm $\A$ that satisfy the following~conditions:
\begin{itemize}
    \item For any $G$ and $\ell$, the set $\xi(G,\ell)$ is finite and computable. In particular, this implies that there is a (centralised) algorithm that computes some $x \in \xi(G,\ell)$ from $G$ and~$\ell$.
    \item There is a computable function $s \colon \Lambda \to \Gamma$ such that for any $x \in \xi(G,\ell)$, outputting $s(x(v))$ at each node $v \in V$ gives a valid output labelling, that is, $s \circ x \in \Pi(G,\ell)$.
    \item The algorithm $\A$ is a \congest algorithm such that
    \begin{enumerate}[label=(\alph*)]
        \item all nodes $v$ receive as local input the labels on their incident edges in both an old labelling $\ell_1$ and a new labelling $\ell_2$, as well as their own auxiliary state $x_1(v)$ from $x_1 \in \xi(G,\ell_1)$, 
        \item all nodes $v$ will halt in finite number of steps and upon halting produce a new auxiliary state $x_2(v)$ so that together they satisfy $x_2 \in \xi(G,\ell_2)$.
    \end{enumerate}
    Note that we do not require all nodes to halt at the same time. We assume that halted nodes have to announce halting to their neighbours, and will not send or receive any messages after halting. 
\end{itemize}
We define the running time of $\A$ as the maximum number of rounds for all nodes to halt; we use the number of label changes between $\ell_1$ and $\ell_2$ as a parameter and denote this by $\alpha$. The (per node) space complexity of the algorithm is the maximum number of bits needed to encode any auxiliary state $x(v)$ over $x \in \xi(G,\ell)$.

While all of our algorithms are deterministic, one can also consider randomised batch dynamic algorithms. Here one can adopt different correctness and complexity measures. The most common one in centralised and parallel dynamic algorithms (e.g.~\cite{acar2019parallel,tseng2019batch,HenzingerK99,BernsteinFH19}) is to consider \emph{Las Vegas} algorithms. In our setting, this means requiring that, upon halting, nodes always output a valid new auxiliary state; the running time can be measured either (a) by the \emph{expected} running time of the algorithm, or (b) by establishing running time bounds that hold with high probability.
Alternatively, one can also consider \emph{Monte Carlo} algorithms, succeeding with high probability within a fixed running time~(e.g.~\cite{rauch1996improved}); however, these have the disadvantage that the algorithm is likely to fail at some point over an arbitrarily long sequence of batches. Our lower bounds hold for all of these variants, as we discuss in Section~\ref{sec:lower}.

\begin{remark}
Allowing nodes to halt at different times is done for technical reasons, as we do not assume that nodes know the number of changes $\alpha$ and thus we cannot guarantee simultaneous halting in general. Note that with additive $O(D)$ round overhead, we can learn $\alpha$ globally.
\end{remark}

\subsection{Notation}  
Finally, we collect some notation used in the remainder of this paper.
For any set of nodes $U \subseteq V$, we write $G[U] = (U, E')$, where $E'= \{ e \in E : e \subseteq U \}$, for the subgraph of $G$ induced by $U$. For any set of edges $F \subseteq E$, we write $G[F] = (V', F)$, where $V' = \bigcup F$.
When clear from the context, we often resort to a slight abuse of notation and treat a set of edges $F \subseteq E$ interchangeably with the subgraph $(V,F)$ of $G$. Moreover, for any $e = \{u,v\}$ we use the shorthand $e \in G$ to denote $e \in E(G)$.
For any $v \in V$, the set of edges incident to $v$ is denoted by $E(v) = \{ \{u, v\} \in E \}$. The neighbourhood of $v$ is $N^+(v) = \bigcup E(v)$.
We define
\[
 \updated = \{ e \in E : \ell_1(e) \neq \ell_2(e) \}
\]
to be the set of at most $\alpha$ edges whose labels were changed during an update.


\section{Universal upper bounds}
\label{sec:upperbounds}

As a warmup, we establish the following easy baseline result showing that \emph{any} problem $\Pi$ has a dynamic batch algorithm that uses $O(\alpha + D)$ time and $O(m \log n)$ bits of auxiliary space per node: each node simply stores previous input labelling $\ell_1$ as the auxiliary state and broadcasts all changes to determine the new labelling~$\ell_2$.
First, we recall some useful primitives that follow from standard techniques~\cite{Peleg00}.
	
\begin{lemma} \label{lemma:primitives}
  In the \congest model:
 \begin{enumerate}[label=(\alph*)]
 \item A rooted spanning tree $T$ of diameter $D$ of the communication graph $G$ can be constructed in $O(D)$~rounds.

 \item Let $M$ be a set of $O(\log n)$-bit messages, each given to a node. Then all nodes can learn $M$ in $O(|M| + D)$ rounds.
\end{enumerate}
\end{lemma}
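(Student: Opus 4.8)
The plan is to reduce both claims to standard BFS-tree constructions plus pipelining, with the only real work being the delay analysis of the pipelined convergecast in part~(b).

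For part~(a), I would first fix a canonical root by leader election: every node repeatedly forwards the smallest identifier it has seen so far, so after $D$ rounds each node knows the global minimum identifier, whose owner $r$ becomes the root. I would then grow a BFS tree from $r$: in the first round $r$ sends a discovery token, and each node, upon first receiving the token, sets the sender (ties broken by smallest identifier) as its parent and forwards the token to all neighbours in the next round. Since the eccentricity of $r$ is at most $D$, every node is reached within $D$ rounds, yielding a rooted spanning tree of depth at most $D$ and hence diameter at most $2D = O(D)$, all in $O(D)$ rounds.

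For part~(b), I would build such a BFS tree $T$ of depth at most $D$, collect all of $M$ at the root by a pipelined convergecast, and then disseminate the collected set by a pipelined downcast. To make messages totally orderable (and to discard duplicates cleanly), I would tag each message with the identifier of the node holding it. In the convergecast, each node keeps a queue of messages it has learned but not yet sent to its parent and, in every round, sends the smallest such message, never re-sending one it has already forwarded upward. The key claim is that the root receives the $i$-th smallest message by round $O(i + D)$: on the tree path from a message's origin to the root, a rank-$i$ message can be held back at each hop only by strictly smaller messages, of which there are at most $i-1$ in total, so it cannot be delayed beyond $O(i+D)$ rounds. Taking $i = |M|$ shows the root holds all of $M$ after $O(|M| + D)$ rounds. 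For the downcast, once the root holds all $|M|$ messages in sorted order it emits one per round down $T$, and each node forwards a received message to its children in the following round; a message emitted in round $t$ reaches depth $d \le D$ by round $t+d$, so the $|M|$-th message reaches every node by round $|M| + D$. Summing the three phases gives the claimed $O(|M| + D)$ bound.

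The main obstacle is making the convergecast delay bound rigorous. I expect to formalize it by an induction on message rank (or a potential argument) establishing that the minimum-first forwarding rule guarantees a rank-$i$ message is blocked by at most $i-1$ higher-priority messages along its root-path. The subtlety is that higher-priority messages can be injected at internal nodes \emph{en route}, so the bookkeeping must track the number of distinct smaller messages crossing each edge rather than naively summing worst-case per-hop delays.
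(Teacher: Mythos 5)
Your proposal is correct and follows essentially the same route as the paper, which gives no explicit proof but defers to standard techniques in \cite{Peleg00}: part~(a) is the usual leader election plus BFS-tree construction, and part~(b) is the classical pipelined upcast/downcast on a depth-$O(D)$ tree with the smallest-first forwarding rule, whose delay bound (rank-$i$ message reaches the root within $O(i+D)$ rounds) is exactly the standard pipelining lemma you sketch and can be made rigorous by the induction on rank you describe.
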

With the above routing primitives, it is straightforward to derive the following universal upper bound.

 \begin{theorem}\label{thm:universal-alg}
  For any problem $\Pi$, there exists a dynamic batch algorithm that uses $O(\alpha + D)$ time and $O(m \log n)$ space.
 \end{theorem}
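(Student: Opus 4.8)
The plan is to realise exactly the strategy sketched just before the statement: let each node store a verbatim copy of the entire input, and then broadcast the batch of changes so that every node can locally recompute a solution. Concretely, I would let the local auxiliary state of a node be an encoding of the full communication graph $G$ together with the complete edge labelling $\ell$; since $G$ has $m$ edges and each label fits in $O(\log n)$ bits, this costs $O(m \log n)$ bits per node, matching the claimed space bound. The set $\xi(G,\ell)$ of valid states is then the (finite, computable) family of states that record $(G,\ell)$ correctly. For the output map $s$, I would have each node run the assumed centralised procedure for $\Pi$ on its stored pair $(G,\ell)$ and select the canonical---say lexicographically smallest---element $f$ of $\Pi(G,\ell)$, outputting $f(v)$. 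This is well defined and consistent across nodes precisely because $\Pi(G,\ell)$ is finite and computable and every node holds an identical copy of $(G,\ell)$, so all nodes agree on the same global labelling $f$, and hence $s \circ x \in \Pi(G,\ell)$.

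It remains to describe the update algorithm $\A$ and bound its round complexity. On arrival of a batch, each endpoint of an edge $e \in \updated$ observes the new label $\ell_2(e)$ on its incident edge; to avoid duplicate announcements I would let the endpoint with the smaller $\id$ be responsible for $e$ and form the $O(\log n)$-bit message consisting of the two endpoint identifiers together with the new label. Let $M$ be the resulting set of messages. Since $\size{\updated} \le \alpha$, we have $\size{M} \le \alpha$, so invoking the broadcasting primitive of \lemmaref{lemma:primitives}(b) lets every node learn $M$ in $O(\size{M} + D) = O(\alpha + D)$ rounds. Each node then applies the announced changes to its stored labelling to obtain $(G,\ell_2)$, recomputes the canonical output as above, and halts with the new auxiliary state $x_2(v)$ encoding $(G,\ell_2)$; by construction $x_2 \in \xi(G,\ell_2)$.

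Putting these together yields running time $O(\alpha + D)$ and per-node space $O(m \log n)$, as required. The argument is essentially bookkeeping, and I do not expect a genuine obstacle; the two points that need a little care are (i) guaranteeing that all nodes emit a mutually consistent solution when $\Pi(G,\ell)$ contains several valid labellings, which the canonical-selection step handles, and (ii) ensuring the broadcast carries only $O(\alpha)$ messages rather than one per incident endpoint, which the smaller-$\id$ convention handles. One could equally well let both endpoints announce each changed edge, at the cost of only a factor-two increase in $\size{M}$, leaving the asymptotics unchanged.
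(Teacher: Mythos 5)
Your proposal is correct and follows essentially the same route as the paper's proof: store the full input labelling as the auxiliary state ($O(m\log n)$ bits), broadcast the $\le \alpha$ changed labels via \lemmaref{lemma:primitives}(b) in $O(\alpha + D)$ rounds, and have each node locally reconstruct $\ell_2$ and recompute a solution. Your two additional points of care---the smaller-$\id$ convention for deduplication and the canonical (lexicographically smallest) choice from $\Pi(G,\ell_2)$ to guarantee a globally consistent output---are details the paper leaves implicit, and they strengthen rather than alter the argument.
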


\begin{proof}
  Define $\xi(G, \ell) = \{ \ell \}$, that is, the only valid auxiliary state is a full description of the input. Define the algorithm $\mathcal{A}$ as follows:
  \begin{enumerate}
  \item Let $ \updated  \subseteq E$ be the set of edges that changed. Define
    \[ M = \{ (u, v, \ell_2(\{u,v\})) : \{u,v\} \in  \updated  \}\,. \]
    The set $M$ encodes the $\alpha$ changes and each message in $M$ can be encoded using $O(\log n)$ bits. By \lemmaref{lemma:primitives}b, all nodes can learn the changes in $O(\alpha + D)$ rounds.
  \item Given $M$, each node $v \in V$ can locally construct $\ell_2$ from $M$ and $\ell_1$. Set $x_2(v) = \ell_2$.
  \item Each node $v \in V$ locally computes a solution $s \in \Pi(G, \ell_2)$ and outputs $s(v)$.
  \end{enumerate}
  The claim follows by observing that the update algorithm $\mathcal{A}$ takes $O(\alpha + D)$ rounds and that $\xi(G,\ell) = \{ \ell \}$ can be encoded using $O(m \log n)$ bits.
\end{proof}

 As a second baseline, we consider problems that are strictly local in the sense that there is a constant~$r$ such that the output of a node $v$ only depends on the radius-$r$ neighbourhood of $v$. Equivalently, this means that the problem belongs to the class of problems solvable in $O(1)$~rounds in the \local model, denoted by $\local(1)$.

 \begin{theorem}\label{thm:universal-alg-local}
  For any $\local(1)$ problem, there exists a dynamic batch algorithm that uses $O(\alpha)$ time and $O(m \log n)$ space.
\end{theorem}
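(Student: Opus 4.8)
The plan is to exploit locality so as to shave off the additive $O(D)$ term from \theoremref{thm:universal-alg}: since a $\local(1)$ problem has some constant locality radius $r$, the output of a node $v$ --- and hence the only part of the auxiliary state that $v$ really needs to keep correct --- depends solely on the radius-$r$ ball $B_r(v)$ around $v$. First I would loosen the definition of the valid auxiliary states accordingly. Instead of demanding (as in \theoremref{thm:universal-alg}) that every node store a globally correct copy of the input, I let $\xi(G,\ell)$ consist of all states $x$ for which $x(v)$ agrees with $\ell$ merely on $B_r(v)$, the static topology of the ball being stored once and never changing. The output map $s$ then runs the \local algorithm on the ball portion of $x(v)$; since this agrees with $\ell$ on $B_r(v)$ and the algorithm is $r$-local, $s \circ x$ is a valid solution, and each state still fits in $O(m \log n)$ bits (a ball may be all of $G$ when $D \le r$, which is precisely why $O(m\log n)$ is the right bound).

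With this relaxation the update becomes purely local. A node $v$ lying at distance more than $r$ from every changed edge has $\ell_1$ and $\ell_2$ agreeing throughout $B_r(v)$, so its old state $x_1(v)$ is already valid for $\ell_2$ and it does nothing. Only nodes within distance $r$ of some edge of $\updated$ need to refresh their stored labels, so the whole task reduces to delivering, for each changed edge $e = \{u,v\}$, the message $(u, v, \ell_2(e))$ to every node within distance $r$ of $e$. I would do this by hop-bounded pipelined flooding: each endpoint of a changed edge injects its message with a time-to-live of $r$, and every node forwards each message it receives (with decremented TTL, dropping duplicates) until the TTL is exhausted. Because there are at most $\alpha$ messages in total and each travels at most $r = O(1)$ hops, at most $\alpha$ distinct messages ever cross any edge in a fixed direction, and the standard pipelining analysis underlying \lemmaref{lemma:primitives}b yields a running time of $O(\alpha + r) = O(\alpha)$.

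The step I expect to be the genuine obstacle is \emph{termination}: unlike in \theoremref{thm:universal-alg}, a node cannot first learn $\alpha$ (or the global set of changes), since doing so would reintroduce the $O(D)$ cost we are trying to remove, yet a node must decide locally when it has seen every change inside its own ball without knowing how many there are. I would handle this by structuring the flood into $r$ super-phases synchronised \emph{per edge} rather than globally: in each super-phase a node transmits all newly learned changes to each neighbour one per round and then an explicit ``done'' marker, and it advances to the next super-phase on a given edge only after receiving that neighbour's marker. This is exactly the staggered, non-simultaneous halting the model permits; it guarantees that a node hears all of a neighbour's phase-$i$ changes before interpreting silence as completion; and --- since each change is forwarded as ``new'' at most once and only $r$ markers are sent per edge --- it keeps the per-edge message count at $O(\alpha + r)$ and the total time at $O(\alpha)$.
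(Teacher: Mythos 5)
Your overall architecture --- store the input restricted to the radius-$r$ ball as the auxiliary state, flood each changed label $r$ hops, and detect termination with per-phase ``done'' markers --- is the same as the paper's. The genuine gap is exactly in the step you flagged as the obstacle: your synchronisation is \emph{per edge}, whereas the paper synchronises \emph{per node} (a node moves from phase $i$ to phase $i+1$ only after \emph{all} of its neighbours have announced finishing phase $i$). This difference is not cosmetic. The correctness of the phase scheme rests on the inductive invariant that when a node completes phase $i$ it has already received every change within distance $i$ of it, so that its phase-$(i+1)$ transmissions are complete. Per-edge advancement destroys this invariant: a node $v$ can race through all $r$ super-phases on a quiet edge $\{v,u\}$, sending done markers on empty queues, while another edge of $v$ is still stuck in super-phase $1$; the changes that must still transit that slow edge, and then cross $\{v,u\}$, arrive only after $u$ has exchanged its last done marker and halted --- and in the model a halted node receives no further messages. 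Your justifying sentence (``a node hears all of a neighbour's phase-$i$ changes before interpreting silence as completion'') is true but insufficient: the neighbour's phase-$i$ set need not yet contain everything the neighbour will eventually have to relay, precisely because the neighbour's own phases on its other edges are not synchronised with this one.

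Concretely, take $r=3$ and the graph $u - v - x$, with $w_1,w_2$ both adjacent to $x$ and the key changed edge $\{w_1,w_2\}$ (both endpoints at distance $3$ from $u$, hence inside $u$'s ball). Give $v$ many changed pendant edges: draining them delays $v$'s done-$1$ marker to $x$ by $K$ rounds, which holds back \emph{all} of $x$'s forwarding, while on edge $\{v,u\}$ node $v$ finishes phase $1$ at round $K{+}1$ and then sends done-$2$ and done-$3$ on empty queues (nothing with positive TTL has arrived), so $u$ halts around round $K{+}3$. Meanwhile, feed $x$'s queue towards $v$ with junk changes injected two hops behind $x$ (pendant changed edges on a node $y$ at distance $2$ from $x$): these reach $x$ with TTL $1$, so they keep edge $\{x,v\}$ busy but arrive at $v$ with TTL $0$ and never enter $v$'s queue for $u$. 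With the key change ordered last in $x$'s queue, it reaches $v$ with TTL $1$ only well after $u$ has halted, so $u$'s ball is permanently stale and its output can be wrong. The fix is the paper's rule: a node assembles its phase-$(i+1)$ message set and advances only after receiving the phase-$i$ done announcement from \emph{every} neighbour. This restores the invariant, still needs no knowledge of $\alpha$, and the running time remains $O(r\alpha) = O(\alpha)$ since each of the $r=O(1)$ phases forwards at most $\alpha$ distinct changes per edge. The rest of your proposal (state definition, validity of $s\circ x$, the $O(m\log n)$ space bound, and the time analysis) matches the paper and is fine.
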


\begin{proof}
Let $r$ be the constant such that the output of a node $v$ only depends on the radius-$r$ neighbourhood of $v$. For each node $v$, the auxiliary state is the full description of the input labelling in radius-$r$ neighbourhood of $v$.  Define the algorithm $\mathcal{A}$ as follows:
\begin{enumerate}
    \item Let $ \updated  \subseteq E$ be the set of edges that changed. Define
    \[ M_{v,1} = \{ (u, v, \ell_2(\{u,v\})) : u \in N^+(v), \{u,v\} \in  \updated  \}\,. \]
    The set $M_{v,1}$ encodes the label changes of edges incident to $v$, and each message in $M_{v,1}$ can be encoded using $O(\log n)$ bits.
    \item For phase $i = 1, 2, \dotsc, r$, node $v$ broadcasts $M_{v,i}$ to all of its neighbours, and then announces it is finished with phase $i$. Let $R_{v,i}$ denote the set of messages node $v$ received in phase $i$. Once all neighbours have announced they are finished with phase $i$, node $v$ sets $M_{v,i+1} = R_{v,i} \setminus \bigcup_{j = 1}^i M_{v,j}$ and moves to phase $i+1$.
    \item Once all neighbours of $v$ are finished with phase $r$, node $v$ can locally reconstruct $\ell_2$ in it's radius-$r$ neighbourhood and set the new local auxiliary state $x_2(v)$.
    \item Node $v$ locally computes output $s(v)$ from $x_2(v)$ and halts.
\end{enumerate}
The claim follows by observing that each set $M_{i,v}$ can have size at most $\alpha$, and a node can be in any of the $r = O(1)$ phases for $O(\alpha)$ rounds. In the worst case, the radius-$r$ neighbourhood of a node is the whole graph, in which case encoding the full input labelling takes  $O(m \log n)$ bits.
\end{proof}


\section{Batch dynamic clique enumeration}\label{sec:batch-clique}

In this section, we show that we can do better than the trivial baseline of $O(\alpha)$ rounds for the fundamental local subgraph problem of enumerating cliques.

We consider a setting where the input is a subgraph of the communication graph, represented by  label for each edge indicating its existence in the subgraph.
We show that for any $k \ge 3$, there is a sublinear-time (in~$\alpha$) batch dynamic algorithm for \emph{enumerating} $k$-cliques. More precisely, we give an algorithm that for each node $v$ maintains the \emph{induced} subgraph of its radius-1 neighbourhood. This algorithm runs in $O(\alpha^{1/2})$ rounds and can be used to maintain, at each node, the list all cliques the node is part of. 

To contrast this upper bound, \sectionref{sec:lower} shows that even the easier problem of \emph{detecting} $k$-cliques requires $\Omega(\alpha^{1/4}/\log \alpha)$ rounds.
While this does not settle the complexity of the problem, it shows that this central problem has non-trivial, \emph{intermediate} complexity: more than constant or $\poly\log\alpha$, but still sublinear~in~$\alpha$.

\subsection{Acyclic orientations}\label{subsec:acyclico}

An orientation of a graph $G = (V,E)$ is a map $\sigma$ that assigns a direction to each edge $\{u,v\} \in E$.
For any $d>0$, we say that $\sigma$ is a $d$-orientation if
\begin{enumerate}[noitemsep]
  \item every node $v \in V$ has at most $d$ outgoing edges,
  \item the orientation $\sigma$ is acyclic.
\end{enumerate}
We use $\outdeg_\sigma(v)$ to denote the number of outgoing edges from $v$ in the orientation $\sigma$.

A graph $G$ has \emph{degeneracy} $d$ (``is $d$-degenerate'') if every non-empty subgraph of $G$ contains a node with degree at most $d$.
It is well-known that a graph $G$ admits a $d$-orientation if and only if $G$ has degeneracy of at most~$d$.
We use the following graph theoretic observation.

\begin{lemma}\label{lemma:degeneracy}
  Let $G$ be a $d$-degenerate graph with $n$~nodes and $m$ edges. Then
  $d \le \sqrt{2m}$ and $m \le nd$.
\end{lemma}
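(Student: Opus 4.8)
The two inequalities are essentially independent, and each reduces to elementary counting once the right structural fact is in place, so the plan is to treat them separately.

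For the bound $m \le nd$, I would exploit the $d$-orientation whose existence is recalled just above the lemma: since $G$ is $d$-degenerate it admits an acyclic orientation $\sigma$ with $\outdeg_\sigma(v) \le d$ for every node $v$. Each edge of $G$ contributes exactly one unit to $\sum_{v \in V}\outdeg_\sigma(v)$, at its tail, so $m = \sum_{v \in V}\outdeg_\sigma(v) \le nd$. (Equivalently, one fixes a degeneracy ordering of the vertices and charges each edge to its later endpoint, which by definition has at most $d$ earlier neighbours; summing over the $n$ vertices gives the same bound.)

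For the bound $d \le \sqrt{2m}$, the key point is interpreting ``degeneracy $d$'' as the \emph{exact} degeneracy, so that $G$ is $d$-degenerate but \emph{not} $(d-1)$-degenerate. Failing to be $(d-1)$-degenerate means, by the definition given, that some nonempty subgraph $H \subseteq G$ has no vertex of degree at most $d-1$, i.e.\ every vertex of $H$ has degree at least $d$. Such an $H$ must contain at least $d+1$ vertices, and the handshake lemma gives $2\,|E(H)| = \sum_{v \in V(H)} \deg_H(v) \ge d\,|V(H)| \ge d(d+1)$. Since $E(H) \subseteq E$, this yields $m \ge |E(H)| \ge d(d+1)/2 \ge d^2/2$, and rearranging gives $d \le \sqrt{2m}$.

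The only genuine subtlety is locating the dense subgraph $H$ in the second part: it does not come from the orientation but precisely from the minimality of $d$ as the degeneracy (equivalently, from the standard fact that the degeneracy equals the maximum over nonempty subgraphs of the minimum degree). Once $H$ is identified, everything else is routine degree counting, so I would expect the write-up to be short.
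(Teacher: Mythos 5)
Your proposal is correct and follows essentially the same route as the paper: for $d \le \sqrt{2m}$, both arguments use the exactness of the degeneracy to extract a subgraph of minimum degree at least $d$ and then count edges via the handshake lemma (the paper phrases this as a contradiction, you phrase it directly, which is only a cosmetic difference); for $m \le nd$, both sum outdegrees over a $d$-orientation. Your explicit remark that $d$ must be read as the exact degeneracy is a point the paper leaves implicit, but it is the same proof.
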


\begin{proof}
  For the first claim, suppose that $d > \sqrt{2m}$. Then there is a subset of nodes $U$ such that $G[U]$ has minimum degree $\delta > \sqrt{2m}$. It follows that $U$ has at least $\delta + 1$ nodes, and thus the number of edges incident to nodes in $U$ in $G$ is at least
 \[ \frac{1}{2} \sum_{v \in U} \deg_G(v) \ge \frac{1}{2} \delta (\delta + 1) > \frac{1}{2} \sqrt{2m}(\sqrt{2m} + 1) > m\,,  \]
 which is a contradiction. 
 The second claim follows by considering a $d$-orientation $\sigma$ of $G$ and observing that
   \[
     m = \sum_{v \in V} \outdeg_\sigma(v) \le nd. \qedhere
    \]
\end{proof}

Let $\updated \subseteq E$ be the set of $\alpha$ edges that are changed by the batch update.
We show that the edges of $G[\updated]$ can be quickly oriented so that each node has $O(\sqrt{\alpha})$ outgoing edges despite nodes not knowing~$\alpha$.
This orientation serves as a routing scheme for efficiently distributing relevant changes in the local neighbourhoods.

\begin{lemma}\label{lemma:compute-alpha-orientation}
  An $O(\sqrt{\alpha})$-orientation of $G[\updated]$ can be computed in $O(\log^2 \alpha)$ rounds.
\end{lemma}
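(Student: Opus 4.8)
The plan is to compute $\sigma$ by a distributed \emph{peeling} (degeneracy-ordering) procedure, in the spirit of the Nash--Williams / Barenboim--Elkin low-outdegree orientation, adapted to the fact that the nodes do not know $\alpha$. The starting point is \lemmaref{lemma:degeneracy}: since $G[\updated]$ has at most $\alpha$ edges, its degeneracy is $d \le \sqrt{2\alpha}$, and moreover every subgraph $H$ of $G[\updated]$ satisfies $m_H \le n_H d$, so its average degree is at most $2d$. By Markov's inequality, in every such $H$ at least half the nodes have degree at most $4d$. Hence, if in one \emph{peeling step} every remaining node whose current degree is at most $4d$ removes itself and orients its remaining incident edges outward, then a constant fraction of the remaining nodes is removed; as $G[\updated]$ has at most $2\alpha$ nodes, after $O(\log\alpha)$ such steps all nodes are gone. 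Each peeling step costs $O(1)$ \congest rounds: a removed node announces its removal, so every node maintains its current degree toward still-present neighbours. Orienting every edge from the endpoint removed earlier to the one removed later (breaking ties within a step by \id) yields an acyclic orientation, and each node's $\outdeg_\sigma$ is bounded by its current degree at removal time, i.e.\ at most $4d = O(\sqrt\alpha)$.

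The obstacle is that the threshold $4d = \Theta(\sqrt\alpha)$ depends on $\alpha$, which the nodes do not know, so a node cannot decide whether its current degree is ``small enough''. I would resolve this by a geometric \emph{doubling} of the guessed threshold, run in synchronized phases indexed by $i = 1, 2, \dots$, where each node tracks the current phase from the global round counter without ever needing to know $\alpha$. In phase $i$ the threshold is $\tau_i = 4\sqrt{2\cdot 2^i}$ and the peeling step above is executed $s_i = \Theta(i)$ times; a node finalises its orientation and stops participating as soon as it is removed, which fits the model's convention that nodes may halt at different times. While $\tau_i < 4d$ a phase may make little progress, but this is harmless: the outdegree of any node removed in phase $i$ is at most $\tau_i$, and no spurious cycles arise because the global removal order (phase, then step, then \id) remains a topological order.

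It then remains to verify the two bounds. For the \emph{outdegree}, set $i^\star = \lceil \log_2\alpha\rceil$, so that $2^{i^\star} < 2\alpha$ and hence $\tau_{i^\star} = 4\sqrt{2\cdot 2^{i^\star}} < 8\sqrt\alpha$. Whenever $2^i \ge \alpha$ we have $\tau_i \ge 4\sqrt{2\alpha} \ge 4d$, so the halving argument applies in phase $i^\star$: each step removes at least half of the at most $2\alpha \le 2\cdot 2^{i^\star}$ surviving nodes, and choosing $s_{i^\star} = i^\star + 2$ forces every node to be removed by the end of phase $i^\star$, with $\outdeg_\sigma(v) \le \tau_{i^\star} = O(\sqrt\alpha)$. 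For the \emph{running time}, every node halts by the end of phase $i^\star$, so the total number of peeling steps is $\sum_{i=1}^{i^\star} s_i = \sum_{i=1}^{i^\star}\Theta(i) = O\big((i^\star)^2\big) = O(\log^2\alpha)$, and since each step is $O(1)$ \congest rounds this gives the claimed bound. The main thing to get right in the full write-up is the interaction between the unknown threshold and the halting rule: ensuring that the phase $i^\star$ at which $\tau_i$ first exceeds $4d$ is reached within the pre-scheduled step budget, and that the orientation produced across phases is globally consistent and acyclic.
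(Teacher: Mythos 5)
Your proposal is correct and follows essentially the same approach as the paper's proof: both run a peeling procedure with a geometrically doubling degree threshold, execute $\Theta(i)$ peeling steps in phase $i$, use the degeneracy bound of \lemmaref{lemma:degeneracy} to show that a constant fraction of surviving nodes is removed per step once the threshold exceeds $\Theta(\sqrt{\alpha})$, and obtain acyclicity from the removal order with identifier tie-breaking. The only differences are immaterial constants (the paper uses threshold $3\sqrt{2^{d+1}}$ with a $2/3$-shrinkage argument instead of your $4\sqrt{2\cdot 2^{i}}$ with halving).
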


\begin{proof}
  Recall that $m$ is the number of edges in the communication graph $G$. Let $H = (U, \updated)$ and note that $|U| \le 2\alpha$ and $\alpha \le m$.
  For an integer $d$, define
  \[
    f(d) = 3 \cdot \sqrt{2^{d+1}} \quad \textrm{ and } \quad T(d) = \left \lceil \log_{3/2} 2^{d+1} \right \rceil.
  \]
  The orientation of $H$ is computed iteratively as follows:
  \begin{enumerate}
  \item Initially, each edge $e \in \updated$ is unoriented.
  \item In iteration $d  =  1, \ldots, \left\lceil \log m \right \rceil $, repeat the following for $T(d)$ rounds:
    \begin{itemize}
    \item If node $v$ has at most $f(d)$ unoriented incident edges, then $v$ orients them outwards and halts. In case of conflict, an edge is oriented towards the node with the higher identifier.
      \item Otherwise, node $v$ does nothing.
  \end{itemize}
\end{enumerate}
Clearly, if node $v$ halts in some iteration $d$, then $v$ will have outdegree at most $f(d)$.
	
Fix $\hat d = \lceil \log \alpha \rceil \le \lceil \log m \rceil$. We argue that by the end of iteration $\hat d$, all edges of $H$ have been oriented. For $0 \le i \le T(\hat d)$, define $U_i \subseteq U$ to be the set of vertices that have unoriented edges after $i \ge 0$ rounds of iteration $\hat d$, i.e.,
\[
U_{i+1} = \{ v \in U_i : \deg_i(v) > f(\hat d) \},
\]
where $\deg_i(v)$ is the degree of node $v$ in subgraph $H_i = H[U_i]$ induced by $U_i$.

Note that every $u \in U \setminus U_0$ has outdegree at most $f(\hat d)$. We now show that each node in $U_0$ halts with outdegree at most $f(\hat d)$ within $T(\hat d)$ rounds.
First, observe that $|U_{i+1}| < \frac{2}{3}|U_i|$. To see why, notice that by \lemmaref{lemma:degeneracy} each $H_i$ has degeneracy at most $\sqrt{2\alpha}$ and thus at most $|U_i| \cdot \sqrt{2\alpha}$ edges. If $|U_{i+1}| \ge \frac{2}{3}|U_i|$ holds, then $H_{i+1}$ has at least
\begin{align*}
  \frac{1}{2} \cdot \sum_{v \in U_{i+1}} \deg_i(v) &> \frac{1}{2} \cdot |U_{i+1}| \cdot f(\hat d) \ge \frac{2}{3} \cdot |U_i| \cdot f(\hat d) \\
  &= |U_i| \cdot \sqrt{2^{\hat d+1}} > |U_i| \cdot \sqrt{2\alpha}
 \end{align*}
edges, which is a contradiction. Thus, we get that $|U_{i+1}| < (2/3)^i \cdot |U|$ and
 \[
   |U_{T(\hat d)}| < (2/3)^{T(\hat d)} \cdot 2 \alpha \le \frac{2 \alpha}{2^{\hat d+1}} \le 1.
\]
Therefore, each edge of $H$ is oriented by the end of iteration $\hat d = \lceil \log \alpha \rceil$ and each node has at most $f(\hat d) = O(\sqrt{\alpha})$ outgoing edges. As a single iteration takes at most $O(\log \alpha)$ rounds, all nodes halt in $O(\log^2 \alpha)$ rounds, as claimed.
\end{proof}

\subsection{Algorithm for clique enumeration}\label{subsec:dynbclique}

Let $G^+[v]$ denote the subgraph induced by the radius-1 neighbourhood of $v$; note that this includes all edges between neighbours of $v$.
Let $H_1 \subseteq G$ and $H_2 \subseteq G$ be the subgraphs given by the previous input labelling $\ell_1$ and the new labelling $\ell_2$, respectively. The auxiliary state $x(v)$ of the batch dynamic algorithm is a map $x(v) = y_v$ such that $y_v : E(G^+[v]) \to \{0,1\}$.
The map $y_v$ encodes which edges in $G^+[v]$ are present in the input subgraph.

The dynamic algorithm computes the new auxiliary state $x_2$ encoding the subgraph $H^+_2[v]$ as follows:
\begin{enumerate}

\item Each node $v$ runs the $O(\alpha^{1/2})$-orientation algorithm on $G[\updated]$ until all nodes in its radius-1 neighbourhood $N^+(v)$ have halted 
  (and oriented their edges in $\updated$).

  \item Let $\updated_\textrm{out}(v) \subseteq \updated$ be the set of outgoing edges of $v$ in the orientation. Node $v \in V$ sends the set
    \[
    A(v) = \{ (e, \ell_2(e)) : e \in \updated_\textrm{out}(v) \}
    \]
    to each of its neighbours $u \in N(v)$.

  \item Define $R(v) = \bigcup_{u \in N^+(v) } A(u)$ and the map $y_v' \colon E(G^+[v]) \to \{0,1\}$ as
    \[
    y'_v(e) = \begin{cases}
      \ell_2(e) & \textrm{if } (e, \ell_2(e)) \in R(v) \\
      y_v(e) & \textrm{otherwise},
    \end{cases}
    \]
    where $y_v$ is the map encoded by the auxiliary state~$x_1(v)$.
  \item Set the new auxiliary state to $x_2(v) = y_v'$.
\end{enumerate}

First, we show that the computed auxiliary state of each node $v$ encodes the subgraph $H_2^+[v]$ induced by the radius-1 neighbourhood of $v$ in the new input graph $H_2$.

\begin{lemma}\label{lemma:clique-encoding-correct}
  Let  $v \in V$ and $e \in G^+[v]$. Then we have $y'_v(e) = 1$ if and only if $e \in H_2^+[v]$.
\end{lemma}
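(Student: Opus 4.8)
The plan is to prove the equivalence by a direct case analysis on whether the edge $e$ lies in the changed set $\updated$, using as an invariant that the incoming auxiliary state $x_1(v) = y_v$ correctly encodes $H_1^+[v]$, i.e.\ $y_v(e) = 1$ iff $\ell_1(e) = 1$ for every $e \in E(G^+[v])$. Since membership $e \in H_2^+[v]$ is by definition equivalent to $\ell_2(e) = 1$ for $e \in E(G^+[v])$, it suffices to show that $y'_v(e) = \ell_2(e)$ for every such edge, and I would phrase the whole argument in terms of establishing this single identity.

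First I would treat the case $e \notin \updated$. Here $\ell_1(e) = \ell_2(e)$, and since every pair placed into some $A(u)$ comes from $\updated_\textrm{out}(u) \subseteq \updated$, no pair of the form $(e, \cdot)$ ever enters $R(v)$; hence the definition of $y'_v$ falls into its second branch and sets $y'_v(e) = y_v(e)$. By the invariant, $y_v(e) = \ell_1(e) = \ell_2(e)$, as desired. For the case $e \in \updated$, write $e = \{a,b\}$ with $a, b \in N^+(v)$, which holds because $e \in E(G^+[v])$. Since the orientation of $G[\updated]$ is acyclic, $e$ is outgoing from exactly one endpoint $w \in \{a,b\}$, so $e \in \updated_\textrm{out}(w)$ and thus $(e, \ell_2(e)) \in A(w)$. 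The crucial observation is that $w \in N^+(v)$: if $w = v$ then $A(w) = A(v)$ contributes to $R(v)$ directly, while if $w \neq v$ then $w \in N(v)$, so node $w$ transmits $A(w)$ to all of its neighbours, including $v$. Either way $A(w) \subseteq R(v)$, so $(e, \ell_2(e)) \in R(v)$, the first branch applies, and $y'_v(e) = \ell_2(e)$.

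The main obstacle is less the case analysis than ensuring its two hypotheses are legitimately available. One is that $w$ can indeed form the label $\ell_2(e)$: this holds because $w$ is an endpoint of $e$ and therefore receives $\ell_2(e)$ as part of its local input. The other, more structural, is the invariant on the incoming state; I would establish it by induction over the sequence of batches, observing that the argument above shows the outgoing state $x_2(v) = y'_v$ encodes $H_2^+[v]$ exactly, so the invariant is preserved from one batch to the next, and the base case follows from the definition of a valid initial auxiliary state in $\xi(G,\ell_1)$.

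Finally, I would note that the orientation step completing on all of $N^+(v)$ is what guarantees that every $e \in \updated \cap E(G^+[v])$ has received its final direction before the $A$-sets are exchanged, so the endpoint $w$ used above is well defined and has already committed $e$ to $\updated_\textrm{out}(w)$ at the time it sends $A(w)$. With these pieces in place, the two cases together yield $y'_v(e) = \ell_2(e)$ for all $e \in E(G^+[v])$, which is exactly the claimed equivalence.
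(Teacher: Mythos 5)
Your proof is correct and follows essentially the same approach as the paper's: a case analysis on $e \in \updated$ versus $e \notin \updated$, showing in the first case that the endpoint out of which $e$ is oriented lies in $N^+(v)$ and hence delivers $(e,\ell_2(e))$ into $R(v)$, and in the second case that $y'_v$ falls through to the stored value $y_v(e)$. The only difference is that you establish the correctness of the incoming state $y_v$ by induction over batches, whereas the paper takes this as given by the model definition (the algorithm is handed a valid $x_1 \in \xi(G,\ell_1)$); this is a harmless elaboration, not a different argument.
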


\begin{proof}
  There are two cases to consider. First, suppose $e = \{u,w\} \in \updated$. After Step (1), the edge $\{u,w\}$ is w.l.o.g.\ oriented towards $u$. Hence, in Step (2), if $w \neq v$, then $w$ sends $(e,\ell_2(e)) \in A(w)$ to $v$, as $w \in N(v)$, and if $w=v$ then $v$ knows $A(v)$. Thus, $e \in G^+[v] \cap \updated \subseteq R(v)$. By definition of $y_v'$ it holds that $y_v'(e) = \ell_2(e) = 1$ if and only if $e \in H^+_2[v]$ holds.

  For the second case, suppose $e \notin \updated$. Then, as $H^+_1[v] \setminus \updated = H^+_2[v] \setminus \updated$, and by definition of $y'_v$, we have that $y'_v(e)=y_v(e) = 1$ if and only if $e \in H^+_2[v] \setminus \updated$ holds.
\end{proof}

Next, we upper bound the running time of the above algorithm.

\begin{lemma}
  Each node $v$ computes $H_2^+[v]$ in $O(\alpha^{1/2})$ rounds.
\end{lemma}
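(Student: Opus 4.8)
The plan is to bound the running time of each of the algorithm's four steps separately and argue that the dominant cost is the message exchange in Step~2, which is $O(\alpha^{1/2})$. The crucial leverage is \lemmaref{lemma:compute-alpha-orientation}, which guarantees that the orientation of $G[\updated]$ is computed in $O(\log^2 \alpha)$ rounds with every node having outdegree $O(\sqrt{\alpha})$, all without any node knowing $\alpha$ in advance.

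First I would handle Step~1. By \lemmaref{lemma:compute-alpha-orientation}, every node participating in the orientation halts within $O(\log^2 \alpha)$ rounds. Since halting is announced to neighbours (per the model), a node $v$ detects that every node in its radius-1 neighbourhood $N^+(v)$ has halted within one additional round, so Step~1 completes in $O(\log^2 \alpha)$ rounds for every $v$; nodes not incident to any changed edge simply listen for these announcements.

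Next I would bound Step~2, the main cost. By the orientation property, $\outdeg_\sigma(v) = O(\sqrt{\alpha})$, hence $|A(v)| = |\updated_\textrm{out}(v)| = O(\sqrt{\alpha})$. Each element of $A(v)$ is an $O(\log n)$-bit message, so $v$ can transmit its entire set $A(v)$ to all of its neighbours in $O(\sqrt{\alpha})$ rounds: distinct neighbours use distinct incident edges, so the transmissions proceed in parallel. Symmetrically, receiving $A(u)$ from each neighbour $u$ over the distinct edges incident to $v$ costs only $O(\max_{u} |A(u)|) = O(\sqrt{\alpha})$ rounds. Since no node knows $\alpha$, I would have each node send exactly its (locally known) number of outgoing edges and then announce completion, exactly as in the phases of \theoremref{thm:universal-alg-local}; each receiver ends Step~2 once all neighbours have announced completion. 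Steps~3 and~4 are purely local computations on the received data and incur no communication.

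Summing the contributions gives $O(\log^2 \alpha) + O(\sqrt{\alpha}) = O(\sqrt{\alpha})$, since $\log^2 \alpha = o(\sqrt{\alpha})$; combined with \lemmaref{lemma:clique-encoding-correct} this establishes that each node correctly computes $H_2^+[v]$ within the claimed bound. The only real subtlety — and the step I would be most careful about — is the interplay with nodes not knowing $\alpha$: the argument relies on the fact that the $O(\sqrt{\alpha})$ outdegree bound from \lemmaref{lemma:compute-alpha-orientation} is realised as a concrete, locally-known message count at each node, so that the announce-on-completion mechanism yields the desired per-node running time without any global coordination or prior knowledge of~$\alpha$.
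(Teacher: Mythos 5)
Your proposal is correct and follows essentially the same argument as the paper's proof: Step~1 is bounded by $O(\log^2\alpha)$ via \lemmaref{lemma:compute-alpha-orientation}, Step~2 by $O(\alpha^{1/2})$ since $|A(v)| = O(\alpha^{1/2})$ and each message fits in $O(\log n)$ bits, no communication happens afterwards, and correctness follows from \lemmaref{lemma:clique-encoding-correct}. The extra detail you give about halting announcements and locally-known message counts is a faithful elaboration of what the paper leaves implicit, not a different route.
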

\begin{proof}
  By \lemmaref{lemma:compute-alpha-orientation}, Step (1) completes in $O(\log^2 \alpha)$ rounds and $|A| = O(\alpha^{1/2})$. Since each edge in $A$ can be encoded using $O(\log n)$ bits, Step (2) completes in $O(\alpha^{1/2})$ rounds.
  As no communication occurs after Step~(2), the running time is bounded by $O(\alpha^{1/2} + \log^2 \alpha)$. By \lemmaref{lemma:clique-encoding-correct}, node~$v$ learns $H_2^+[v]$ in Step (3).
\end{proof}

Note that if a node $v$ is part of a $k$-clique, then all the edges of this clique are contained in~$H^+_2[v]$. Thus, node $v$ can enumerate all of its $k$-cliques by learning $H^+_2[v]$, and hence, we obtain the following result.

\begin{theorem}
  There exists an algorithm for clique enumeration in the batch dynamic $\congest$ model that runs in $O(\alpha^{1/2})$ rounds and uses $O(m \log n)$ bits of auxiliary state.
\end{theorem}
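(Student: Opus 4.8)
The plan is to package the algorithm and the three supporting lemmas of \sectionref{subsec:dynbclique} into the three required guarantees: (i) that each node can locally enumerate its $k$-cliques, (ii) the $O(\alpha^{1/2})$ running time, and (iii) the $O(m\log n)$ space bound. The algorithm itself is already fixed: the auxiliary state is the map $x(v)=y_v \colon E(G^+[v])\to\{0,1\}$ indicating which edges of the induced radius-$1$ neighbourhood lie in the current input subgraph, and the four-step update procedure (orient $G[\updated]$, exchange the outgoing changed edges $A(v)$, patch $y_v$ into $y_v'$) produces the next state.

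For correctness I would argue as follows. By \lemmaref{lemma:clique-encoding-correct}, after the update every node $v$ holds $y_v'$ satisfying $y_v'(e)=1$ iff $e\in H_2^+[v]$; that is, $v$ knows the induced subgraph $H_2^+[v]$ of its radius-$1$ neighbourhood in the new input graph $H_2$ exactly. The key structural observation is that any $k$-clique containing $v$ has all of its vertices adjacent to $v$ and all of its edges running between such neighbours, so the entire clique is contained in $G^+[v]$ and is present in $H_2$ precisely when all its edges are marked in $y_v'$. Hence $v$ can enumerate every $k$-clique it participates in purely locally from $y_v'$, with no further communication, which establishes (i). This also shows that the state is a valid auxiliary state in the sense of the model, since the output is a function of $x_2(v)=y_v'$ alone.

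For the time bound (ii) I would invoke the preceding running-time lemma directly: the orientation of $G[\updated]$ is computed in $O(\log^2\alpha)$ rounds by \lemmaref{lemma:compute-alpha-orientation}, after which each node sends a set $A(v)$ of size $O(\alpha^{1/2})$ (one entry per outgoing changed edge), so the exchange in Step (2) finishes in $O(\alpha^{1/2})$ rounds; as no messages are sent afterward, the total is $O(\alpha^{1/2}+\log^2\alpha)=O(\alpha^{1/2})$. For the space bound (iii), $y_v$ is a table over the edges of $G^+[v]$; in the worst case this neighbourhood spans $\Theta(m)$ edges, and indexing each edge together with its bit costs $O(\log n)$, so $O(m\log n)$ bits suffice.

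The main subtlety — and the reason the real work lives in the earlier lemmas rather than in this theorem — is that nodes do not know $\alpha$ in advance, yet the running time must be bounded in terms of $\alpha$ and remain independent of the diameter $D$ (since clique enumeration is a local problem). This is exactly what \lemmaref{lemma:compute-alpha-orientation} resolves: the acyclic orientation lets each node decide locally when its changed incident edges have been settled, so in Step (1) a node simply waits until all of $N^+(v)$ has halted in the orientation routine before broadcasting $A(v)$. Granting that lemma, the present statement is a direct assembly of the pieces, and I expect no obstacle beyond correctly invoking the locality of cliques within $G^+[v]$ and the worst-case size of that neighbourhood for the space count.
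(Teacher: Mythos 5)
Your proposal is correct and follows essentially the same route as the paper: the same orientation-based algorithm, the same invocation of \lemmaref{lemma:compute-alpha-orientation} and \lemmaref{lemma:clique-encoding-correct} for the time and correctness guarantees, and the same observation that every $k$-clique containing $v$ lies entirely within $H_2^+[v]$, so enumeration is purely local once the induced neighbourhood is known. The space accounting ($O(\log n)$ bits per edge of $G^+[v]$, which can be $\Theta(m)$ edges in the worst case) also matches the paper's bound.
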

%

\section{Minimum-weight spanning trees}\label{sec:MWST}

In this section, we construct an algorithm that computes a minimum-weight spanning tree in the dynamic batch model in $O(\alpha + D)$ rounds and using $O(\log n)$ bits of auxiliary state between~batches. For the dynamic minimum spanning tree, we assume that the input label $w(e) \in \{ 0, 1, 2, \dotsc, n^C \} \cup \{ \infty \}$ encodes the weight of edge $e \in E$, where $C$ is a constant, and that the output defines a rooted minimum spanning tree, with each node~$v$ outputting the identifier of their parent.

To do this, we will use a distributed variant of an \emph{Eulerian tour tree}, a data structure familiar from classic centralised dynamic algorithms. In the distributed setting, it allows us to make inferences about the relative positions of edges with regard to the minimum spanning tree \emph{without} full information about the tree. Moreover, the Eulerian tour tree can be compactly encoded into the auxiliary state using only $O(\log n)$ bits per node.

In the following, we first describe a distributed variant of this structure and then how to use it in conjunction with a the \emph{minimum-weight matroid basis} algorithm of Peleg~\cite{peleg1998distributed} to compute the minimum spanning tree in the dynamic batch $\congest$ model efficiently.

\subsection{Distributed Eulerian tour trees}\label{subsec:dist-ett}

We now treat $G = (V,E)$ as a directed graph, where each edge $\{u,v\}$ is replaced with $(u,v)$ and $(v,u)$.
As before, we treat a subgraph $(V,F)$ of $G$ interchangeably with the edge set $F \subseteq E$ and 
further abuse the notation by taking
a subtree $T$ of $G$ to mean a directed subgraph of $G$ containing all directed edges corresponding to an undirected tree on $G$. In particular, we use $|T|$ to denote the number of directed edges in $T$.

Let $H \subseteq G$ be a subgraph of $G$. The bijection $\tau \colon E(H) \to \{ 0, \ldots, \size{E(H)} - 1 \}$ is an \emph{Eulerian tour labelling} of $H$ if the sequence of directed edges $\tau^{-1}(0), \tau^{-1}(1), \ldots, \tau^{-1}( \size{E(H)} - 1 )$ gives an Eulerian tour of $H$.
We say that $u$ is the root of $\tau$ if there is some edge $(u,v)$ such that $\tau(u,v)=0$.
For a map $f \colon A \to B$ and a set $C \subseteq A$, the restriction of $f$ to domain $C$ is the map $f \restriction_C \colon C \to A$ given by $f(c)=f \restriction_C(c)$ for all $c \in C$.

\begin{figure}
\begin{center}
  \includegraphics[page=4,width=0.99\textwidth]{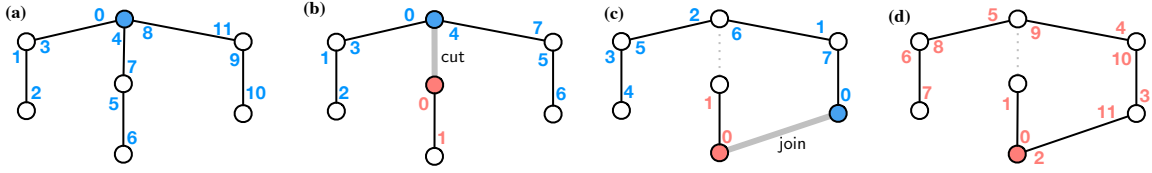}
  \caption{Eulerian tour trees. (a)~An example of Eulerian tour labelling, with root node marked in blue. (b)~The updated Eulerian tour labellings after applying a \textsf{cut} operation. The roots of the new blue and red trees are marked with respective colours. (c) To apply a \textsf{join} operation, we root the red and blue tree to the endpoints of the join edge. (d) Eulerian tour labelling after the \textsf{join} operation. }\label{fig:tours}
\end{center}
\end{figure}

\paragraph{Eulerian tour forests.} An \emph{Eulerian tour forest on $\mathcal{F}$} is a tuple $\mathcal{L} = (L,r,s,a)$ such that

\begin{enumerate}
\item $\mathcal{F} = \{ T_1, \ldots, T_h \}$ is a spanning forest of $G$,
\item $L \colon E \to \{0, \ldots |E|-1\} \cup \{ \infty \}$ is a mapping satisfying the following conditions:
  \begin{itemize}[noitemsep]
    \item for each $T \in \mathcal{F}$ the map $L \restriction_{T}$ is an Eulerian tour labelling of $T$, and
    \item if $(u,v) \notin \bigcup \mathcal{F}$, then $L(u,v) = \infty$,
  \end{itemize}
\item $r \colon V \to V$ is a mapping such that for each $T \in \mathcal{F}$ and node $v \in V(T)$, we have that $r(v)$ is the root of the Eulerian tour labelling $L \restriction_T$ of $T$,
\item $s \colon V \to \{0, \ldots |E|\} $ is a mapping satisfying $s(v) = \size{T}$ for each $T \in \mathcal{F}$ and a node $v \in V(T)$,
\item $a \colon V \to \{0, \ldots |E|-1\}$ is a mapping satisfying for each $T \in \mathcal{F}$ and node $v \in V(T)$ the following conditions:
\begin{itemize}[noitemsep]
    \item if $T$ contains at least one edge, then $a(v) = \min \{ L(e) \colon \text{$e$ is an outgoing edge from $v$} \}$,
    \item if $T$ consists of only node $v$, then $a(v) = 0$.
\end{itemize}
\end{enumerate}
We define distributed operations which allow us to merge any two trees or cut a single tree into two trees, given that all nodes know which edges the operations are applied to. This data structure is then used to efficiently maintain a minimum spanning tree of $G$ under edge weight changes.

\paragraph{Eulerian tour forest operations.}
Let $\mathcal{L}$ be an Eulerian tour forest of $G$.
For any $\mathcal{L} = (L,r,s,a)$ and $E' \subseteq E$, we define the restricted labelling $\mathcal{L}\restriction_{E'} = (L\restriction_{E'}, r \restriction_U, s \restriction_{U}, a \restriction_{U} )$, where $U = \bigcup E'$ is the set of nodes incident to edges in~$E'$.
We implement two operations for manipulating $\mathcal{L}$ (illustrated by \figureref{fig:tours}): a $\mathsf{join}$ operation that merges two trees and a $\mathsf{cut}$ operation that removes an edge from a tree and creates two new disjoint trees. To implement the two basic operations $\mathsf{join}$ and $\mathsf{cut}$, we also use an auxiliary operation $\mathsf{root}$ that is used to reroot a tree.

For brevity, let $T(u)$ to denote the tree node $u$ belongs to in the Eulerian tour forest. We use $\size{T}$ to denote the number of directed edges in $T$. 
The three operations are as follows:

\begin{itemize}
\item $\mathsf{root}(\mathcal{L}, u)$: Node $u$ becomes the root of the tree~$T(u)$.

  \emph{Implementation:} Set
  \begin{align*}
    L(w,v) & \gets L(w,v) - a(u) \bmod s(u) \\
    &\text{for each $(w,v) \in T(u)$, and}\\
    a(v) & \gets a(v) - a(u) \bmod s(u) \\
    &\text{for each $v \in V(T(u))$.}
  \end{align*}
  Otherwise, $L$ and $a$ remain unchanged. Moreover, $r(v) \gets u$ if $r(u)=r(v)$ and otherwise $r$ remains unchanged. All tree sizes remain unchanged.

  \item $\mathsf{join}(\mathcal{L}, e)$: If $e = \{v_i, v_j\}$, where $v_i \in V(T_i)$ and $v_j \in V(T_j)$ for $i \neq j$, then merge $T_i$ and $T_j$ and create an Eulerian tour labelling of $T' = T_i \cup T_j \cup \{e \}$. The root of $T'$ will be the endpoint of $e$ with the smaller identifier.

    \emph{Implementation:} Let $e = \{v_i,v_j\}$, where $v_i \in V(T_i)$ and $v_j \in V(T_j)$ for $i \neq j$. Without loss of generality, suppose $v_i < v_j$. The operation is implemented by the following steps:
    \begin{enumerate}
      \item Run $\mathsf{root}(v_i)$ and $\mathsf{root}(v_j)$.
      \item Set $L(v_i, v_j) \gets s(v_i)$ and $L(v_j, v_i) \gets s(v_i) + s(v_j) + 1$.\\
            For each $(u,v) \in T_j$, set $L(u,v) \gets L(u,v) + s(v_i) + 1$.
      \item For each $u \in V(T_j)$, set $a(u) \gets a(u) + s(v_i) + 1$.
      \item For each $u \in T'$, set $s(u) \gets s(v_i) + s(v_j) + 2$ and $r(u) \gets v_i$.
    \end{enumerate}

  \item $\mathsf{cut}(\mathcal{L}, e)$: For an edge $e = \{v_1,v_2\}$ in some tree $T$, create two new disjoint trees $T_1$ and $T_2$ with Eulerian tour labellings rooted at $v_1$ and $v_2$ such that $T_1 \cup T_2 = T \setminus \{e \}$.

    \emph{Implementation:} Let $e = \{v_1,v_2\}$. Without loss of generality, assume that $a(v_1) < a(v_2)$. Let $z_1 = L(v_1,v_2)$, $z_2 = L(v_2,v_1)$, and $x = z_2-z_1$. The edge labels are updated as follows:
    \begin{enumerate}
     \item Set $L(v_1,v_2) \gets \infty$ and $L(v_2,v_1) \gets \infty$.
     \item  If $a(v) \in [0, \ldots, z_1]$, then set $s(v) \gets s(v) - x - 1$ and $a(v) \gets a(v)$.\\
            If $a(v) \in (z_1, \ldots, z_2]$, then set $s(v) \gets x - 1$ and $a(v) \gets a(v) - z_1 -1$. \\
            Otherwise, set $s(v) \gets s(v) - x - 1$ and $a(v) \gets a(v) - x - 1$.
     \item If $L(u,v) \in [0, \ldots, z_1)$, then set $L(u,v) \gets L(u,v)$.\\
           If $L(u,v) \in (z_1, \ldots, z_2)$, then set $L(u,v) \gets L(u,v) - z_1 - 1$. \\
           Otherwise, set $L(u,v) \gets L(u,v) - x - 1$.
     \item Run $\mathsf{root}(v_1)$ and $\mathsf{root}(v_2)$.
    \end{enumerate}
\end{itemize}

The next lemma shows that the above operations result in a new Eulerian tour forest, i.e., the operations are correct.

\begin{lemma}
  Given an Eulerian tour forest $\mathcal{L}$, each of the above three operations produce a new Eulerian tour forest $\mathcal{L}'$.
\end{lemma}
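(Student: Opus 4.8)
The plan is to verify that each operation preserves the five defining conditions of an Eulerian tour forest, namely that $\mathcal{L}' = (L',r',s',a')$ still encodes a valid spanning forest, that $L'$ restricts to a genuine Eulerian tour labelling on each tree, and that $r'$, $s'$, $a'$ record the correct root, tree-size, and minimum-outgoing-label data. I would treat the three operations in dependency order: first $\mathsf{root}$, since both $\mathsf{join}$ and $\mathsf{cut}$ invoke it, and then $\mathsf{join}$ and $\mathsf{cut}$. For each operation the argument is purely combinatorial — the distributed implementation is irrelevant to correctness, so I would reason about the effect on the abstract labelling.

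For $\mathsf{root}(\mathcal{L}, u)$, the key observation is that an Eulerian tour of a tree $T$ is a cyclic sequence of its $\size{T}$ directed edges, so rerooting amounts to a cyclic shift of the labels. Since $a(u)$ is by definition the smallest label on an edge leaving $u$, subtracting $a(u)$ modulo $s(u)=\size{T}$ sends that edge to label $0$, making $u$ the root; I would check that this shift sends the old Eulerian tour to another valid Eulerian tour (cyclic rotations of an Eulerian tour are Eulerian tours), that tree membership and sizes are untouched, that the new $a'(v)$ values are the shifted minima, and that $r'$ is set to $u$ exactly on $V(T(u))$. For $\mathsf{join}(\mathcal{L}, e)$ with $e=\{v_i,v_j\}$, after rooting both trees at the endpoints of $e$ I would argue that splicing the tour of $T_j$ into the tour of $T_i$ at the root, inserting the two directed copies of $e$ at the stated positions $s(v_i)$ and $s(v_i)+s(v_j)+1$, and shifting every label of $T_j$ by $s(v_i)+1$, produces a single contiguous Eulerian tour of $T' = T_i \cup T_j \cup \{e\}$ of length $s(v_i)+s(v_j)+2$; the updates to $s$, $r$, and $a$ then just record the merged size, the common new root $v_i$, and the shifted minima. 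For $\mathsf{cut}(\mathcal{L}, e)$ I would reverse this reasoning: removing the two directed copies of $e$ at positions $z_1$ and $z_2$ splits the single tour into the segment $(z_1,z_2)$, which becomes the tour of one subtree of length $x-1$, and the complementary arc, which becomes the tour of the other subtree; the three-case updates to $L$, $s$, and $a$ exactly relabel these two arcs to be contiguous from $0$, and the final two $\mathsf{root}$ calls fix the roots at $v_1$ and $v_2$.

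The main obstacle I expect is the bookkeeping in $\mathsf{cut}$: one must confirm that the interval-based case analysis correctly partitions the edges and nodes into the two resulting trees and assigns each the right contiguous label range, accounting for the two deleted edge-labels and handling the boundary cases (e.g.\ a node whose only outgoing edge was the cut edge, or a tree reduced to a single vertex so that the $a(v)=0$ convention applies). Establishing these cases carefully — and checking that $a'(v)$ remains the minimum outgoing label in the new subtree after the shift — is where most of the verification effort lies; the $\mathsf{root}$ and $\mathsf{join}$ cases are comparatively direct once one adopts the cyclic-sequence viewpoint. I would therefore structure the proof as three short lemmas (one per operation) that each check conditions (1)--(5) in turn, with the invariant ``$L\restriction_T$ is a cyclic Eulerian tour and $a,s,r$ are the induced statistics'' carried through as the common backbone.
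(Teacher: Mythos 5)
Your proposal is correct and follows essentially the same route as the paper's proof: checking each operation in turn, with $\mathsf{root}$ handled as a cyclic shift of the tour labels, $\mathsf{join}$ as splicing the two tours after rooting at the endpoints of $e$, and $\mathsf{cut}$ via the interval-based case analysis that splits the tour into two contiguous arcs. You also correctly identify that the $\mathsf{cut}$ bookkeeping (which labels fall in $[0,z_1]$, $(z_1,z_2]$, and beyond, and the induced sizes $x-1$ and $s(v)-x-1$) is where the bulk of the verification lies, which is exactly how the paper's argument is structured.
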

\begin{proof}
  For the $\mathsf{root}(u)$ operation, observe that only labels in the subtree $T(u)$ change by being shifted by $a(u)$ modulo $|T(u)|$. Hence, the updated labelling of $T(u)$ remains an Eulerian tour labelling. Since the smallest outgoing edge of $u$ will have label $a(u) - a(u) \mod |T(u)| = 0$, node $u$ will be the root of $T(u)$ in the new Eulerian tour labelling of $T(u)$.

  For the $\mathsf{join}(e)$ operation, we observe that after the first step, $v_i$ and $v_j$ are the roots of their respective trees. In particular, after the $\mathsf{root}$ operations, the largest incoming edge of $v_i$ will have label $|T_i|-1$ and the smallest outgoing edge of $v_i$ will have label 0. Hence $v_i$ becomes the root of $T'$. Moreover, in the new Eulerian labelling any edge in $T(v_i)$ will have a valid Eulerian tour labelling, as the labels for $T_i$ remain unchanged. In $T_j$ the labels are a valid Eulerian tour labelling shifted by $|T_i| + 1$. As, in the new labeling the the edge $(v_i,v_j)$ will have label $|T_i|$ and the smallest outgoing label of $v_j$ will be $|T_i|+1$, and the largest incoming label will of $v_j$ will be $|T_i|+|T_j|$. The label of $(v_j,v_i)$ will therefore be $|T_i|+|T_j|+1 = |T_i \cup T_j \cup \{e\}| -1$ and this is the largest label of the new Eulerian tour labelling. Hence, the new labelling is an Eulerian tour forest.

  Finally, consider the $\mathsf{cut}(e)$ operation. Let $T_1$ and $T_2$ be the trees created by removing the edge $e$ from $T$. Note that $x = z_2 - z_1 = |T_2|+1$ and $|T| = |T_1| + |T_2| + 2$, since we are counting directed edges.
  Clearly, after cutting the edge $e$ from $T$, the a node $v$ belongs to subtree $T_2$ if $a(v) \in [z_1, \ldots, z_2)$ and otherwise to $T_1$. Thus, in the latter case $s(v)$ is set to $|T| - x - 1 = |T| - |T_2| - 2 = |T_1|$, and in the former, $s(v)$ is set to $x - 1 = |T_2|$. 

  Suppose an edge $L(u,v) < z_1$. Then the edge $(u,v)$ belongs to $T_1$ and its label will remain unchanged. Now suppose $L(u,v) > z_2$. Then $(u,v)$ will be part of $T_1$ and its new label will be $L(u,v) - x - 1 = L(u,v) - |T_2| - 1$. In particular, the edge of $u_1$ with the smallest outgoing label $z_2+1$ will have the label $z_1$ in the new labeling. Thus, $L$ restricted to $T_1$ will be a valid Eulerian tree tour labelling of $T_1$. It remains to consider the case that $L(u,v) \in (z_1,z_2)$. However, it is easy to check that now the root of $T_2$ will be $v_2$ and the new labelling restricted to $T_2$ will be a valid Eulerian tree tour labelling of $T_2$. Finally, the $\mathsf{root}$ operations ensure that the endpoints of $e$ become the respective roots of the two trees, updating the variables $r(\cdot)$.
\end{proof}

A key property of the Eulerian tour forest structure is that any node that knows the labels of a set $E' \subseteq E$ can locally deduce the new labels of all edges in $E'$ after either $\mathsf{join}$ or $\mathsf{cut}$ operation is applied to a given edge in~$E'$.

\begin{lemma}\label{lemma:local-updates}
   Let $\mathcal{L}$ be an Eulerian tour forest and $e,f \in E$ be edges.
   Suppose $\mathcal{L}'$ is obtained by applying either the $\mathsf{join}(\mathcal{L}, e)$ or the  $\mathsf{cut}(\mathcal{L},e)$ operation.
   Then $\mathcal{L'} \restriction_{\{e,f\}}$ can be computed from $\mathcal{L} \restriction_{\{e,f\}}$. 
\end{lemma}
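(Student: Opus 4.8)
The plan is to show that each of the three operations updates the data attached to any single edge or node by a rule whose only ``global'' ingredients are all stored at the two endpoints of the operation edge $e$, while the remaining ingredients are purely local. Let $P_e$ denote the constant-size tuple consisting of the identifiers of the endpoints of $e$, the two labels that $L$ assigns to $e$, and the values $r,s,a$ at the endpoints of $e$; note that $P_e$ is entirely recoverable from $\mathcal{L}\restriction_{\{e\}}$. Since $\mathcal{L}\restriction_{\{e,f\}} = (L\restriction_{\{e,f\}}, r\restriction_U, s\restriction_U, a\restriction_U)$ with $U = \bigcup\{e,f\}$ records both the full endpoint data of $e$ and the local data $L(f), r, s, a$ at the endpoints of $f$, it suffices to exhibit, for each operation, maps $\Phi_L, \Phi_r, \Phi_s, \Phi_a$ such that $L'(g) = \Phi_L(L(g), P_e)$ and $(r'(v), s'(v), a'(v)) = (\Phi_r, \Phi_s, \Phi_a)(r(v), s(v), a(v), P_e)$. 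Applying these maps to the $f$-part of $\mathcal{L}\restriction_{\{e,f\}}$, using the $e$-part as parameters, then yields $\mathcal{L}'\restriction_{\{e,f\}}$.

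The first building block is $\mathsf{root}(u)$: it shifts every label of $T(u)$ by the constant $a(u)\bmod s(u)$ and every $a$-value of $V(T(u))$ by the same constant, sets $r$ to $u$ on that tree, and leaves all other trees untouched. Both the shift and the modulus are stored at $u$, hence available in $P_e$ whenever $u$ is an endpoint of $e$; and whether a node $v$ (resp. an edge $g$) lies in $T(u)$ is decided by testing $r(v) = r(u)$ (resp. comparing the $r$-values of $g$'s endpoints), which is local data present in $\mathcal{L}\restriction_{\{e,f\}}$. For $\mathsf{join}(e)$ I would compose: first apply the root rule at each of $v_i, v_j$, then observe that Steps (2)--(4) only add the fixed offset $s(v_i)+1$ to the labels and $a$-values of $T_j$, set the two labels on $e$ to fixed values, and overwrite $s, r$ on the merged tree by the constants $s(v_i)+s(v_j)+2$ and $v_i$. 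All these constants are functions of $P_e$, and membership of each endpoint of $f$ in $T_i$, $T_j$, or a third tree is again read off from its $r$-value, so the updated $L'(f)$ and $(r',s',a')$ at $f$'s endpoints are determined.

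For $\mathsf{cut}(e)$ the analysis is analogous, the branch selector now being a threshold comparison rather than a tree-membership test. From $\mathcal{L}\restriction_{\{e\}}$ we read $z_1 = L(v_1,v_2)$, $z_2 = L(v_2,v_1)$ and $x = z_2 - z_1$, which are exactly the $P_e$-data required. A node $v$ is affected only if it lies in the cut tree, detected by $r(v) = r(v_1)$; within that tree it falls into one of the three cases of Step (2) according to whether $a(v)$ lies in $[0,z_1]$, $(z_1,z_2]$, or above, each case prescribing an affine update of $(s(v),a(v))$ with coefficients depending only on $z_1,z_2,x$. Likewise each edge $g$ in the tree falls into one of the cases of Step (3) according to where $L(g)$ lies relative to $z_1,z_2$. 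Finally the two root operations of Step (4) at $v_1, v_2$ are again anchored at endpoints of $e$, so by the root analysis their effect on $f$ is computable from the (already updated, but still endpoint-stored) values at $v_1, v_2$ together with the local post-Step-(3) value at $f$.

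The step I expect to require the most care is verifying that every membership or branch decision needed for $f$ is genuinely decidable from $\mathcal{L}\restriction_{\{e,f\}}$ alone, and that the constant number of composed sub-operations (two roots inside $\mathsf{join}$, two roots after the relabelling inside $\mathsf{cut}$) can be tracked without ever consulting data from nodes outside the endpoints of $e$ and $f$. For $\mathsf{join}$ this rests on the consistency of the $r$-labels within each tree, and for $\mathsf{cut}$ on the fact that the threshold test on $a(v)$ correctly separates $T_1$ from $T_2$; both are precisely what the correctness proof of the preceding lemma establishes, so I would invoke that lemma rather than re-derive these facts. Once the decisions are justified, each formula above is a composition of a bounded number of locally evaluable maps, and evaluating them on the $f$-component with the $e$-component as parameters produces $\mathcal{L}'\restriction_{\{e,f\}}$, completing the argument.
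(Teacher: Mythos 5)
Your proposal is correct and follows essentially the same route as the paper's proof: a case analysis over the $\mathsf{root}$, $\mathsf{join}$, and $\mathsf{cut}$ operations, showing in each case that the update rule applied to $f$'s data depends only on quantities ($a$, $s$, $r$, and the labels $z_1, z_2$) stored at the endpoints of $e$, which are available in $\mathcal{L}\restriction_{\{e,f\}}$. Your treatment is somewhat more explicit than the paper's—in particular in spelling out the tree-membership tests via $r$-values and the tracking of intermediate values through the composed $\mathsf{root}$ sub-operations—but the underlying argument is the same.
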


\begin{proof}
  Let $e = \{ u_1, u_2\}$ and $f = \{v_1, v_2\}$. Let $f \neq e$ be an
	edge whose labels we need to compute after an operation on $e$. We show that after applying any one of the three operations on $\mathcal{L}$, the labels $\mathcal{L}'\restriction_{f}$ can be computed from $\mathcal{L} \restriction_{\{e \cup f\}}$. There are three cases to consider:
  \begin{enumerate}
  \item $\mathcal{L}' = \mathsf{root}(\mathcal{L}, u)$: If $f \notin T(u)$, then $\mathcal{L} \restriction_f = \mathcal{L}' \restriction_f$, as the labels of $f$ do not change. If $f \in T(u)$, then $\mathcal{L}'\restriction_f$ depends only on $a(u)$ and $s(u)$.

  \item $\mathcal{L}' = \mathsf{join}(\mathcal{L}, e)$: If $f \notin T_1 \cup T_2$, then
    $\mathcal{L} \restriction_f = \mathcal{L}' \restriction_f$, as the labels of $f$ do not change after the joining these two trees. Hence suppose $f \in T_1 \cup T_2$. From the previous case, we know that the two $\mathsf{root}$ operations depend on $a(u_i)$ and $s(u_i)$ for $i \in \{1,2\}$. The latter two steps depend only on $s(u_i)$. As these values are contained in $\mathcal{L} \restriction_{\{e,f\}}$, the restriction $\mathcal{L}' \restriction_{f}$ is a function of $\mathcal{L} \restriction_{\{e,f\}}$.

  \item $\mathcal{L}' = \mathsf{cut}(\mathcal{L}, e)$: If $f \notin T$, then the labels of $f$ do not change. Hence, suppose $f \in T$. One readily checks that the update operations in Steps 1-3 depend on $z_1 = L(u_1,u_2)$, $z_2 = L(u_2,u_1)$, $a(v_i)$ and $s(v_i)$ for $i \in \{1,2\}$. Therefore,  $\mathcal{L}' \restriction_{f}$ is a function of $\mathcal{L} \restriction_{ \{e,f\}}$.
  \end{enumerate}
  Thus, in all cases  $\mathcal{L}' \restriction_{f}$ is a function of $\mathcal{L} \restriction_{ \{e,f\}}$, and the claim follows.
\end{proof}

\paragraph{Storing the Eulerian tour tree of a minimum-weight spanning tree.}
Suppose $\mathcal{L}$ is an Eulerian tour forest  on the minimum-weight spanning tree of $G$.
Later, our algorithm will in fact always maintain such a Eulerian tour forest after a batch of updates.

The auxiliary state $x$ is defined as follows. For each node $v \in V$, the auxiliary state $x(v)$ consists of the tuple $(r(v), p(v), \lambda(v))$, where
\begin{itemize}[noitemsep]
  \item $r(v)$ is the identifier of the root of the spanning tree,
  \item $p(v)$ points to the parent of $v$ in the spanning tree,
  \item $\lambda(v) = \big( L(p(v),v), L(v,p(v)) \big)$, respectively.
\end{itemize}
These variables can be encoded in $O(\log n)$ bits. Moreover, each node $v$ can reconstruct $\mathcal{L} \restriction_{E(v)}$ from the auxiliary state $x$ in $O(1)$ rounds. 
\begin{lemma}\label{lemma:mst-auxiliary}
  Given the auxiliary state $x_1(v)$ that encodes $\mathcal{L}$ on a spanning tree of $G$, each node $v$ can learn in $O(1)$ communication rounds $\mathcal{L} \restriction_{E(v)}$. Likewise, given $\mathcal{L}\restriction_{E(v)}$, node $v$ can compute the corresponding auxiliary state $x_1(v)$ locally.
\end{lemma}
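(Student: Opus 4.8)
The plan is to treat the two directions separately, noting that the ``given $\mathcal{L}\restriction_{E(v)}$'' direction is purely local while the ``learn $\mathcal{L}\restriction_{E(v)}$'' direction costs two communication rounds. For the local direction (recovering $x_1(v)=(r(v),p(v),\lambda(v))$ from $\mathcal{L}\restriction_{E(v)}$): the restriction hands $v$ both labels $L(v,u),L(u,v)$ of every incident edge together with $r(v)$, $s(v)$, $a(v)$, so $r(v)$ is read off directly, and $\lambda(v)=(L(p(v),v),L(v,p(v)))$ follows once $p(v)$ is identified. To identify the parent I would use the key structural fact of a rooted Euler tour: the first outgoing edge traversed at $v$ immediately follows the edge $(p(v),v)$ that enters $v$ for the first time, so $a(v)=L(p(v),v)+1$ whenever $v$ is not the root. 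Hence $p(v)$ is the unique \emph{tree}-neighbour $u$ (one with $L(u,v)\neq\infty$) satisfying $L(u,v)=a(v)-1$; if instead $a(v)=0$ then $v$ is the root and has no parent (equivalently $r(v)=\id(v)$, the convention fixed by $\mathsf{root}$).

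For the forward direction (computing $\mathcal{L}\restriction_{E(v)}$ in $O(1)$ rounds): in the first round every node sends its triple $(r,p,\lambda)$ to all neighbours. Node $v$ then classifies each neighbour $u$ as its parent ($u=p(v)$), a child ($p(u)=v$), or a non-tree neighbour (otherwise); it reads the two finite labels off the relevant triple --- its own $\lambda(v)$ for the parent edge and the child's $\lambda(u)$ for each child edge --- and sets both labels to $\infty$ for every non-tree neighbour, thereby recovering $L\restriction_{E(v)}$. Each node can now compute $a(v)$ as the minimum of its outgoing tree-edge labels $L(v,p(v))$ and $L(v,c)$ over children $c$, and $s(v)=2(n-1)$ since $\mathcal{L}$ is defined on a single spanning tree and this common size is a globally known constant. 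In the second round every node broadcasts $(r,s,a)$, after which $v$ holds $r$, $s$, $a$ on all of $N^+(v)=\bigcup E(v)$, completing $\mathcal{L}\restriction_{E(v)}$.

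The main obstacle is the correctness of the parent-recovery step: I must argue carefully that the local data together with $a(v)$ pin down $p(v)$ unambiguously. This rests on the precise ordering of edge labels in the DFS-style Euler tour --- in particular that every child's return edge $(c,v)$ carries a label strictly larger than $a(v)-1$, so the parent's incoming edge is the unique one with label $a(v)-1$ --- and on the root convention that $v=r$ has smallest outgoing label $0$. A secondary point needing care is the recovery of $s$: because the stored auxiliary state omits $s(v)$, I rely on $\mathcal{L}$ being on a spanning tree so that $s(v)=2(n-1)$ is identical at every node and known from $n$; for a genuine forest this shortcut would fail and $s$ would have to be propagated, which is precisely why the lemma is stated for the spanning-tree case.
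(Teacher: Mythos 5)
Your proof is correct and follows essentially the same route as the paper's: exploit that $r$ and $s$ are constant functions when $\mathcal{L}$ lives on a single spanning tree, exchange the $O(\log n)$-bit labels $\lambda(\cdot)$ with neighbours in $O(1)$ rounds, and note that the reverse direction is purely local. You are in fact more careful than the paper's own terse proof, which glosses over both the extra round needed so that $v$ learns $a(u)$ for its neighbours $u$ (part of $\mathcal{L}\restriction_{E(v)}$ by definition) and the Euler-tour ordering fact $L(p(v),v)=a(v)-1$ that pins down the parent pointer in the local direction.
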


\begin{proof}
Since $\mathcal{L}$ is an Eulerian tour forest on a spanning tree, every node $v$ knows $s$ (the size of the spanning tree) and $r$ (the root of the tree), as both are constant functions. As $\lambda(v)$ can be encoded using $O(\log n)$ bits, each node $v$ can send $\lambda(v)$ to all of its neighbours in $O(1)$ communication rounds. Thus, after $O(1)$ rounds node $v$ knows $L\restriction_{E(v)}$. The second part follows directly from the definition of $x_1(v)$.
\end{proof}

\subsection{Maximum matroid basis algorithm}\label{subsec:maxmatroid}

We use an algorithm of Peleg~\cite{peleg1998distributed} as a subroutine for finding minimum and maximum weight \emph{matroid bases} in distributed manner. We first recall the definition of matroids.

\begin{definition}
A \emph{matroid} is a pair $\mathcal{M} = (A, \mathcal{I})$, where $A$ is a set and $\mathcal{I} \subseteq 2^A$ satisfies the following:
\begin{enumerate}[noitemsep]
    \item The family $\mathcal{I}$ is non-empty and closed under taking subsets.
    \item For any $I_1, I_2 \in \mathcal{I}$, if $\size{I_1} > \size{I_2}$, then there is an element $x \in I_1 \setminus I_2$ such that $I_2 \cup \{ x \} \in \mathcal{I}$. This is called the \emph{augmentation property} of a matroid.
\end{enumerate}
We say that a set $I \subseteq A$ is \emph{independent} if $I \in \mathcal{I}$. A maximal independent set is called a \emph{basis}.
\end{definition}

In the \emph{maximum matroid basis problem}, we are given a matroid $\mathcal{M} = (A, \mathcal{I})$ with a weight function $w \colon A \to \{ -n^C, \dotsc, n^C \}$ giving unique weights for all elements, and the task is to a find a basis $B$ of $\mathcal{M}$ with maximum weight $w(B) = \sum_{x \in B} w(x)$. In more detail, the input is specified as follows:
\begin{itemize}
    \item Each node receives a set $A_v \subseteq A$ as input, along with the associated weights. We have a guarantee that $\bigcup_{v \in V} A_v = A$, and the sets $A_v$ may overlap.
    \item Each element $x \in A$ is decorated with additional data $M(x)$ of $O(\log n)$ bits, and given $M(A')$ for $A' \subseteq A$, a node $v$ can locally compute if $A'$ is independent in $\mathcal{M}$.
\end{itemize}
As output, all nodes should learn the maximum-weight basis $B$. Note that since negative weights are allowed and all bases have the same size, this is equivalent to finding a \emph{minimum-weight} matroid basis. 

\begin{theorem}[\citep{peleg1998distributed}]\label{thm:matroid-basis}
The distributed maximum matroid basis problem over $\mathcal{M}$ can be solved in $O(\alpha + D)$ rounds, where $\alpha$ is the size of bases of $\mathcal{M}$.
\end{theorem}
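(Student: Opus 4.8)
The plan is to realise the classical greedy matroid algorithm as a pipelined convergecast on a spanning tree. Recall that the greedy algorithm, which processes the elements of $A$ in order of decreasing weight and adds an element to the current independent set whenever this preserves independence, produces exactly the maximum-weight basis; its correctness is guaranteed by the augmentation property in the definition of a matroid. The goal is to execute this process distributively so that the pipeline depth contributes $O(D)$ and the number of forwarded elements per tree edge contributes $O(\alpha)$.

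First I would construct a rooted spanning tree of the communication graph in $O(D)$ rounds via \lemmaref{lemma:primitives}a. Each node $v$ then locally runs the greedy algorithm on its own set $A_v$, discarding every element outside the resulting local basis. The justification for this pruning --- and the conceptual heart of the proof --- is the following monotonicity fact: if an element $x \in S$ is rejected by greedy on a set $S$, then $x$ lies in the span of the higher-weight elements of $S$, and since the span is monotone this remains true for any superset $A \supseteq S$; hence $x$ is rejected by greedy on $A$ as well and can never belong to the global maximum-weight basis. Consequently, for any sets $X, Y$ the maximum-weight basis of $X \cup Y$ equals the maximum-weight basis of $B_X \cup B_Y$, where $B_Z$ denotes the greedy basis of $Z$.

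Next I would run a convergecast in which every node emits the elements it keeps in order of decreasing weight, one $O(\log n)$-bit element (carrying its decoration $M(x)$) per round. An internal node merges the decreasing streams arriving from its children with its own local basis and maintains a running partial basis: when the next-heaviest incoming element arrives, the node tests --- using the stored data $M(\cdot)$ --- whether it extends the current partial basis, forwarding it upward if so and discarding it otherwise. By the monotonicity fact, the elements a node forwards are precisely the greedy basis of the union of its subtree's inputs, so by induction over the tree the root obtains the global maximum-weight basis $B$; in particular every node forwards at most $\alpha$ elements. Finally, the root broadcasts $B$ down the tree, which costs $O(\alpha + D)$ rounds by \lemmaref{lemma:primitives}b, after which all nodes know $B$.

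The main obstacle I anticipate is the timing analysis of the pipelined convergecast rather than its correctness. Because each node receives up to one element per child per round but can forward only one element per round along its parent edge, one must argue that the per-edge traffic stays bounded by $O(\alpha)$ and that merging several decreasing streams (including detecting when a child's stream has ended) introduces only constant overhead, so that the whole convergecast terminates in $O(\alpha + D)$ rounds. Establishing this standard-but-delicate pipelining bound, and verifying that the local independence tests never require more than the $O(\log n)$-bit decorations $M(\cdot)$ already available at the node, is where the real work lies.
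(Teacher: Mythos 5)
The paper never proves this theorem itself --- it is imported verbatim from Peleg~\cite{peleg1998distributed} --- so the only meaningful comparison is against that cited source, and your proposal is essentially a faithful reconstruction of Peleg's own argument: the greedy algorithm realised as an elimination upcast on a spanning tree, with local pruning justified by monotonicity of the span (rejected elements stay rejected under supersets), on-the-fly greedy filtering of merged decreasing streams, and a final pipelined broadcast of the basis. The timing argument you defer (per-edge traffic bounded by the rank, constant merging overhead, hence $O(\alpha + D)$ total) is precisely what Peleg's pipelined-upcast lemma establishes, so you have correctly identified both the structure of the proof and the one technically delicate step.
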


\subsection{Cycle and cut properties}
We make use of the following well-known cycle and cut properties of spanning trees.

\begin{lemma}\label{lemma:cycle-and-cut-properties}
  Suppose the weights of the graph $G$ are unique. Then the following hold:
  \begin{itemize}
    \item Cycle property: For any cycle $C$, the heaviest edge of $C$ is not in minimum-weight spanning tree~of~$G$.

    \item Cut property: For any set $X \subseteq V$, the lightest edge between $X$ and $V \setminus X$ is in the minimum-weight spanning tree of $G$.
  \end{itemize}
\end{lemma}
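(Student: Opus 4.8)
The plan is to prove both statements by the standard exchange argument, in each case reasoning by contradiction and using the uniqueness of edge weights exactly to turn a weak inequality into a strict one. Throughout, let $T$ denote a minimum-weight spanning tree of $G$, and write $w(F) = \sum_{e \in F} w(e)$ for the total weight of an edge set $F$.

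For the cycle property, I would assume for contradiction that the heaviest edge $e$ of some cycle $C$ lies in $T$. Deleting $e$ from $T$ splits it into two components; let $X$ be the vertex set of one of them, so that $e$ is an edge between $X$ and $V \setminus X$. Since $C$ is a cycle passing through $e$, traversing $C$ must cross back from $X$ to $V \setminus X$, so $C$ contains a second edge $e' \neq e$ between $X$ and $V \setminus X$. Because $e$ is the strictly heaviest edge of $C$ (weights are unique), we have $w(e') < w(e)$. Then $T' = (T \setminus \{e\}) \cup \{e'\}$ reconnects the two components into a spanning tree with $w(T') = w(T) - w(e) + w(e') < w(T)$, contradicting the minimality of $T$.

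For the cut property, I would argue symmetrically. Let $e$ be the lightest edge between $X$ and $V \setminus X$, and suppose for contradiction that $e \notin T$. Adding $e$ to $T$ creates a unique cycle $C$; since $e$ crosses the cut $(X, V \setminus X)$, the cycle $C$ must contain at least one further edge $e'$ crossing that cut. As $e$ is the strictly lightest crossing edge, $w(e) < w(e')$, so $T'' = (T \setminus \{e'\}) \cup \{e\}$ is again a spanning tree with $w(T'') = w(T) - w(e') + w(e) < w(T)$, the same contradiction.

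The only point requiring care is the combinatorial claim, used in both parts, that the relevant cycle contains a second edge crossing the cut. This follows from the elementary fact that any cycle meets any cut in an even number of edges; hence once one crossing edge is present, another must exist. The remaining steps are the routine exchange manipulation, and it is precisely the uniqueness of weights that upgrades the inequalities $w(e') \le w(e)$ and $w(e) \le w(e')$ to strict ones, ruling out the degenerate case of ties and yielding the strict decrease in total weight that produces the contradiction.
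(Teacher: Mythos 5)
Your proof is correct. Note that the paper itself gives no proof of this lemma --- it is stated as a well-known fact about minimum spanning trees --- so there is nothing to compare against; your argument is the standard exchange (swap) proof, and both halves are sound: the spanning-tree swaps $T' = (T \setminus \{e\}) \cup \{e'\}$ and $T'' = (T \setminus \{e'\}) \cup \{e\}$ are justified, the existence of a second cut-crossing edge on the cycle is correctly invoked, and the uniqueness of weights is used exactly where it is needed to make the weight inequalities strict.
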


\subsection{Maintaining a minimum spanning tree}\label{subsec:maintaintree}

Let $G_1 = (V,E, w_1)$ and $G_2 = (V,E,w_2)$ be the graph before and after the $\alpha$ edge weight changes.
Since each edge is uniquely labelled with the identifiers of the end points, we can define a global total order on all the edge weights, where edges are ordered by weight and any equal-weight edges are ordered by the edge identifiers.
Let $T^*_1$ and $T^*_2$ be the unique minimum-weight spanning trees of $G_1$ and $G_2$, respectively.

\paragraph{Communicated messages.} We now assume that each \emph{communicated} edge $e$ is decorated with the tuple
  $M(e) = \left( \mathcal{L}\restriction_{\{e\}}, w_1(e), w_2(e) \right)$.
For a set $E$, we write $M(E) = \{ M(e) \colon e \in E \}$.
  Note that for any edges $e,e' \in E$, the information $M(e)$ and $M(e')$ suffice to compute $\mathcal{L}' \restriction_{e'}$ after either a $\mathsf{join}(\mathcal{L},e)$ or $\mathsf{cut}(\mathcal{L},e)$ operation on $\mathcal{L}$, by \lemmaref{lemma:local-updates}.
  Since $M(e)$ can be encoded in $O(\log n)$ bits, the message encoding $M(e)$ can be communicated via an edge in $O(1)$ rounds.

\paragraph{Overview of the algorithm.}
The algorithm heavily relies on using a BFS tree $\mathcal{B}$ of the communication graph $G$ as a broadcast tree, given by \lemmaref{lemma:primitives}. Without loss of generality, observe that we can first process at most $\alpha$ weight \emph{increments} and then up to $\alpha$ weight \emph{decrements} afterwards.
On a high-level, the algorithm is as follows:
\begin{enumerate}
\item Let $E^+ = \{ e : w_2(e) > w_1(e) \}$ and $E^- = \{ e : w_2(e) < w_1(e) \}$.


\item Solve the problem on the graph $G_1'$ obtained from $G_1$ by changing only the weights in $E^+$.

\item Solve the problem on the graph $G_2$ obtained from $G_1'$ by changing the weights in $E^-$.

\end{enumerate}
We show that Steps (2)--(3) can be done in $O(\alpha + D)$ rounds, which yields the following result.

\begin{theorem}\label{thm:matroid-basis-algorithm}
There is an algorithm for minimum-weight spanning trees in the batch dynamic \congest model that runs in $O(\alpha + D)$ rounds and uses $O(\log n)$ bits per node to store the auxiliary state.
\end{theorem}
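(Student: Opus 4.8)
The plan is to implement the two-phase skeleton from the overview by reducing each phase to a single distributed maximum matroid basis computation of rank $O(\alpha)$, so that \theoremref{thm:matroid-basis} runs in $O(\alpha + D)$ rounds, and then to realise the resulting edit of the spanning tree as $O(\alpha)$ $\mathsf{cut}$ and $\mathsf{join}$ operations on the Eulerian tour forest that every node can apply by purely local computation. The only global communication will be (i)~constructing a BFS broadcast tree and (ii)~pipelined broadcasts of the $O(\alpha)$ messages $M(\updated)$, each costing $O(\alpha + D)$ rounds by \lemmaref{lemma:primitives}; after this, all changed edges, together with their Eulerian labels, become common knowledge. By \lemmaref{lemma:mst-auxiliary} every node first recovers $\mathcal{L}\restriction_{E(v)}$ (hence its Eulerian-tour entry) from the stored tuple $(r(v),p(v),\lambda(v))$ in $O(1)$ rounds.

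For the increment phase, I would first use the cut property of \lemmaref{lemma:cycle-and-cut-properties} to show that every tree edge of unchanged weight survives: it was the strict minimum of its fundamental cut, and raising other weights preserves this. Thus $T^*_1\setminus E^+$ fixes a partition of $V$ into at most $\alpha+1$ components, and the only freedom is to reconnect them by a minimum-weight edge set---exactly the maximum-weight basis (under negated $w_2$) of the graphic matroid obtained by contracting each component, whose rank is the number of components minus one, i.e.\ at most $\alpha$. Each node names its component by the nested Eulerian-tour interval containing its entry (read off from the broadcast labels of the cut edges $E^+\cap T^*_1$) and learns its neighbours' components in $O(1)$ rounds; independence of a candidate set is then acyclicity on $\le\alpha+1$ supernodes, which is checkable locally from the component-pair decorations, so \theoremref{thm:matroid-basis} applies.

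The decrement phase, run on the post-increment tree $T'$ and its tour forest $\mathcal{L}'$, is the crux. By the cycle property, every non-tree edge of unchanged weight stays out of the optimum, so $T^*_2\subseteq H:=T'\cup(E^-\setminus T')$, a graph with at most $n-1+\alpha$ edges; its MST is the complement of the \emph{maximum-weight basis of the cographic matroid of $H$}, and this basis---the set of edges to delete---has rank $|E^-\setminus T'|\le\alpha$. The decisive observation is that cographic independence is locally checkable through fundamental cycles: a set $S$ is coindependent iff its columns of the fundamental-cycle matrix of $H$ with respect to $T'$ are independent over $\mathbb{F}_2$, and the column of an edge $e$ records which of the $\le\alpha$ candidates $f\in E^-\setminus T'$ have $e$ on their tree path. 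Since deleting $e$ splits $T'$ along an Eulerian-tour interval, the predicate ``$e\in\mathrm{path}(f)$'' is merely the test whether exactly one endpoint of $f$ lies in that interval, computable from $M(e)$ and the broadcast endpoints of $E^-$; hence the independence oracle is answered with no further communication and \theoremref{thm:matroid-basis} again gives $O(\alpha+D)$ rounds.

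It remains to update the auxiliary state. Each phase changes at most $\alpha$ tree edges, which I would apply as all $\mathsf{cut}$s of departing edges followed by all $\mathsf{join}$s of entering edges. By \lemmaref{lemma:local-updates}, the new Eulerian labels of any fixed edge after one operation depend only on the current labels of that edge and of the operation edge; since all $O(\alpha)$ operation edges and their $M$-data are common knowledge, every node can locally simulate the whole sequence while tracking only the edges incident to it, and thereby compute its final $\lambda(v)$, $p(v)$ and the shared root $r(v)$---storing only the $O(\log n)$ bits of \lemmaref{lemma:mst-auxiliary}. Composing the two phases, each $O(\alpha+D)$ rounds, produces $\mathcal{L}$ on $T^*_2$ and proves the theorem. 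I expect the main obstacle to be exactly the decrement phase: recognising the deletions as a \emph{cographic} basis problem of rank $O(\alpha)$ and certifying that its independence oracle is faithfully and locally realisable via Eulerian-tour intervals so that Peleg's algorithm can be invoked.
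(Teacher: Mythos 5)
Your proposal is correct, and its skeleton coincides with the paper's: split the batch into increments and decrements, reduce each phase to a rank-$O(\alpha)$ distributed matroid basis problem solved by \theoremref{thm:matroid-basis} after an $O(\alpha+D)$-round pipelined broadcast of $M(\updated)$, and then have every node replay the same sequence of $\mathsf{cut}$/$\mathsf{join}$ operations locally (via \lemmaref{lemma:local-updates}) to refresh its $O(\log n)$-bit auxiliary state. The two matroids you pick are exactly the paper's: the contraction of the graphic matroid for increments (\lemmaref{lemma:component-mst-matroid}) and the dual (cographic) matroid for decrements (\lemmaref{lemma:j-matroid-basis}). Where you genuinely diverge is in the realisation of the independence oracles, which is the main technical content of the paper's proof. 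For increments, the paper (\lemmaref{lemma:v-local-independent}) simulates the forest operationally --- cut the edges of $E^+$, join the candidate edges one at a time, and reject upon finding coinciding roots --- whereas you name components by nested Euler-tour intervals and test acyclicity on supernodes; these are essentially equivalent. For decrements, the paper (\lemmaref{lemma:mst-dual-matroid}) again argues operationally: $X$ is independent iff $B \setminus X$ still spans, checked by cutting $X \cap T_1^*$, re-joining $E^- \setminus X$, and counting roots. You instead invoke the $\mathbb{F}_2$-representation of the cographic matroid by the fundamental-cycle matrix, with columns of tree edges computed by the ``exactly one endpoint of $f$ inside the interval of $e$'' test; this is a correct characterisation (the row space of the fundamental-cycle matrix is the cycle space, the orthogonal complement of the cut space representing the graphic matroid), and the tour labels in $M(\cdot)$ do supply exactly the interval data needed. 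The trade-off: your route is more algebraic, makes the rank bound $|E^-\setminus T'|\le\alpha$ transparent, and answers the oracle by pure linear algebra, but it imports a fact from matroid representation theory; the paper's check is self-contained, reusing only the already-proven Eulerian tour forest operations. Either oracle plugs into Peleg's algorithm identically, so both yield the claimed $O(\alpha+D)$ rounds and $O(\log n)$ space.
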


\subsection{Handling weight increments}

We now design an algorithm that works in the case $|E^+| \le \alpha$ and $E^- = \emptyset$. That is,
the new input graph $G_2$ differs from $G_1$ by having only the weights of edges in $E^+$ incremented.
Let $T^*_1$ and $T^*_2$ be the minimum spanning trees of $G_1$ and $G_2$, respectively.
Note that $F = T^*_1 \setminus E^+$ is a forest on $G_1$ and $G_2$ and $w_1(F)=w_2(F)$, splitting the graph into connected components.
Let $A^* \subseteq E \setminus F$ be the \emph{lightest} set of edges connecting the components of $F$ under weights $w_2$.

\begin{lemma}\label{lemma:increment-fix}
  The spanning tree $F \cup A^*$ is the minimum-weight spanning tree of $G_2$.
\end{lemma}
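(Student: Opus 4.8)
The plan is to establish the two facts $F \subseteq T^*_2$ and $A^* = T^*_2 \setminus F$ separately, from which $T^*_2 = F \cup A^*$ is immediate. Throughout I rely on the global total order on edge weights (weight first, ties broken by edge identifiers), which makes both the minimum-weight spanning tree and the notion of ``lightest edge across a cut'' unique.

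First I would show that every edge of $F$ survives into $T^*_2$, i.e.\ $F \subseteq T^*_2$, via the cut property (\lemmaref{lemma:cycle-and-cut-properties}). Fix $e \in F$. Since $e \in T^*_1$, deleting $e$ from $T^*_1$ partitions $V$ into parts $X$ and $V \setminus X$, and a standard exchange argument shows that $e$ is the unique lightest edge of $G_1$ crossing this fundamental cut: otherwise swapping a lighter crossing edge into $T^*_1 - e$ would yield a spanning tree lighter than $T^*_1$. Now pass to $w_2$. Because $e \notin E^+$ we have $w_2(e) = w_1(e)$, while every other crossing edge $f$ satisfies $w_2(f) \ge w_1(f)$ (weights only increase). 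Hence, in the total order, the comparison of $e$ against each competing crossing edge either stays fixed or shifts further in $e$'s favour, so $e$ remains the unique lightest edge crossing $(X, V \setminus X)$ under $w_2$. The cut property then gives $e \in T^*_2$.

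Next, having $F \subseteq T^*_2$, set $B = T^*_2 \setminus F$. Since $T^*_2$ is a spanning tree containing the forest $F$, the edges of $B$ lie in $E \setminus F$ and join distinct components of $F$; equivalently $B$ is a spanning tree of the graph obtained from $G_2$ by contracting each component of $F$ to a supernode. I claim $B$ is the minimum-weight connecting set under $w_2$: if some connecting set $B'$ had $w_2(B') < w_2(B)$, then $F \cup B'$ would be a spanning tree of $G_2$ of weight $w_2(F) + w_2(B') < w_2(F) + w_2(B) = w_2(T^*_2)$, contradicting minimality of $T^*_2$. Since $A^*$ is defined as the lightest such connecting set and the total order makes it unique, we conclude $B = A^*$, and therefore $T^*_2 = F \cup B = F \cup A^*$. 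A final bookkeeping check confirms $F \cup A^*$ is genuinely a spanning tree, as $|F| + |A^*| = (n-k) + (k-1) = n-1$, where $k$ is the number of components of $F$.

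The delicate point is the first step. One must use the total order to ensure $e$ is \emph{strictly} lightest across its fundamental cut under $w_1$, so that the merely weak inequality $w_2(f) \ge w_1(f)$ for incremented edges still preserves a strict comparison under $w_2$ (in the tie case, the identifier tiebreak continues to favour $e$). Everything after that is a routine matroid-exchange/contraction argument, which is precisely what makes the set $A^*$ computable by the maximum matroid basis routine of \theoremref{thm:matroid-basis} in the subsequent algorithm.
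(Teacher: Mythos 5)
Your proof is correct, and it shares the paper's overall skeleton --- first prove $F \subseteq T^*_2$, then argue that $T^*_2 \setminus F$ must be the lightest set connecting the components of $F$ and hence equals $A^*$ --- but the key step is executed via a different cut-property argument. The paper proceeds by contradiction: assuming $e \in F \setminus T^*_2$, it applies the cut property to the cut induced by the component of $F \setminus \{e\}$ containing an endpoint of $e$, extracts an edge $f \in T^*_2$ with $w_1(f) < w_1(e)$, and then exchanges $f$ for $e$ in $T^*_1$ to contradict the minimality of $T^*_1$. You instead apply the cut property directly to the fundamental cut of $e$ with respect to $T^*_1$: since $w_2(e) = w_1(e)$ and all other weights can only increase, $e$ remains the (total-order) lightest edge across that cut under $w_2$, so $e \in T^*_2$ with no contradiction needed. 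Your choice of cut buys something concrete: in the paper's argument the witness $f$ crosses a cut defined by a component of $F \setminus \{e\}$, which is in general a proper subset of the component of $T^*_1 \setminus \{e\}$, so $f$ need not reconnect the two sides of $T^*_1 \setminus \{e\}$, and the claim that $(T^*_1 \setminus \{ e \}) \cup \{ f \}$ is a spanning tree requires additional justification; in your version the exchange underlying the uniqueness claim happens across the fundamental cut itself, where it is automatic. Your explicit treatment of the tie-breaking order --- checking that the weak inequalities $w_2(f) \ge w_1(f)$ preserve the strict total-order comparison in $e$'s favour --- is also a point the paper glosses over. The one spot where you are terser than ideal is the uniqueness of the lightest connecting set, which you invoke to conclude $T^*_2 \setminus F = A^*$; this is standard for matroids with distinct element weights (it is the matroid view formalised in \lemmaref{lemma:component-mst-matroid}), and the paper is equally brief on it.
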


\begin{proof}
  Suppose there exists some edge $e \in F \setminus T^*_2$. Let $u$ be a node incident to $e$ and let $S \subseteq V$ be the set of nodes in the connected component of $u$ in $F \setminus \{ e \}$. By the \emph{cut property} given in \lemmaref{lemma:cycle-and-cut-properties}, the lightest edge $f$ (with respect to $w_2$) in the cut
 between $S$ and $V \setminus S$ is in the minimum spanning tree $T^*_2$.
  Since $e \notin T^*_2$ and $f \in T^*_2$, we have that $w_2(f) < w_2(e)$. By definition, $e \in F$ implies that $e \notin E^+$, and hence,
  \[
  w_1(f) \le w_2(f) < w_2(e) = w_1(e).
  \]
  Thus, there exists a spanning tree $T' = (T^*_1 \setminus \{ e \}) \cup \{ f \}$ such that $w_1(T') < w_1(T^*_1)$. But by definition of $F$, we have $e \in F \subseteq T^*_1$, which is a contradiction. Hence, $F \subseteq T^*_2$.
  Since $F \subseteq T^*_2$ is a forest and $A^*$ is the lightest set of edges that connects the components of $F$, the claim follows.
\end{proof}

We show that the set $A^*$ can be obtained as a solution to a minimum matroid basis problem, and thus can be computed in $O(\alpha + D)$ communication rounds. 
In the following, we assume that the auxiliary state encodes an Eulerian tour forest $\mathcal{L}$ on $T^*_1$. We first show that $A^*$ is a minimum-weight basis of an appropriately chosen matroid. Let $A$ be the set of \emph{all} edges that connect components of $F$ in $G_2$.

\begin{lemma}\label{lemma:component-mst-matroid}
  Let $\mathcal{I} = \{ I \subseteq A \colon F \cup I \text{ is acyclic on $G_2$} \}$. Then $\mathcal{M} = (A, \mathcal{I})$ is a matroid and the minimum-weight basis of $\mathcal{M}$ is $A^*$.
\end{lemma}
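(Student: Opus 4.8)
The plan is to recognise $\mathcal{M}$ as the graphic (cycle) matroid of the graph obtained from $G_2$ by contracting each connected component of the forest $F$ to a single vertex; call this contracted multigraph $G_2 / F$. First I would observe that, under this contraction, an edge of $G_2$ becomes a self-loop precisely when both its endpoints lie in the same component of $F$, and survives as a genuine edge precisely when it joins two distinct components. Hence the non-loop edges of $G_2 / F$ are exactly the elements of $A$. Moreover, for $I \subseteq A$ the set $F \cup I$ is acyclic in $G_2$ if and only if $I$ contains no cycle in $G_2/F$: any cycle of $F \cup I$ must use at least one edge of $I$ (as $F$ alone is acyclic), and contracting the $F$-parts leaves a cycle among $I$-edges in $G_2/F$; conversely, a cycle of $I \subseteq A$ in $G_2/F$ lifts to a cycle of $F \cup I$ through paths inside the contracted trees. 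Thus $\mathcal{I}$ is exactly the family of forests of $G_2/F$, so $\mathcal{M}$ is the cycle matroid of $G_2/F$ and in particular a matroid.

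Should a self-contained verification be preferred to invoking the theory of graphic matroids, I would instead check the two axioms directly. Non-emptiness and closure under subsets are immediate: $\emptyset \in \mathcal{I}$ since $F$ is a forest, and any subset of an acyclic edge set is acyclic. For the \emph{augmentation property}, given $I_1, I_2 \in \mathcal{I}$ with $\size{I_1} > \size{I_2}$, I would compare the forests $F \cup I_1$ and $F \cup I_2$: a forest on $n$ vertices with $k$ edges has $n-k$ components, so $F \cup I_1$ has strictly fewer components than $F \cup I_2$. Consequently some edge $e \in I_1$ joins two distinct components of $F \cup I_2$; this $e$ lies in $I_1 \setminus I_2$ (it cannot be in $F \subseteq F \cup I_2$, nor in $I_2$), and $F \cup (I_2 \cup \{e\})$ remains acyclic, whence $I_2 \cup \{e\} \in \mathcal{I}$.

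For the second claim I would identify the bases of $\mathcal{M}$ with the spanning trees of $G_2/F$. A basis $B$ is a maximal $I$ with $F \cup I$ acyclic; since the communication graph $G$ --- and hence $G_2$ --- is connected and $A$ consists of \emph{all} inter-component edges, the contraction $G_2/F$ is connected, so maximality forces $F \cup B$ to be a spanning tree of $G_2$, i.e. $B$ to be a spanning tree of $G_2/F$. The weight of a basis is $w_2(B) = \sum_{e \in B} w_2(e)$, so the minimum-weight basis is precisely the minimum-weight spanning tree of $G_2/F$, which by definition is the lightest set of edges connecting the components of $F$ under $w_2$ --- that is, $A^*$. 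Uniqueness of $A^*$ is guaranteed because the identifier-tie-broken weights are all distinct, so the minimum spanning tree of $G_2/F$ is unique.

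I do not expect a serious obstacle: the content is essentially the standard fact that contracting a forest in a graphic matroid yields a graphic matroid whose minimum-weight basis is an MST. The only points demanding genuine care are definitional bookkeeping: verifying that the elements of $A$ coincide with the non-loop edges of $G_2/F$ (so that $\mathcal{I}$ is the forest family of the \emph{contraction} and not of some larger edge set), and invoking connectivity of $G_2$ to ensure that maximal independent sets are genuine spanning trees of $G_2/F$ rather than proper spanning forests, so that every basis indeed connects all components of $F$.
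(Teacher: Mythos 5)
Your proposal is correct and takes essentially the same route as the paper: the paper likewise identifies $\mathcal{M}$ as the contraction of the graphical matroid (delegating the verification to a textbook citation) and identifies bases $B$ with spanning trees $F \cup B$ of $G_2$. The only cosmetic differences are that the paper deduces the minimum-weight-basis claim from the previously established fact that $F \cup A^*$ is the unique minimum spanning tree of $G_2$ (\lemmaref{lemma:increment-fix}), whereas you argue directly from the definition of $A^*$ as the lightest set connecting the components of $F$, and your explicit check of the matroid axioms is a self-contained bonus the paper omits.
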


\begin{proof}
We note that $\mathcal{M}$ is matroid, as it's the contraction of the graphical matroid on $G$ (see e.g.~\cite[Part IV: Matroids and Submodular Functions]{schrijver-book}%
). Moreover, for any basis $B \in \mathcal{I}$, the set $F \cup B$ is a spanning tree on $G_2$ with weight $w_2(F) + w_2(B)$. Since $A^* \in \mathcal{I}$ and $F \cup A^*$ is the unique minimum spanning tree on $G_2$, it follows that $A^*$ is the minimum-weight basis for $\mathcal{M}$.
\end{proof}

To apply the minimum matroid basis algorithm of Theorem~\ref{thm:matroid-basis-algorithm}, we next show that nodes can locally compute whether a set is independent in the matroid $\mathcal{I}$, given appropriate information.

\begin{lemma} 
  Assume a node $v$ knows $M(E^+)$ and $M(X)$ for a set $X \subseteq A$.
  Then $v$ can locally determine if $X$ is independent in $\mathcal{M}$. \label{lemma:v-local-independent}
\end{lemma}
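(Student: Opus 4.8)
The plan is to reduce the independence test to checking acyclicity in the graph obtained from $G_2$ by contracting each connected component of $F$, and then to argue that node $v$ can reconstruct this contracted graph purely locally from $M(E^+)$ and $M(X)$. By \lemmaref{lemma:component-mst-matroid}, $X$ is independent in $\mathcal{M}$ exactly when $F \cup X$ is acyclic on $G_2$. Since $F$ is itself a forest, every cycle of $F \cup X$ must use an edge of $X$, and contracting each component of $F$ to a single super-node shows that $F \cup X$ is acyclic if and only if the image of $X$ is a forest on the super-nodes (two edges of $X$ between the same pair of components already form a cycle, so parallel images must be forbidden as well). Hence it suffices for $v$ to learn, for each edge $\{a,b\} \in X$, the identities of the components of $F$ containing $a$ and $b$; the subsequent forest test on the resulting multigraph is then a local computation.

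To recover the component structure of $F$, I exploit that $F = T^*_1 \setminus E^+$ and that the auxiliary state encodes an Eulerian tour forest $\mathcal{L}$ on $T^*_1$. An Eulerian tour forest $\mathcal{L}_F$ on $F$ is obtained from $\mathcal{L}$ by applying $\mathsf{cut}(\cdot, e)$ for every tree edge $e \in E^+ \cap T^*_1$; node $v$ recognises exactly these edges since $M(e)$ contains $\mathcal{L}\restriction_{\{e\}}$ and $e \in T^*_1$ iff its Eulerian labels are finite (edges of $E^+ \setminus T^*_1$ are simply ignored, as they do not affect $F$). Enumerating $E^+ \cap T^*_1$ as $e_1, \dots, e_k$ and setting $\mathcal{L}_0 = \mathcal{L}$ and $\mathcal{L}_i = \mathsf{cut}(\mathcal{L}_{i-1}, e_i)$, the forest $F$ is spanned by $\mathcal{L}_k$ regardless of the order of cuts. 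By \lemmaref{lemma:local-updates}, the effect of cutting $e_i$ on the restricted labels of any edge $f$ depends only on the current restricted labels of $e_i$ and of $f$. Therefore, if $v$ carries along the running restricted labels of all edges in the bounded set $E^+ \cup X$ and updates them after each cut, it can compute $\mathcal{L}_k\restriction_X$ from $M(E^+)$ and $M(X)$ alone, with no communication.

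Finally, from $\mathcal{L}_k\restriction_X$ node $v$ reads off, for each $x = \{a,b\} \in X$, the roots that $\mathcal{L}_k$ assigns to $a$ and to $b$; these root identifiers name the components of $F$, and since $x \in A$ joins two distinct components they differ. Node $v$ then forms the multigraph on the occurring roots with one edge per element of $X$ and declares $X$ independent if and only if this multigraph is acyclic, which is correct by the reduction above. The main obstacle is the bookkeeping for the iterated use of \lemmaref{lemma:local-updates}, which is stated for a single operation and a single target edge: I must ensure that at every step the labels needed to apply the next cut are still available. This is exactly guaranteed by the observation that each cut acts on an edge through that edge's own current labels together with the cut edge's current labels, so maintaining the labels of the $O(\alpha)$-sized set $E^+ \cup X$ throughout the simulation is both necessary and sufficient, and no global knowledge of $\mathcal{L}$ is ever invoked.
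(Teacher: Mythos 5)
Your proposal is correct and takes essentially the same route as the paper's proof: recover $\mathcal{L}\restriction_{E^+ \cup X}$ from $M(E^+)$ and $M(X)$, apply $\mathsf{cut}$ to the changed tree edges (iterating \lemmaref{lemma:local-updates} while carrying the restricted labels of $E^+ \cup X$) to obtain an Eulerian tour forest on $F$, and then decide whether adding $X$ creates a cycle. The only cosmetic difference is the final test --- the paper inserts the edges of $X$ one by one via $\mathsf{join}$ with a root comparison before each insertion, whereas you read off the root identifiers once and check acyclicity of the contracted multigraph locally --- and these are equivalent; your explicit treatment of $E^+ \setminus T_1^*$ (detected via infinite labels and ignored) is in fact slightly more careful than the paper's phrasing ``apply $\mathsf{cut}$ for each $e \in E^+$''.
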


\begin{proof}
Recall that $\mathcal{L}$ is the fixed Eulerian tour forest on $T_1^*$ encoded by the auxiliary data of the nodes and messages $M(e)$. By definition, node $v$ can obtain $\mathcal{L} \restriction_{ E^+ \cup X}$ from $M(E^+)$ and $M(X)$. Let $X = \{ e_1, e_2, \dotsc, e_k \}$. To check that $X$ is independent, i.e.\ $F \cup X$ is a forest, node $v$ uses the following procedure:
   \begin{enumerate}
     \item Let $\mathcal{L}_0\restriction_{E^+ \cup X}$ be the Eulerian tour forest on $F$ obtained from $\mathcal{L} \restriction_{E^+ \cup X}$ by applying the $\mathsf{cut}$ operation for each $e \in E^+$ in sequence.
     \item For $i \in \{1, \ldots, k \}$ do the following:
       \begin{enumerate}[label=(\alph*)]
       \item Determine from $\mathcal{L}_{i-1}\restriction_{e_i}$ if the endpoints $u$ and $w$ of $e_i$ have the same root, i.e. $r(u) = r(w)$. If this is the case, then $F \cup \{ e_1, e_2, \dotsc, e_i \}$ has a cycle, and node $v$ outputs that $X$ is not independent and halts. 
       \item Compute $\mathcal{L}_i \restriction_X = \mathsf{join}(\mathcal{L}_{i-1}  \restriction_X, e_i)$.
       \end{enumerate}
    \item Output that $X$ is independent.
   \end{enumerate}
If $X$ is not independent, then $F \cup X$ has a cycle and algorithm will terminate in Step 2(a). Otherwise, $F \cup X$ is a forest, and the algorithm will output that $X$ is independent.
\end{proof}

\paragraph{Algorithm for handling weight increments.}
The algorithm for maintaining minimum spanning trees under weight increments is now as follows:
\begin{enumerate}
\item Each node $v$ computes its local Euler tour forest labelling $\mathcal{L} \restriction_{E(v)}$ from the auxiliary state $x_1(v)$.

\item Broadcast $M(e)$ for each $e \in E^+$ using the broadcast tree $\mathcal{B}$ given by \lemmaref{lemma:primitives}.
  
\item Use the minimum matroid basis algorithm over $\mathcal{M}$ to compute $A^*$.

\item Each node $v$ locally computes $\mathcal{L}_1 \restriction_{E(v)}$ by applying the $\mathsf{cut}$ operation on each edge in $E^+ \setminus A^*$ in lexicographical order, starting from $\mathcal{L} \restriction_{E(v)}$.
\item Each node $v$ locally computes $\mathcal{L}_2 \restriction_{E(v)}$ by applying the $\mathsf{join}$ operation on each edge in $A^* \setminus E^+$ in lexicographical order, starting from $\mathcal{L}_1 \restriction_{E(v)}$.
\item Each node $v$ outputs local auxiliary state $x_2(v)$ corresponding to $\mathcal{L}_2$.
\end{enumerate}

\begin{lemma}
The above algorithm solves batch dynamic minimum-weight spanning trees under edge weight increments in $O(\alpha + D)$ rounds.
\end{lemma}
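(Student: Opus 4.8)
The plan is to verify that each of the six steps of the algorithm is correct and that the total round complexity is $O(\alpha + D)$. Correctness is essentially already packaged into the preceding lemmas, so the proof is mostly an accounting exercise that stitches them together; the round-complexity bound is where the genuine care is needed.

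First I would establish correctness. Step (1) is local and correct by \lemmaref{lemma:mst-auxiliary}, which guarantees each node can recover $\mathcal{L}\restriction_{E(v)}$ from $x_1(v)$ in $O(1)$ rounds. After the broadcast in Step (2), every node knows $M(E^+)$, so by \lemmaref{lemma:v-local-independent} every node can locally test independence in $\mathcal{M}$; this is exactly the oracle access required to run the maximum matroid basis algorithm of \theoremref{thm:matroid-basis}. By \lemmaref{lemma:component-mst-matroid}, $\mathcal{M}=(A,\mathcal{I})$ is a matroid whose minimum-weight basis is $A^*$, so Step (3) indeed computes $A^*$ and all nodes learn it. Finally, by \lemmaref{lemma:increment-fix}, $F \cup A^* = (T^*_1 \setminus E^+) \cup A^*$ is precisely $T^*_2$, the new minimum spanning tree; Steps (4)--(5) transform the stored Eulerian tour forest $\mathcal{L}$ on $T^*_1$ into the forest $\mathcal{L}_2$ on $T^*_2$ by first cutting the edges of $E^+$ that leave the tree and then joining the newly selected edges of $A^*$. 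Here I would invoke \lemmaref{lemma:local-updates}: because each node holds $M$ of its incident edges together with the broadcast $M(E^+)$ and $M(A^*)$, it can locally compute the restriction of each intermediate forest to its own edges after every $\mathsf{cut}$ and $\mathsf{join}$, so Steps (4)--(6) require no further communication and produce a valid auxiliary state $x_2(v)$ encoding $\mathcal{L}_2$.

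For the running time, I would tally the steps. Step (1) costs $O(1)$ rounds. The broadcast tree $\mathcal{B}$ is built once in $O(D)$ rounds via \lemmaref{lemma:primitives}a, and broadcasting the $|E^+| \le \alpha$ messages $M(e)$ in pipelined fashion costs $O(\alpha + D)$ rounds by \lemmaref{lemma:primitives}b. The matroid basis call in Step (3) runs in $O(\alpha + D)$ rounds by \theoremref{thm:matroid-basis}, since the basis size of $\mathcal{M}$ is at most the number of components of $F$ minus one, which is $O(\alpha)$ (cutting $|E^+| \le \alpha$ edges from a tree creates at most $\alpha+1$ components). Steps (4)--(6) are purely local and contribute no communication. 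Summing gives $O(\alpha + D)$ rounds, as claimed.

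The main obstacle I anticipate is in Steps (4)--(5): showing that applying the $\mathsf{cut}$ and $\mathsf{join}$ operations \emph{locally and in sequence} actually yields a globally consistent Eulerian tour forest on $T^*_2$. The subtlety is that each $\mathsf{cut}$ or $\mathsf{join}$ changes the labelling of potentially many edges, yet we want every node to track only its own incident edges. The resolution is \lemmaref{lemma:local-updates}, which says that the new label of any edge $f$ after an operation on edge $e$ depends only on $\mathcal{L}\restriction_{\{e,f\}}$; iterating this over the $O(\alpha)$ operations, and noting that each node holds $M$ for all edges of $E^+ \cup A^*$ after the broadcasts, guarantees that the locally-computed $\mathcal{L}_1$ and $\mathcal{L}_2$ agree with the global forest. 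I would also need to check that the fixed processing order (cut all of $E^+ \setminus A^*$, then join all of $A^* \setminus E^+$) keeps the intermediate structures valid forests so that the operations remain well-defined, and that edges in $E^+ \cap A^*$ correctly remain in the tree and need no modification.
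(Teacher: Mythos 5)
Your proposal is correct and takes essentially the same route as the paper's own proof: the same accounting via Lemma~\ref{lemma:mst-auxiliary} for Step~(1), Lemma~\ref{lemma:primitives} for Step~(2), Theorem~\ref{thm:matroid-basis} with Lemma~\ref{lemma:component-mst-matroid} for Step~(3), and correctness of Steps~(4)--(6) from the fact that all nodes apply the same $\mathsf{cut}$/$\mathsf{join}$ operations in the same fixed order. Your write-up is in fact more detailed than the paper's, which omits the explicit bound on the basis size of $\mathcal{M}$ and the appeal to Lemma~\ref{lemma:v-local-independent} and Lemma~\ref{lemma:local-updates} that you correctly identify as the ingredients making Steps~(3)--(5) work.
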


\begin{proof}
By Lemma~\ref{lemma:mst-auxiliary}, Step (1)~of the algorithm can be done in $O(1)$ rounds, and by Lemma~\ref{lemma:primitives}, Step~(2) can be done in $O(\alpha + D)$ rounds. Step~(3) can be implemented in $O(\alpha + D)$ rounds by Theorem~\ref{thm:matroid-basis} and Lemma~\ref{lemma:component-mst-matroid}, and after Step~(3) all nodes have learned the set $A^*$. Since all nodes apply the same operations to the Eulerian tour forest in the same order in Steps~(4) and (5), all nodes produce compatible auxiliary states in Step~(6).
\end{proof}

\subsection{Handling weight decrements}

We now consider the dual case, where $|E^-| \le \alpha$ and $E^+ = \emptyset$.
Let $B = T^*_1 \cup E^-$ and $\mathcal{C}$ be the set of cycles in $B$.
Let $B^* \subseteq B$ be the heaviest edge set such that $B \setminus B^*$ is a spanning tree.

\begin{lemma}
  The spanning tree $B \setminus B^*$ is the minimum spanning tree of $G_2$.
  \label{lemma:st-b-star}
\end{lemma}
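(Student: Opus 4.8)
The plan is to establish \lemmaref{lemma:st-b-star} by arguing that $B = T^*_1 \cup E^-$ is guaranteed to contain the new minimum spanning tree $T^*_2$, and then that greedily removing the heaviest edges (while preserving connectivity) recovers exactly $T^*_2$. This mirrors the structure of the increment case (\lemmaref{lemma:increment-fix}), but in dual form: instead of deleting tree edges and re-adding the lightest replacements, we add the decremented non-tree edges and delete the heaviest redundant edges.

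First I would show the key containment $T^*_2 \subseteq B$. The intuition is that under weight decrements, an edge can only \emph{enter} the minimum spanning tree if its weight dropped, i.e.\ if it lies in $E^-$; any edge not in $E^-$ that belongs to $T^*_2$ must already have belonged to $T^*_1$. Concretely, suppose for contradiction that some edge $e \in T^*_2 \setminus B$, so $e \notin T^*_1$ and $e \notin E^-$, meaning $w_2(e) = w_1(e)$. Consider the cut induced by removing $e$ from $T^*_2$; by the \emph{cut property} (\lemmaref{lemma:cycle-and-cut-properties}), $e$ is the lightest $w_2$-edge across this cut. Since $e \notin T^*_1$, the path in $T^*_1$ connecting the two sides of the cut contains some edge $f$ crossing it with $w_1(f) < w_1(e)$ (cycle property applied to the cycle $e$ closes in $T^*_1$). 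Using $w_2(f) \le w_1(f) < w_1(e) = w_2(e)$ contradicts that $e$ is lightest across the cut under $w_2$. Hence $T^*_2 \subseteq T^*_1 \cup E^- = B$.

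Given $T^*_2 \subseteq B$, the set $B \setminus T^*_2$ consists precisely of the edges we must remove from $B$ to obtain a spanning tree, and since $T^*_2$ is the \emph{minimum}-weight spanning tree while every spanning tree contained in $B$ is obtained by deleting some acyclic-breaking edge set, the set of removed edges $B \setminus T^*_2$ should be the \emph{heaviest} removable set. I would make this precise via the matroid/greedy characterization: the spanning trees inside $B$ are exactly the bases of the graphic matroid restricted to $B$, all bases have the same size, so minimizing $w_2$ over retained bases is equivalent to maximizing $w_2$ over the complementary removed sets. Thus the unique minimum spanning tree $B \setminus B^*$, where $B^*$ is the heaviest edge set whose removal leaves a spanning tree, coincides with $T^*_2$. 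Invoking the cycle property, one checks directly that greedily discarding the heaviest edge on each cycle of $\mathcal{C}$ never removes an edge of $T^*_2$ and always removes an edge of $B \setminus T^*_2$, so the greedy-heaviest removable set is well-defined and equals $B^*$.

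The main obstacle I anticipate is the containment step $T^*_2 \subseteq B$: carefully combining the cut and cycle properties under the two different weight functions $w_1$ and $w_2$ requires tracking which edges changed weight and ensuring the inequality $w_2(f) \le w_1(f)$ is applied only to unchanged or non-decremented edges. The subtlety is that uniqueness of weights (needed for unique minimum spanning trees, as in \lemmaref{lemma:cycle-and-cut-properties}) must be maintained after the decrements, via the tie-breaking total order on edges described in \sectionref{subsec:maintaintree}. Once the containment and the greedy-equals-heaviest claim are in place, the conclusion that $B \setminus B^*$ is the minimum spanning tree of $G_2$ follows by uniqueness of the minimum spanning tree.
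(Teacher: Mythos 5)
Your proposal is correct and follows essentially the same two-step route as the paper: first the containment $T^*_2 \subseteq B$ via an exchange argument (the paper shows that such an $e$ stays heaviest on the cycle it closes in $T^*_1$ and applies the cycle property under $w_2$, whereas you reach the same contradiction through the cut property on $T^*_2$ combined with the cycle property on $T^*_1$ --- a cosmetic difference), and then the complementation identity $w_2(T) = w_2(B) - w_2(B \setminus T)$ showing that minimising the retained tree is equivalent to maximising the removed set, so $B \setminus B^* = T^*_2$. Your closing remark about greedily discarding heaviest cycle edges is redundant given the complementation argument, but harmless.
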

\begin{proof}
  Let $e \in T^*_2$ and suppose $e \notin B = T^*_1 \cup E^-$. Since  $e \notin T^*_1$ the edge $e$ creates a unique cycle $C$ in $T^*_1$. The edge $e$ is the heaviest edge on cycle $C$ under weights $w_1$, as otherwise we would obtain a spanning tree lighter than $T^*_1$ by replacing the heaviest edge on $C$ by $e$. Since we assume no weight increments and $e \notin E^-$, edge $e$ remains the heaviest edge on the cycle $C$ also under the new edge weights $w_2$. Hence, $e \notin T^*_2$ by the cycle property, which contradicts our initial assumption. Thus, $T^*_2 \subseteq B$.

Now consider any spanning tree $T \subseteq B$. All spanning trees have the same number of edges, and we have $w_2(T) = w_2(B) - w_2(B \setminus T)$. Thus, for the minimum spanning tree $T$ the weight $w_2(B \setminus T)$ is maximised. Since the complement of any spanning tree cuts all cycles in $B$, we have $T^*_2 = B \setminus B^*$.
\end{proof}

\begin{lemma}
Let
\[ \mathcal{J} = \{ J \subseteq B \colon B \setminus J  \text{ contains a spanning tree of $B$} \}\,.\]
Then $\mathcal{N} = (B, \mathcal{J})$ is a matroid and the maximum-weight basis of $\mathcal{N}$ is $B^*$.
\label{lemma:j-matroid-basis}
\end{lemma}
\begin{proof}
We have that $\mathcal{N}$ is the dual of the graphical matroid on $(V, B)$, and thus a matroid (see e.g.~\cite[Part IV: Matroids and Submodular Functions]{schrijver-book}%
). Moreover, $B^*$ is the complement of the minimum spanning tree and thus maximum-weight basis of $\mathcal{N}$.
\end{proof}

\begin{lemma}\label{lemma:mst-dual-matroid}
  Assume a node $v$ knows $M(E^-)$ and $M(X)$ for a set $X \subseteq B$. Then $v$ can locally determine if $X$ is independent in $\mathcal{N}$.
\end{lemma}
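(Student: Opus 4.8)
The plan is to reuse the independence test of \lemmaref{lemma:v-local-independent} almost verbatim, but adapted to the cographic (dual) matroid $\mathcal{N}$. By \lemmaref{lemma:j-matroid-basis}, $\mathcal{N}$ is the dual of the graphic matroid on $(V,B)$, so $X$ is independent in $\mathcal{N}$ precisely when $B \setminus X$ still contains a spanning tree of $(V,B)$. Since $(V,B) \supseteq T^*_1$ is connected and spanning, this is equivalent to $(V, B \setminus X)$ being connected. Thus I reduce the claim to letting $v$ decide, purely from the Eulerian tour data packaged in $M(E^-)$ and $M(X)$, whether the graph $(V, B \setminus X)$ is connected, and I will do this by simulating the construction of a spanning forest of $B \setminus X$ on the restricted Eulerian tour forest while counting connected components.

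First I would have $v$ reconstruct $\mathcal{L}\restriction_{E^- \cup X}$ from $M(E^-)$ and $M(X)$ via the definition of $M(\cdot)$. Reading off the Eulerian tour labels, $v$ classifies each edge as a tree edge (finite $L$-label) or a non-tree edge ($L(e) = \infty$); this lets it split off the tree part $X \cap T^*_1$, set $c = |X \cap T^*_1|$, and identify the extra non-tree edges $(E^- \setminus T^*_1) \setminus X$, which are exactly the edges of $B \setminus X$ lying outside the forest $T^*_1 \setminus X$. Next, exactly as in \lemmaref{lemma:v-local-independent}, $v$ applies $\mathsf{cut}$ to every edge of $X \cap T^*_1$ in sequence, maintaining $\mathcal{L}\restriction_{E^- \cup X}$ by \lemmaref{lemma:local-updates}; since $T^*_1$ is a spanning tree, removing these $c$ edges yields a forest with exactly $c+1$ components. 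Then $v$ processes the edges $(E^- \setminus T^*_1)\setminus X$ one at a time: for $e = \{u,w\}$ it reads the current roots $r(u), r(w)$ from the restriction and, whenever they differ, performs $\mathsf{join}(\mathcal{L}, e)$ and increments a counter $j$. Because every operation edge and every queried endpoint lies in $E^- \cup X$, \lemmaref{lemma:local-updates} guarantees that each step can be carried out and tracked locally by $v$.

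The conclusion is a component count: each cut of a tree edge raises the number of components by one and each successful join lowers it by one, so $(V, B \setminus X)$ has exactly $c + 1 - j$ connected components; hence $v$ outputs that $X$ is independent in $\mathcal{N}$ if and only if $j = c$. The step I expect to require the most care is the correctness of this bookkeeping: I must argue that the root function on the \emph{restricted} structure faithfully records component membership of all relevant endpoints, i.e.\ that no split or merge of $B \setminus X$ can occur outside the endpoints of $E^- \cup X$. This holds because the only edges of $B \setminus X$ that are not already internal to a forest component of $T^*_1 \setminus X$ are precisely the tracked non-tree edges of $E^-$, so the component arithmetic $c+1-j$ is exact and connectivity is detected correctly.
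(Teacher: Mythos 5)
Your proposal is correct and follows essentially the same route as the paper's proof: reduce independence in $\mathcal{N}$ to connectivity of $B \setminus X$, reconstruct $\mathcal{L}\restriction_{E^- \cup X}$, cut the edges of $X \cap T_1^*$, then greedily join with the non-tree edges of $E^-$ using the root labels, relying on \lemmaref{lemma:local-updates} throughout. The only cosmetic difference is the final connectivity test — you count cuts and successful joins ($j = c$), while the paper checks that a single root remains — and your explicit restriction of the join candidates to $(E^- \setminus T_1^*) \setminus X$ matches the paper's processing of $E^- \setminus X$, since tree edges of $E^-$ are skipped there automatically.
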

\begin{proof}
We observe that $X$ is independent in $\mathcal{N}$ if and only if the edge set $B \setminus X$ spans the graph $G_2$, directly by definitions. Thus, we implement the independence check by using local Eulerian tour forest operations to check if we can obtain a spanning tree $T \subseteq B \setminus X$, by starting from the old minimum spanning tree $T_1^*$, deleting all edges from $X$, and then adding edges from $E^-$ to complete the tree if possible.
    
In more detail, the algorithm works as follows. Recall that by definition, node $v$ can compute $\mathcal{L}\restriction_{E^- \cup X}$ from $M(E^-)$ and $M(X)$. Let $E^- \setminus X = \{ e_1, e_2, \dotsc, e_k \}$.
\begin{enumerate}
     \item Let $\mathcal{L}_0\restriction_{E^- \cup X}$ be the Eulerian tour forest on $B$ obtained from $\mathcal{L} \restriction_{E^- \cup X}$ by applying the $\mathsf{cut}$ operation for each $e \in X \cap T_1^*$ in sequence. Note that node can check directly from $\mathcal{L}\restriction_{E^- \cup X}$ which edges in $X$ are in the minimum spanning tree $T_1^*$.
     \item For $i \in \{1, \ldots, k \}$ do the following:
       \begin{enumerate}[label=(\alph*)]
       \item Determine from $\mathcal{L}_{i-1}\restriction_{e_i}$ if the endpoints $u$ and $v$ of $e_i$ have the same root, i.e. $r(v) = r(u)$.
       \item If they have the same root, skip this edge and set $\mathcal{L}_i \restriction_{E^- \cup X} = \mathcal{L}_{i-1} \restriction_{E^- \cup X}$.
       \item If they have different roots, compute $\mathcal{L}_i \restriction_{E^-\cup X} = \mathsf{join}(\mathcal{L}_{i-1} \restriction_{E^-\cup X}, e_i)$.
       \end{enumerate}
    \item Check from labels how many connected components $\mathcal{L}_k \restriction_{E^-\cup X}$ has. If the number of roots is one, output that $X$ is independent, otherwise output that $X$ is not independent.
   \end{enumerate}
Note that since $T_1^*$ is connected, the final edge set $B \setminus X$ can only have multiple connected components due to removal of edges in $X$. Thus, the node $v$ will locally see all connected components of $B \setminus X$ from $\mathcal{L}_k \restriction_{E^- \cup X}$.
\end{proof}

\paragraph{Algorithm for handling weight decrements.}
The algorithm for batch dynamic minimum spanning tree under weight decrements is as follows:
\begin{enumerate}
\item Each node $v$ computes $\mathcal{L} \restriction_{E(v)}$ from the auxiliary state.
\item Broadcast $M(e)$ for each $e \in E^-$ using the broadcast tree $\mathcal{B}$.
\item Use the maximum matroid basis algorithm over $\mathcal{N}$ to compute $B^*$.
\item Each node $v$ locally computes $\mathcal{L}_1 \restriction_{E(v)}$ by applying the $\mathsf{cut}$ operation on each edge in $B^* \cap T_1^*$ in lexicographical order, starting from $\mathcal{L} \restriction_{E(v)}$.
\item Each node $v$ locally computes $\mathcal{L}_2 \restriction_{E(v)}$ by applying the $\mathsf{join}$ operation on each edge in $E^- \cap B^*$ in lexicographical order, starting from $\mathcal{L}_1 \restriction_{E(v)}$.
\item Each node $v$ outputs local auxiliary state $x_2(v)$ corresponding to $\mathcal{L}_2$.
\end{enumerate}

\begin{lemma}
  There is an algorithm that solves batch dynamic minimum-weight spanning trees under edge weight decrements in $O(\alpha + D)$ rounds.
  \label{lemma:weight-decr-alg}
\end{lemma}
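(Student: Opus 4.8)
The plan is to verify the displayed six-step algorithm step by step, mirroring the proof of the increment case, and to read off both correctness and the $O(\alpha+D)$ bound. Correctness is essentially already packaged into the preceding lemmas: \lemmaref{lemma:j-matroid-basis} shows that $B^*$ is the unique maximum-weight basis of the dual matroid $\mathcal{N}=(B,\mathcal{J})$, and \lemmaref{lemma:st-b-star} shows that $B\setminus B^*$ equals the new minimum spanning tree $T^*_2$. Hence the only things left to establish are that $B^*$ can be computed within the time budget and that the local $\mathsf{cut}$/$\mathsf{join}$ phase rewrites the stored Eulerian tour forest $\mathcal{L}$ on $T^*_1$ into a valid Eulerian tour forest on $T^*_2$.

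For the running time I would account for each step. Step~(1) costs $O(1)$ rounds by \lemmaref{lemma:mst-auxiliary}. Step~(2) broadcasts $M(e)$ for the at most $\alpha$ edges of $E^-$, each an $O(\log n)$-bit message, hence $O(\alpha+D)$ rounds by \lemmaref{lemma:primitives}. The load-bearing step is~(3): I would invoke \theoremref{thm:matroid-basis} on $\mathcal{N}$, whose independence oracle is exactly the local procedure of \lemmaref{lemma:mst-dual-matroid}, fed by the globally known $M(E^-)$ from Step~(2) together with the candidate decorations $M(X)\subseteq M(B)$ supplied by the algorithm. To see that this costs only $O(\alpha+D)$ rounds, I must bound the common basis size of $\mathcal{N}$ by $O(\alpha)$. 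This is immediate from matroid duality: the graphical matroid on $(V,B)$ has rank $n-1$ because $T^*_1\subseteq B$ spans $G$, so its dual $\mathcal{N}$ has rank $|B|-(n-1)=|E^-\setminus T^*_1|\le\alpha$. After Step~(3) every node knows $B^*$.

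Steps~(4)--(6) use no further communication. By \lemmaref{lemma:local-updates}, a node holding $M(E^-)$ and $M(E(v))$ can locally follow the effect of each $\mathsf{cut}$ and each $\mathsf{join}$ on the labels of its own incident edges. Since every node performs the same cuts (on the old tree edges in $B^*$, i.e. on $B^*\cap T^*_1$) and then the same joins (on the $E^-$ edges that enter $T^*_2$), each in a fixed lexicographic order, the local restrictions all descend from one and the same global sequence of operations; by the correctness of $\mathsf{cut}$/$\mathsf{join}$ the result is a single Eulerian tour forest, which by \lemmaref{lemma:st-b-star} is supported on $T^*_2$. Step~(6) then encodes the $O(\log n)$-bit auxiliary state $x_2(v)$. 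Summing the bounds gives $O(\alpha+D)$ rounds.

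The step I expect to be the main obstacle is confirming that the cut-then-join phase is well defined and actually lands on $T^*_2$. One must check that deleting every old-tree edge of $B^*$ leaves a spanning forest, that each subsequent $\mathsf{join}$ connects two distinct trees of this forest (so no join closes a cycle), and that the edge set of the resulting tree is exactly $(T^*_1\setminus B^*)\cup(E^-\cap T^*_2)=B\setminus B^*$. Pinning down this bookkeeping precisely --- and reconciling the index sets used in Steps~(4)--(5) with the symmetric increment argument --- is where the care lies; everything else reduces to direct citation of the lemmas above.
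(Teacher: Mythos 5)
Your proposal is correct and takes essentially the same route as the paper's own proof: it verifies the six steps with the same citations (\lemmaref{lemma:mst-auxiliary} for Step (1), \lemmaref{lemma:primitives} for Step (2), \theoremref{thm:matroid-basis} together with \lemmaref{lemma:mst-dual-matroid} for Step (3)), and concludes Steps (4)--(6) from the fact that all nodes deterministically apply the same cut/join operations in the same order, hence produce compatible auxiliary states. Your explicit check that the dual matroid $\mathcal{N}$ has rank $|B|-(n-1)\le\alpha$ --- which is what makes \theoremref{thm:matroid-basis} yield $O(\alpha+D)$ rather than a bound in the rank of some unrelated matroid --- is a detail the paper leaves implicit, and a worthwhile addition.
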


\begin{proof}
By Lemma~\ref{lemma:mst-auxiliary}, Step (1)~of the algorithm can be done in $O(1)$ rounds, and by Lemma~\ref{lemma:primitives}, Step~(2) can be done in $O(\alpha + D)$ rounds. Step~(3) can be implemented in $O(\alpha + D)$ rounds by Theorem~\ref{thm:matroid-basis} and Lemma~\ref{lemma:mst-dual-matroid}, and after Step~(3) all nodes have learned the set $B^*$. Since all nodes apply the same operations to the Eulerian tour forest in the same order in Steps~(4) and (5), all nodes will produce compatible auxiliary states in Step~(6).
\end{proof}


\section{Lower bounds}\label{sec:lower}
In this section, we investigate lower bounds for the batch dynamic \congest model. 
We start with some necessary preliminaries in \sectionref{subsec:lbprelim} on two-party communication complexity~\cite{KushilevitzN97}, followed by our lower bound framework in \sectionref{subsec:lbf}, which we instantiate in \sectionref{subsec:lbinst}.
Finally, we give a lower bound for the minimum spanning tree problem in \sectionref{subsec:lbmst} by adapting arguments from \citet{dassarma12}.

\subsection{Two-party communication complexity}\label{subsec:lbprelim}

Let $f \colon \{0,1\}^{k}\times\{0,1\}^{k} \to \{0,1\}$ be a Boolean function. In~the two-party communication game on~$f$, there are two players who receive a private $k$-bit strings $x_0$ and $x_1$ as inputs, and their task is to have at least one of the players compute $f(x_0, x_1)$. 
The players follow a predefined protocol, and the complexity of a protocol is the maximum, over all $k$-bit inputs, of number of bits the parties exchange when executing the protocol on the input.
The \emph{deterministic communication complexity} $\cc(f)$ of a function $f$ is the minimal complexity of a protocol for computing $f$. Similarly, the \emph{randomised communication complexity} $\rcc(f)$ is the worst-case complexity of protocols, which compute $f$ with probability at least $2/3$ on all inputs, even if the players have access to a source of shared randomness.

While our framework is generic, all the reductions we use are based on \emph{set disjointness} lower bounds. In set disjointness over universe of size $k$, denoted by $\mathsf{DISJ}_k$, both players inputs are $x_0, x_1 \in \{ 0, 1 \}^k$, and the task is to decide whether the inputs are disjoint, i.e. $\mathsf{DISJ}_k(x_0, x_1) = 1$ if for all $i \in \{ 1, 2, \dotsc, k \}$ either $x_0(i)=0$ or $x_1(i) = 0$, and $\mathsf{DISJ}_k(x_0, x_1) = 0$ otherwise. It is known~\cite{KushilevitzN97,Razborov92} that 
\[ \cc(\mathsf{DISJ}_k) = \Omega(k)\, \hspace{5mm} \text{and} \hspace{5mm} \rcc(\mathsf{DISJ}_k) = \Omega(k)\,.\]

\subsection{Lower bound framework}\label{subsec:lbf}

For proving lower bounds for batch dynamic algorithms, we use the standard \congest lower bound framework of \emph{lower bound families} (e.g. \cite{drucker2013power,abboud2016near}). This allows us to translate existing \congest lower bound constructions to batch dynamic \congest
; however, we need a slightly different definition of lower bound families to account for our setting.

\newcommand{\lbfparam}{\alpha}

\begin{definition}\label{def:lower-bound-family}
For $\alpha\in\mathbb{N}$, let $f_\lbfparam \colon \{ 0, 1 \}^{2k(\lbfparam)} \to \{ 0, 1 \}$ and $s, C \colon \mathbb{N} \to \mathbb{N}$ be functions and $\Pi$ a predicate on labelled graphs. Suppose that there exists a constant $\lbfparam_0$ such that for all $\lbfparam > \lbfparam_0$ and $x_0,x_1 \in \{0,1\}^{k(\lbfparam)}$ there exists a labelled graph
$(G(\lbfparam),\ell(\alpha,x_0,x_1))$ satisfying the following properties:
\begin{enumerate}[noitemsep]
    \item $(G(\lbfparam),\ell(\alpha,x_0,x_1))$ satisfies $\Pi$ iff $f(x_0,x_1)=1$, 
    \item $G(\alpha) = (V_0 \cup V_1, E)$, where 
        \begin{enumerate}[label=(\alph*),noitemsep]
            \item $V_0$ and $V_1$ are disjoint and $\size{V_0 \cup V_1} = s(\lbfparam)$,
            \item the cut between $V_0$ and $V_1$ has size at most $C(\lbfparam)$,
        \end{enumerate}
    \item $\ell(\alpha,x_0,x_1) \colon E \to \Sigma$ is an edge labelling such that
        \begin{enumerate}[label=(\alph*),noitemsep]
            \item there are at most $\lbfparam$ edges whose labels depend on $x_0$ and $x_1$,
            \item for $i \in \{ 0, 1 \}$, all edges whose label depend on $x_i$ are in $E \cap V_i \times V_i$, and
            \item labels on all other edges do not depend on $x_0$ and $x_1$.
        \end{enumerate}
\end{enumerate}
We then say that $\mathcal{F} = (\mathcal{G}(\lbfparam))_{\lbfparam > \lbfparam_0}$ is a \emph{family of lower bound graphs for $\Pi$}, where 
\[\mathcal{G}(\lbfparam) = \bigl\{ (G(\lbfparam),\ell(\alpha,x_0,x_1)) \colon x_0, x_1 \in \{0,1\}^{k(\lbfparam)} \bigr\}\,.\]
\end{definition}

\paragraph{Extensions.} Since our aim is to prove lower bounds that depend on number of input changes $\alpha$ independently of the number of nodes $n$, we need to construct lower bounds where $\alpha$ can be arbitrarily small compared to $n$. We achieve this by embedding the lower bound graphs into a larger graph; this requires that the problem we consider has the following property.

\begin{definition}
Let $\Pi$ be a problem on labelled graphs. We say that $\Pi$ has the \emph{extension property with label $\gamma$} if $\gamma \in \Gamma$ is an input label such that for any labelled graph $(G, \ell)$, attaching new nodes and edges with label $\gamma$ does not change the output of the original nodes.
\end{definition}

\paragraph{Lower bound theorems.} We now present our lower bound framework, which we will instantiate in the next Section~\ref{subsec:lbinst}. We first show the following general version of the lower bound result.

\begin{theorem}\label{thm:lower-bound-main}
Let $\Pi$ be a problem, assume there is a family of lower bound graphs $\F$ for $\Pi$ and that $\Pi$ has the extension property, and let $L \colon \mathbb{N} \to \mathbb{N}$ be a function satisfying $L(\alpha) \ge s(\alpha)$. Let $\A$ be a deterministic batch dynamic algorithm that solves $\Pi$ in $T(\alpha,n)$ rounds for all $\alpha$ satisfying $n \ge L(\alpha)$ on batch dynamic \congest with bandwidth $b(n)$. Then we have
    \[
    T(\alpha, L(\alpha)) = \Omega\left( \frac{\cc(f_\alpha)}{C(\alpha) b\bigl(L(\alpha)\bigr)} \right)\,.
    \]
If $\A$ is a Monte Carlo algorithm with running time $T(\alpha,n)$ rounds and success probability at least $2/3$, or a Las Vegas algorithm with running time $T(\alpha,n)$ in either expectation or with probability at least $2/3$, then we instead have
    \[
    T(\alpha, L(\alpha)) = \Omega\left( \frac{\rcc(f_\alpha)}{C(\alpha) b\bigl(L(\alpha)\bigr)} \right)\,.
    \]

\end{theorem}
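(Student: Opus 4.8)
The plan is to lift the standard two-party \congest argument by simulating a batch dynamic run of $\A$ across the cut guaranteed by the lower bound family $\F$. Fix $\alpha > \alpha_0$ and set $n = L(\alpha)$. Given an input $(x_0,x_1)$ to the communication problem $f_\alpha$, the two players jointly consider the labelled graph $(G(\alpha),\ell(\alpha,x_0,x_1))$, with player~$0$ controlling the nodes of $V_0$ and player~$1$ those of $V_1$. By condition~(3) of Definition~\ref{def:lower-bound-family}, every edge whose label depends on $x_0$ lies inside $V_0\times V_0$ and every edge depending on $x_1$ lies inside $V_1\times V_1$, while the labels on the at most $C(\alpha)$ cut edges are input-independent. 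Hence player~$0$ can compute all labels and the initial auxiliary state of its own nodes from $x_0$ alone, and symmetrically for player~$1$.

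To invoke the \emph{dynamic} nature of $\A$, I would fix an input-independent \emph{base} labelling $\ell_1$ that agrees with $\ell(\alpha,x_0,x_1)$ on every edge except possibly the (at most $\alpha$) input-dependent ones, where $\ell_1$ takes an arbitrary canonical default. Since $\xi$ is finite, non-empty, and computable and $\ell_1$ does not depend on $(x_0,x_1)$, both players can agree in advance on some $x_1\in\xi(G,\ell_1)$ with no communication. Passing from $\ell_1$ to $\ell_2 = \ell(\alpha,x_0,x_1)$ changes at most $\alpha$ labels, so this is a legal batch of size $\alpha$, and running $\A$ on it produces a valid solution for $\ell_2$ in $T(\alpha,n)$ rounds. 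Because the guarantee on $\A$ is only assumed for $n\ge L(\alpha)$, I would first use the extension property to pad $G(\alpha)$ up to exactly $L(\alpha)$ nodes: attaching the $L(\alpha)-s(\alpha)$ extra nodes with label $\gamma$ entirely on player~$0$'s side keeps the player-cut at most $C(\alpha)$, leaves the outputs of the original nodes unchanged, and lets each padding edge be simulated locally, so the bound $T(\alpha,L(\alpha))$ applies to this padded instance.

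Now I would run $\A$ round by round, each player updating its own nodes locally and the two exchanging only the messages crossing the cut. With at most $C(\alpha)$ cut edges, each carrying $b(L(\alpha))$ bits per direction per round, the total communication over the $T(\alpha,L(\alpha))$ rounds is $O\bigl(T(\alpha,L(\alpha))\,C(\alpha)\,b(L(\alpha))\bigr)$. Upon halting, the output determines whether $\Pi$ holds (for a detection problem, whether some node reports the target structure, which may sit on either side), so the players exchange $O(1)$ further bits to let one of them learn $f_\alpha(x_0,x_1)$. This yields a protocol for $f_\alpha$, whose cost is $\Omega(\cc(f_\alpha))$, and rearranging gives the claimed deterministic bound. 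For the Monte Carlo and with-high-probability Las Vegas variants the same simulation produces a randomised protocol succeeding with probability at least $2/3$, whence the $\rcc(f_\alpha)$ bound; for the expected-time Las Vegas case I would truncate the run at $3\,\mathbb{E}[T]$ rounds and apply Markov's inequality to obtain such a protocol.

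I expect the delicate point to be the padding step rather than the cut simulation, which is routine: one must verify that the extension neither inflates the player-cut beyond $C(\alpha)$ nor disturbs the detection of $\Pi$ on the original subgraph, so that the reparameterisation of the bound in terms of $\alpha$ (independent of $n=L(\alpha)$) is sound. A secondary care point is ensuring the base-labelling construction keeps the batch size at exactly $\alpha$ and that a common initial auxiliary state can be fixed without communication, which both follow from the input-independence of $\ell_1$ together with the computability of $\xi$.
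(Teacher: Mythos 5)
Your proposal is correct and follows essentially the same route as the paper's proof: fix an input-independent default labelling, pad with extension-labelled nodes up to $L(\alpha)$ so both players can locally compute a common initial auxiliary state, then simulate the batch update of $\A$ across the cut at a cost of $O(C(\alpha)\,b(L(\alpha)))$ bits per round, and handle the randomised variants by direct simulation plus Markov truncation for expected-time Las Vegas. The only cosmetic differences (padding attached on one player's side rather than at an arbitrary node, and the explicit $O(1)$ bits to resolve which side detects $\Pi$) do not change the argument.
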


\begin{proof}
First consider the case of deterministic $\A$. We convert $\A$ into a two-player protocol computing $f_\alpha(x_0,x_1)$. Given inputs $x_0, x_1 \in \{ 0, 1 \}^{k(\alpha)}$, the players perform the following steps:
\begin{enumerate}[noitemsep]
    \item Both players construct the graph $G(\alpha)$ and a labelling $\ell$ such that $\ell$ agrees with $\ell(\alpha,x_0,x_1)$ on all labels that do not depend on $x_0$ and $x_1$, and other labels are set to some default label agreed to beforehand.
    \item Add new nodes connected to an arbitrary node with edges labelled with the extension label $\gamma$ to $(G(\alpha), \ell)$ to obtain $(G^*,\ell^*)$ where $G^*$ has $n = L(\alpha)$ nodes; since we assume $L(\alpha) \ge s(\alpha)$, this is possible.
    \item Simulate $\A$ on $G^*$, with player 0 simulating nodes in $V_0$ and player 1 simulating nodes in $V_1$:
      \begin{enumerate}[label=(\alph*),noitemsep]
        \item Both players construct a global auxiliary state $x \in \xi(G^*,\ell^*)$; since both players know $(G^*,\ell^*)$, they can do this locally.
        \item Player $i$ constructs a new partial labelling by changing the labels on their subgraph to match $\ell(\alpha, x_0, x_1)$. This defines a global labelling $\ell_1^*$, which differs from $\ell^*$ by on at most $\alpha$ edges. Players now simulate $\A(G^*, \ell^*, \ell^*_1, x)$ to obtain a new auxiliary state $x_1$; players locally simulate their owned nodes and messages between them, and send the messages that would cross the cut between $V_0$ and $V_1$ to each other.
    \end{enumerate}
    \item Players infer from $x_1$ whether $\Pi$ is satisfied, and produce the output $f_\alpha(x_0,x_1)$ accordingly.
\end{enumerate}
Each round, the algorithm $\A$ sends at most $2 b(n) = 2 b\bigl(L(\alpha)\bigr)$ bits over each edge, so the total number of bits players need to send to each other during the simulation is at most $2 b\bigl(L(\alpha)\bigr) C(\alpha) T\bigl(\alpha, L(\alpha)\bigr)$. Since the above protocol computes $f_\alpha$, we have for $\alpha > \alpha_0$ that $2 b\bigl(L(\alpha)\bigr) C(\alpha) T\bigl(\alpha, L(\alpha)\bigr) \ge \cc(f_\alpha)$, which implies
\[ T\bigl(\alpha, L(\alpha)\bigr) \ge \frac{\cc(f_\alpha)}{2 C(\alpha) b\bigl(L(\alpha)\bigr)}\,.\]

For randomised algorithms, we can directly apply same argument. If $\A$ is a Monte Carlo algorithm with success probability at least $2/3$, then the simulation gives correct result with probability at least $2/3$. If $\A$ is a Las Vegas algorithm that terminates in $T(\alpha,n)$ rounds with probability at least $2/3$, we can simulate $\A$ for $T(\alpha, n)$ rounds and give a random output if it does not terminate by that point; this succeeds in solving set disjointness with probability at least $2/3$. Likewise, if $\A$ has expected running time $T(\alpha, n)$, it suffices to simulate it for $3 T(\alpha, n)$ rounds. In all cases, we get
\[ T\bigl(\alpha, L(\alpha)\bigr) \ge \frac{\rcc(f_\alpha)}{2 C(\alpha) b\bigl(L(\alpha)\bigr)}\]
as desired.
\end{proof}

In practice, we use the following, simpler version of Theorem~\ref{thm:lower-bound-main} for our lower bounds.
Specifically, we assume the standard $\Theta(\log n)$ bandwidth and no dependence on $n$ in the running time; however, one can easily see that allowing e.g.~$\poly \log n$ factor in the running time will only weaker the lower bound by $\poly \log \alpha$ factor.

\begin{corollary}\label{thm:lower-bound-easy}
Let $\Pi$ be a problem, assume there is a family of lower bound graphs $\F$ for $\Pi$ and that $\Pi$ has the extension property, and let $\varepsilon > 0$ be a constant such that $s(\alpha) \le \alpha^{1/\varepsilon}$. Let $\A$ be a deterministic batch dynamic algorithm that solves $\Pi$ in $T(\alpha)$ rounds independent of $n$ for all $\alpha \le n^{\varepsilon}$ on batch dynamic \congest.
Then we have
    \[
    T(\alpha)  = \Omega\left( \frac{\cc(f_\alpha)}{C(\alpha) \log \alpha} \right)\,.
    \]
If $\A$ is a Monte Carlo algorithm with running time $T(\alpha)$ rounds and success probability at least $2/3$, or a Las Vegas algorithm with running time $T(\alpha)$ in either expectation or with probability at least $2/3$, then we instead have
    \[
    T(\alpha)  = \Omega\left( \frac{\rcc(f_\alpha)}{C(\alpha) \log \alpha} \right)\,.
    \]
\end{corollary}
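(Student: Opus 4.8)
The plan is to obtain the corollary as a direct specialization of Theorem~\ref{thm:lower-bound-main}, choosing the embedding function $L$ so that the bandwidth factor collapses to $\log\alpha$. Concretely, I would set $L(\alpha) = \lceil \alpha^{1/\varepsilon} \rceil$. The hypothesis $s(\alpha) \le \alpha^{1/\varepsilon}$ immediately gives $L(\alpha) \ge s(\alpha)$, so the requirement that Theorem~\ref{thm:lower-bound-main} places on $L$ is satisfied, while the assumed family of lower bound graphs $\F$ and the extension property of $\Pi$ carry over verbatim.

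Next I would verify that the two domains of validity agree. The corollary's assumption that $\A$ runs in $T(\alpha)$ rounds whenever $\alpha \le n^\varepsilon$ is equivalent to requiring $n \ge \alpha^{1/\varepsilon}$, and in particular this holds at $n = L(\alpha)$. Hence the premise of Theorem~\ref{thm:lower-bound-main} — that $\A$ solves $\Pi$ in $T(\alpha,n)$ rounds for all $n \ge L(\alpha)$ — is met, and evaluating the theorem at $n = L(\alpha)$ is legitimate. Since $T$ is assumed independent of $n$, we have $T(\alpha, L(\alpha)) = T(\alpha)$, so the bound the theorem produces is in fact a bound on $T(\alpha)$ itself.

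It then remains to simplify the bandwidth term. With the standard $b(n) = \Theta(\log n)$ and the chosen $L$, we get $b\bigl(L(\alpha)\bigr) = \Theta\bigl(\log \alpha^{1/\varepsilon}\bigr) = \Theta\bigl((1/\varepsilon)\log\alpha\bigr) = \Theta(\log\alpha)$, absorbing the fixed constant $1/\varepsilon$ into the asymptotics. Substituting into the deterministic conclusion of Theorem~\ref{thm:lower-bound-main} yields
\[
T(\alpha) = \Omega\left(\frac{\cc(f_\alpha)}{C(\alpha)\log\alpha}\right),
\]
and the identical substitution in the randomised conclusion gives the $\rcc$ bound, covering the Monte Carlo and Las Vegas cases, since Theorem~\ref{thm:lower-bound-main} already establishes the randomised statement under exactly these success criteria.

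The calculation here is routine; the only genuine point of care — and thus the main obstacle, such as it is — lies in the bookkeeping of the reparameterisation: one must confirm that evaluating the $n$-dependent bound of Theorem~\ref{thm:lower-bound-main} at the single point $n = L(\alpha)$ genuinely suffices to conclude a bound on the $n$-independent quantity $T(\alpha)$. This is precisely where the ``no dependence on $n$'' hypothesis of the corollary is invoked, and the remark that allowing a $\poly\log n$ factor in $T$ would only weaken the bound by $\poly\log\alpha$ follows from the same computation, replacing $b(L(\alpha))$ by $\poly\log\bigl(\alpha^{1/\varepsilon}\bigr) = \poly\log\alpha$.
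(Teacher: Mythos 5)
Your proposal is correct and matches the paper's (implicit) derivation exactly: the corollary is stated as a direct specialization of Theorem~\ref{thm:lower-bound-main} with bandwidth $b(n) = \Theta(\log n)$ and the embedding size $L(\alpha) = \alpha^{1/\varepsilon}$, so that $n \ge L(\alpha)$ coincides with the hypothesis $\alpha \le n^\varepsilon$ and $b\bigl(L(\alpha)\bigr) = \Theta(\log\alpha)$ since $1/\varepsilon$ is constant. Your bookkeeping of the reparameterisation, including the observation that $n$-independence of $T$ lets the pointwise bound at $n = L(\alpha)$ bound $T(\alpha)$ itself, is precisely the intended argument.
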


Note the role of $\varepsilon$ and $s$ in the claim; the lower bounds in terms of $\alpha$ only work in a regime where $\alpha$ is sufficiently small compared to $n$. The limit where the lower bound stops working usually corresponds to the complexity of computing the solution from scratch, that is, if $\alpha$ is sufficiently large, then recomputing everything is cheap in terms of the parameter $\alpha$. On the other hand, we can make $\varepsilon$ arbitrarily small, so the lower bound holds even under a promise of small batch size, e.g, $\alpha \le n^{1/1000}$.

\subsection{Instantiations}\label{subsec:lbinst}

We now obtain concrete lower bounds by plugging in prior constructions for lower bound families into our framework. These constructions, originally used for \congest lower bounds, are parameterised by the number of nodes $n$, but transforming them to the form used in Definition~\ref{def:lower-bound-family} is a straightforward reparameterisation.

\paragraph{Clique detection.}

In $k$-clique detection for fixed $k$, the input labelling $\ell \colon V \to \{ 0, 1 \}$ defines a subgraph $H$ of $G$, and each node has to output $1$ if they are part of a $k$-clique in $H$, and $0$ otherwise. The corresponding graph property is $k$-clique freeness, and $k$-clique detection has the extension property with label~$0$.
\begin{itemize}
    \item \emph{Lower bound family.} For fixed $k \ge 4$, \citet{czumaj2019detecting} give a family of lower bound graphs with parameters
    \[ f_\alpha = \mathsf{DISJ}_{\Theta(\alpha)}, \hspace{2mm} s(\alpha) = \Theta(\alpha^{1/3}), \hspace{2mm}  C(\alpha) = \Theta(\alpha^{3/4})\,. \]
    The lower bound given by Corollary~\ref{thm:lower-bound-easy} is $\Omega(\alpha^{1/4}/\log \alpha)$ for any $\alpha$.
\end{itemize}

\paragraph{Cycle detection.}
Next we consider $k$-cycle detection for fixed $k$: the input labelling $\ell \colon V \to \{ 0, 1 \}$ defines a subgraph $H$ of $G$, and each node has to output $1$ if they are part of a $k$-cycle in $H$, and $0$ otherwise. The corresponding graph property is $k$-cycle freeness, and $k$-cycle detection clearly has the extension property with label $0$.
For different parameters $k$, we obtain the lower bounds from prior constructions as follows.
\begin{itemize}
    \item For $4$-cycle detection, \citet{drucker2013power} give a family of lower bound graphs with parameters
    \[ f_\alpha = \mathsf{DISJ}_{\Theta(\alpha)}, \hspace{2mm} s(\alpha) = \Theta(\alpha^{2/3}), \hspace{2mm}  C(\alpha) = \Theta(\alpha^{2/3})\,. \]
    The lower bound given by Corollary~\ref{thm:lower-bound-easy} is $\Omega(\alpha^{1/3}/\log \alpha)$ for $\alpha = O(n^{3/2})$.
    \item For $(2k+1)$-cycle detection for $k \ge 2$, Drucker et~al.~\cite{drucker2013power} give a family of lower bound graphs with 
    \[ f_\alpha = \mathsf{DISJ}_{\Theta(\alpha)}, \hspace{2mm} s(\alpha) = \Theta(\alpha^{1/2}), \hspace{2mm}  C(\alpha) = \Theta(\alpha^{1/2})\,. \]
    The lower bound given by Corollary~\ref{thm:lower-bound-easy} is $\Omega(\alpha^{1/2}/\log \alpha)$ for $\alpha = O(n^{2})$.
    \item For $2k$-cycle detection for $k \ge 3$, \citet{korhonen2017deterministic} give a family of lower bound graphs~with
    \[ f_\alpha = \mathsf{DISJ}_{\Theta(\alpha)}, \hspace{2mm} s(\alpha) = \Theta(\alpha), \hspace{2mm}  C(\alpha) = \Theta(\alpha^{1/2})\,. \]
    The lower bound given by Corollary~\ref{thm:lower-bound-easy} is $\Omega(\alpha^{1/2}/\log \alpha)$ for $\alpha = O(n)$.
\end{itemize}

\paragraph{Diameter and all-pairs shortest paths.}
In diameter computation, the input labelling $\ell \colon V \to \{ 0, 1 \}$ defines a subgraph $H$ of $G$, and each node has to output the diameter of their connected component in $H$. Again, diameter computation has the extension property with label $0$.
For exact and approximate diameter computation, we use the sparse lower bound constructions of \citet{abboud2016near}:
\begin{itemize}
    \item For distinguishing between graphs of diameter $4$ and $5$, there is a family of lower bound graphs with parameters
    \[ f_\alpha = \mathsf{DISJ}_{\Theta(\alpha)}, \hspace{4mm} s(\alpha) = \Theta(\alpha), \hspace{4mm}  C(\alpha) = \Theta(\log \alpha)\,. \]
    The lower bound given by Corollary~\ref{thm:lower-bound-easy} is $\Omega(\alpha/\log^2 \alpha)$ for $\alpha = O(n)$. This implies a lower bound for exact diameter computation.
    \item For distinguishing between graphs of diameter $4k+2$ and $6k+1$, there is a family of lower bound graphs with parameters
    \[ f_\alpha = \mathsf{DISJ}_{\Theta(\alpha)}, \hspace{1.5mm} s(\alpha) = O(\alpha^{1+\delta}), \hspace{1.5mm}  C(\alpha) = \Theta(\log \alpha), \]
    for any constant $\delta > 0$.
    The lower bound given by Corollary~\ref{thm:lower-bound-easy} is $\Omega(\alpha/\log^2 \alpha)$ for $\alpha = O(n^{1/(1+\delta)})$ for any constant $\delta >0$. This implies a lower bound for $(3/2 - \varepsilon)$-approximation of diameter for any constant $\varepsilon > 0$.
\end{itemize}
A trivial $\Omega(D)$ lower bound holds even for $(3/2 - \varepsilon)$-approximation in the worst case (e.g.\ a cycle).

In all-pairs shortest paths problem, the input labelling gives a weight $w(e) \in \{ 0, 1, 2, \dotsc, n^C \} \cup \{ \infty \}$ for each edge $e \in E$, and each node node $v$ has to output the distance $d(v,u)$ for each other node $u \in V \setminus \{ v \}$.  Exact or $(3/2 - \varepsilon)$-approximate solution to all-pairs shortest paths can be used to recover exact or $(3/2 - \varepsilon)$-approximate solution to diameter computation, respectively, in $O(D)$ rounds, so the lower bounds also apply to batch dynamic all-pairs shortest paths.

\subsection{Lower bound for minimum spanning tree}\label{subsec:lbmst}

The \congest lower bound for minimum spanning tree does not fall under the family of lower bound graphs construction used above; indeed, one can show that it is in fact impossible to prove \congest lower bounds for minimum spanning tree using a \emph{fixed-cut simulation} (see \citet{bacrach2019hardness}). However, we can adapt the more involved simulation argument of \citet{dassarma12} to obtain a near-linear lower bound for batch dynamic~MST; note that $\Omega(D)$ lower bound holds trivially for the problem.

Again, we first prove a general version of the lower bound theorem first.

\begin{theorem}\label{thm:mst-lower-bound}
Let $L \colon \mathbb{N} \to \mathbb{N}$ be a function satisfying $L(\alpha) \ge \alpha^{2}$. Let $\A$ be a deterministic batch dynamic algorithm or a randomised batch dynamic algorithm as in Theorem~\ref{thm:lower-bound-main} that solves MST in $T(\alpha,n) + O(D)$ rounds for all $\alpha$ satisfying $n \ge L(\alpha)$ on batch dynamic \congest with bandwidth $b(n)$. Then we have
    \[
    T(\alpha, L(\alpha)) = \Omega\biggl( \frac{\alpha}{b\bigl(L(\alpha)\bigr) \log \alpha} \biggr)\,.
    \]
\end{theorem}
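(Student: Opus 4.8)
The plan is to adapt the refined simulation argument of \citet{dassarma12} behind their $\tilde\Omega(\sqrt n)$ \congest{} lower bound for MST, reparameterising it so that the hardness is measured in the batch size $\alpha$ rather than in $n$. Their hard instance is built from $\Gamma$ vertex-disjoint paths of length $L$ joining a source region to a sink region, together with a low-capacity \emph{highway} of shortcut edges that keeps the diameter down to $O(\log(\Gamma L))$ while allowing only $O(\log(\Gamma L))$ edges to connect the two halves other than through the interiors of the paths. I would instantiate this with $\Gamma = \Theta(\alpha)$ paths, each of length $L = \Theta(\alpha)$, giving $\Theta(\alpha^2)$ nodes and diameter $O(\log\alpha)$. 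A $\mathsf{DISJ}_{\Theta(\alpha)}$ instance is encoded by the weights of the $\Theta(\alpha)$ path-endpoint edges: player~$0$ controls the source ends according to $x_0$ and player~$1$ the sink ends according to $x_1$, and, via the cycle and cut properties (\lemmaref{lemma:cycle-and-cut-properties}), the unique minimum spanning tree reveals $\mathsf{DISJ}_{\Theta(\alpha)}(x_0,x_1)$. Since the theorem fixes $n = L(\alpha) \ge \alpha^2$, I would pad the construction up to exactly $L(\alpha)$ nodes by attaching the surplus nodes through a balanced auxiliary tree of determined, non-interfering weights, keeping the diameter $O(\log L(\alpha))$ and leaving the core minimum spanning tree unchanged.

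The dynamic reduction then proceeds as in \theoremref{thm:lower-bound-main}, with the additional ingredient of a precomputed auxiliary state. Fix a default labelling $\ell$ assigning the input-controlled endpoint edges agreed weights independent of $(x_0,x_1)$. Since both players know $G$ and $\ell$, each locally runs the centralised procedure guaranteed by the model to obtain one and the same valid auxiliary state $y \in \xi(G,\ell)$. Given their inputs, the players form the batch of at most $\alpha$ weight changes turning $\ell$ into the encoding labelling $\ell'$ --- player~$i$ rewriting only its $\Theta(\alpha)$ endpoint edges --- and simulate $\A(G,\ell,\ell',y)$, with player~$0$ owning the source side of the highway and player~$1$ the sink side. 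The output state $y'$ encodes the minimum spanning tree of $(G,\ell')$, from which either player reads off $\mathsf{DISJ}_{\Theta(\alpha)}(x_0,x_1)$.

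The communication is bounded using das Sarma's time-sensitive observation rather than a static cut count. If $\A$ runs for at least $\Theta(L) = \Theta(\alpha)$ rounds the claimed bound is immediate, so assume it halts within the first $L$ rounds. During these rounds no information about the inputs can have propagated from a path endpoint across to the opposite side, so every input-dependent bit crossing between the two sides must pass through the bounded-capacity highway, carrying only $O(b(L(\alpha))\log\alpha)$ such bits per round. Hence the players exchange $O\bigl((T(\alpha,L(\alpha)) + O(D))\, b(L(\alpha)) \log\alpha\bigr)$ bits in total, which must be $\Omega(\alpha)$ since the protocol solves set disjointness and $\cc(\mathsf{DISJ}_{\Theta(\alpha)}) = \Omega(\alpha)$. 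Because $D = O(\log L(\alpha))$, the $O(D)$ contribution is negligible for large $\alpha$, and rearranging yields $T(\alpha,L(\alpha)) = \Omega\bigl(\alpha/(b(L(\alpha))\log\alpha)\bigr)$; this is consistent with the assumed regime $T < L$, since $\alpha/(b(L(\alpha))\log\alpha) < \alpha = \Theta(L)$. For randomised $\A$ the identical accounting holds with $\rcc(\mathsf{DISJ}_{\Theta(\alpha)}) = \Omega(\alpha)$, the Monte Carlo and Las Vegas variants being handled exactly as in \theoremref{thm:lower-bound-main}.

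The main obstacle is the faithful transfer of das Sarma's time-sensitive information-flow argument, not the bookkeeping above. The crucial and delicate point is the claim that during the first $L$ rounds the $\Gamma$ slow paths carry no input-dependent information across the middle, so that only the $O(\log\alpha)$-capacity highway matters; a naive \emph{fixed} cut across the path midpoints has size $\Theta(\alpha)$ and yields only a trivial bound, which is exactly why the lower-bound-family template of Definition~\ref{def:lower-bound-family} cannot establish an MST lower bound (cf.~\citet{bacrach2019hardness}). Making this rigorous requires re-deriving das Sarma's simulation lemma in the batch-dynamic model --- in particular verifying that a node's state after $t < L$ rounds depends only on inputs within $t$ hops together with the highway traffic --- and checking that encoding the disjointness instance as $\Theta(\alpha)$ weight \emph{changes} off a single precomputable default labelling preserves both the reduction's correctness and its information-flow structure.
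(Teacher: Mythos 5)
Your proposal follows essentially the same route as the paper's proof: the same hard instance in the style of \citet{dassarma12} ($\Theta(\alpha)$ paths of length $\Theta(\alpha)$ plus a balanced binary tree serving as a low-capacity highway), the same encoding of $\mathsf{DISJ}_{\Theta(\alpha)}$ in the path-endpoint edge weights applied as a batch of at most $\alpha$ changes on top of a commonly precomputable auxiliary state, and the same time-sensitive accounting in which only highway traffic crosses between the two players. The step you defer (``re-deriving das Sarma's simulation lemma in the batch-dynamic model'') is exactly what the paper carries out explicitly with shrinking windows: at step $t$, player $0$ simulates $a$, columns $1,\dots,\alpha-t$, and the minimal subtree of the binary tree spanning those columns, while player $1$ acts symmetrically from the other side; path nodes' incoming messages are determined locally because each player's region shrinks by one column per round, and only the $O(\log\alpha)$ binary-tree nodes on the boundary need messages forwarded by the other player, costing $O(b(L(\alpha))\log\alpha)$ bits per round. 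So the heart of your sketch is the right argument, just not written out.

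There is, however, one concrete flaw: the padding. You attach the $L(\alpha)-\Theta(\alpha^2)$ surplus nodes through a balanced auxiliary tree, making the diameter $\Theta(\log L(\alpha))$, and then assert that the $O(D)$ additive term is ``negligible for large $\alpha$.'' This is false for general $L$: the theorem allows any $L(\alpha)\ge\alpha^2$, e.g.\ $L(\alpha)=2^{\alpha^2}$, in which case $D=\Theta(\alpha^2)$ dominates $\alpha/(b(L(\alpha))\log\alpha)$, and your accounting $T+O(D)\ge\Omega\bigl(\alpha/(b(L(\alpha))\log\alpha)\bigr)$ yields no bound on $T$ at all; it also violates the assumption that the total number of simulated rounds is below $\Theta(\alpha)$, which the window argument needs. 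The paper avoids this by attaching all padding nodes directly to the terminal $a$ with weight-$0$ edges, so that $D=O(\log\alpha)$ independently of how fast $L$ grows; with that single change your construction and the rest of your argument go through.
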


\begin{proof}
We follow the proof of \citet{dassarma12} with the same modifications to standard \congest lower bounds as in Theorem~\ref{thm:lower-bound-main}. We construct labelled graphs $(G_\alpha, \ell_\alpha)$ as follows, with $\ell_\alpha$ encoding the edge weights of the graph: 
\begin{itemize}[noitemsep]
    \item We start with two terminal nodes $a$ and $b$.
    \item We add $\alpha/2$ paths $P_1, P_2, \dotsc, P_{\alpha/2}$ of length $\alpha$, with all edges having weight $0$. Each path $P_i$ consists of nodes $p_{i,1}, p_{i,2}, \dotsc, p_{i,\alpha}$, and we refer to the set $\{ p_{1,j}, p_{2,j}, \dotsc, p_{\alpha/2, j} \}$ as \emph{column} $j$.
    \item We connect $p_{i, 1}$ to $a$ and and $p_{i,\alpha}$ to $b$ for all $i$. These edges have weight $0$.
    \item We add a balanced binary tree with $\alpha$ leaves, with all edges weight $0$. We connect the first leaf to $a$ with weight-$0$ edge, and the last leaf to $b$ with weight-$0$ edge.
    \item We connect $i$th leaf of the tree to $i$th edge on each path $P_j$ with weight-$1$ edge.
    \item Finally, we add new nodes connected by weight-$0$ edges to $a$ to satisfy $n \ge L(\alpha)$; since we assume $L(\alpha) \ge \alpha^2$, this is always possible.
\end{itemize}
See Figure~\ref{fig:mst-lower-bound} for an example.

\begin{figure}
\begin{center}
  \includegraphics[page=2, scale=0.9]{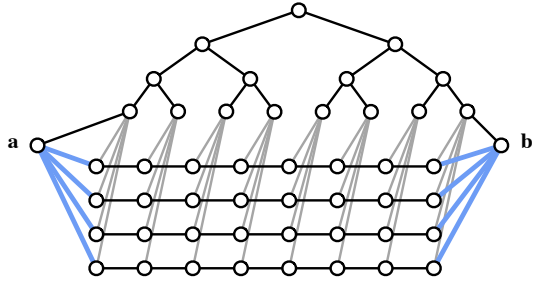}
\end{center}
  \caption{Instance of the graph $G_\alpha$ for $\alpha = 8$ used in the minimum spanning tree lower bound. Black edges have weight $0$, grey edges have weight $1$ and blue edges are used to encode the set disjointness instance.\label{fig:mst-lower-bound}}
\end{figure}

We now turn the algorithm $\A$ into a two-player protocol for solving $\mathsf{DISJ}_{\alpha/2}$. Given inputs $x_0, x_1 \in \{ 0, 1 \}^{\alpha/2}$, the players first construct $(G_\alpha, \ell_\alpha)$, and construct a global auxiliary state $x \in \xi(G_\alpha,\ell_\alpha)$; since both players know the $(G_\alpha,\ell_\alpha)$, they can do this locally. The players then locally change the labels according to the inputs $x_0$ and $x_1$:
\begin{itemize}[noitemsep]
    \item player $0$ sets the weight on the edge from $a$ to $p_{i,1}$ to weight $x_0(i)$ for $i = 1, 2, \dotsc, \alpha/2$, and
    \item player $1$ sets the weight of the edge from $b$ to $p_{i,\alpha}$ to $x_1(i)$  for $i = 1, 2, \dotsc, \alpha/2$.
\end{itemize}
This defines a new global labelling $\ell^*$. The players now simulate the execution $\A(G_\alpha, \ell_\alpha, \ell^*, x)$ in a distributed manner; note in particular that players do not know the whole labelling $\ell^*$.

We assume that $T \le \alpha/2$, as otherwise we already $T > \alpha/2$ and we are happy. The simulation proceeds in steps $t = 1, 2, \dotsc, T$, where $T$ is the running time of $\A$ on the instance.
\begin{enumerate}[noitemsep]
    \item In step $t$ of the iteration, player $0$ simulates node $a$, columns $1$ to $\alpha - t$, and the smallest subtree of the binary tree that includes children from $1$ to $\alpha - t$. Dually, player $1$ simulates node $b$, columns $i+t$ to $\alpha$, and the smallest subtree of the binary tree that includes children from $t+1$ to $\alpha$.
    \item At the start of the simulation, both players know the local inputs of all the nodes they are simulating, since they are not simulating the nodes whose incident labels were changed by the other player.
    \item At step $t+1$, players simulate one round of $\A$. We describe how player $0$ does the simulation; player $1$ acts in symmetrical way.
    \begin{enumerate}[label=(\alph*)]
        \item Since the set of nodes player $0$ simulates in round $t+1$ is a subset of nodes simulated in step $t$, player $0$ knows the full state of all the nodes it is simulating.
        \item For path nodes simulated by player $0$, their neighbours were simulated in the previous round by player $0$, so their incoming messages can be determined locally.
        \item For binary tree nodes, there can be neighbours that were not simulated in the previous round by player $0$. However, since $T \le \alpha/2$, these are simulated by player $1$, and player $1$ sends their outgoing messages to player $0$. Since the height of the binary tree is $O(\log \alpha)$ and player $0$ simulates a subtree of the binary tree, there are $O(\log \alpha)$ nodes that need to receive their neighbours' messages from player $1$. Thus player $1$ has to send $O(b(L(\alpha)) \log \alpha)$ bits to player $0$ to complete one iteration of the simulation.
    \end{enumerate}
\end{enumerate}
In total, the simulation of the execution of $\A(G_\alpha, \ell_\alpha, \ell^*, x)$ uses at most $C T b(L(\alpha)) \log \alpha$ bits of communication for constant $C$. One can verify that the minimum spanning tree in $(G_\alpha, \ell^*)$ has weight $0$ if $x_0$ and $x_1$ are disjoint, and weight at least $1$ if they are not disjoint, so the players can determine the disjointness from the output of $\A$. For deterministic $\A$, this implies that $C T b(L(\alpha)) \log \alpha  \ge \cc(\mathsf{DISJ}_{\alpha/2})$, and thus
\[ T \ge \frac{\cc(\mathsf{DISJ}_{\alpha/2})}{C b\bigl(L(\alpha)\bigr)\log \alpha} = \frac{C' \alpha }{b\bigl(L(\alpha)\bigr) \log \alpha}\]
for a constant $C'$. For randomised $\A$, we similarly get
\[ T \ge \frac{\rcc(\mathsf{DISJ}_{\alpha/2})}{C b\bigl(L(\alpha)\bigr)\log \alpha} = \frac{C' \alpha }{b\bigl(L(\alpha)\bigr) \log \alpha}\]
by the same argument as in the proof of Theorem~\ref{thm:lower-bound-main}.
Finally, since the diameter of $G_\alpha$ is $O(\log n)$, we have that for sufficiently large $\alpha$, we have $T(\alpha, L(\alpha)) \ge T/2$, and the claim follows.
\end{proof}

The general theorem implies the following simplified claim:

\begin{corollary}\label{thm:mst-lower-bound-easy}
Let $\A$ be a deterministic batch dynamic algorithm or a randomised batch dynamic algorithm as in Corollary~\ref{thm:lower-bound-easy} that solves MST in $T(\alpha) + D$ rounds independent of $n$ for all $\alpha \le n^{\varepsilon}$ on batch dynamic \congest with bandwidth $\Theta(\log n)$, where $\varepsilon \le 1/2$ is a constant. Then we have $T(\alpha) = \Omega(\alpha/\log^2 \alpha )$.
\end{corollary}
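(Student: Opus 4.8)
The plan is to obtain this simplified statement as a direct instantiation of Theorem~\ref{thm:mst-lower-bound}, choosing the threshold function $L$ and the bandwidth $b$ so that the general bound collapses to the claimed form. First I would set $L(\alpha) = \alpha^{1/\varepsilon}$ and $b(n) = \Theta(\log n)$, the latter matching the bandwidth assumption of the corollary. The hypothesis $\alpha \le n^{\varepsilon}$ of the corollary is equivalent to $n \ge \alpha^{1/\varepsilon} = L(\alpha)$, so the regime of validity lines up exactly with the condition $n \ge L(\alpha)$ required by Theorem~\ref{thm:mst-lower-bound}.

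Next I would verify that this $L$ satisfies the precondition $L(\alpha) \ge \alpha^2$ of Theorem~\ref{thm:mst-lower-bound}. Since $\varepsilon \le 1/2$ we have $1/\varepsilon \ge 2$, and hence $L(\alpha) = \alpha^{1/\varepsilon} \ge \alpha^2$ for all $\alpha \ge 1$. With the hypotheses in place, Theorem~\ref{thm:mst-lower-bound} applies and gives
\[
T(\alpha, L(\alpha)) = \Omega\biggl( \frac{\alpha}{b\bigl(L(\alpha)\bigr) \log \alpha} \biggr)\,.
\]

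It then remains to evaluate $b\bigl(L(\alpha)\bigr)$. Using $b(n) = \Theta(\log n)$ together with the fact that $\varepsilon$ is a fixed constant, I get $b\bigl(L(\alpha)\bigr) = \Theta\bigl(\log \alpha^{1/\varepsilon}\bigr) = \Theta\bigl((1/\varepsilon)\log \alpha\bigr) = \Theta(\log \alpha)$, where the constant factor $1/\varepsilon$ is absorbed into the asymptotic notation. Substituting this into the displayed bound yields $\Omega(\alpha / \log^2 \alpha)$. Finally, since the corollary assumes the running time is independent of $n$, I may write $T(\alpha) = T(\alpha, L(\alpha))$, which gives the claimed $T(\alpha) = \Omega(\alpha / \log^2 \alpha)$.

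I do not expect a genuine obstacle here, as the substantive work is carried entirely by Theorem~\ref{thm:mst-lower-bound}. The only points requiring slight care are confirming that the two parameter regimes---$\alpha \le n^{\varepsilon}$ versus $n \ge L(\alpha)$---coincide under the choice $L(\alpha) = \alpha^{1/\varepsilon}$, and that the factor $1/\varepsilon$ arising from $\log \alpha^{1/\varepsilon}$ is legitimately swallowed by the $\Theta$; the latter relies crucially on $\varepsilon$ being a constant independent of $\alpha$.
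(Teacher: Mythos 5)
Your proposal is correct and is exactly the paper's intended argument: the paper states the corollary as a direct consequence of Theorem~\ref{thm:mst-lower-bound}, and your instantiation $L(\alpha) = \alpha^{1/\varepsilon}$ (which satisfies $L(\alpha) \ge \alpha^2$ precisely because $\varepsilon \le 1/2$) with $b(n) = \Theta(\log n)$, giving $b(L(\alpha)) = \Theta(\log \alpha)$ for constant $\varepsilon$, is the intended parameter choice. The only cosmetic point is that $\alpha^{1/\varepsilon}$ need not be an integer, so one would formally take $L(\alpha) = \lceil \alpha^{1/\varepsilon} \rceil$, which changes nothing.
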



\section{Batch dynamic congested clique}\label{sec:dcm}

If we set the communication graph $G = (V,E)$ to be a clique, we obtain a batch dynamic version of the \emph{congested clique}~\cite{lotker2005mst} as a special case of our batch dynamic \congest model. This is in many ways similar to the batch dynamic versions of the $k$-machine and MPC models~\cite{italiano2019dynamic,DhulipalaDKPSS20,NowickiO20,GilbertL20}; however, whereas the these usually consider setting where the number of nodes~$k$ is much smaller than $n$, the setting with $k = n$ is qualitatively different. For example, a minimum spanning tree can be computed from scratch in $O(1)$ rounds in the congested clique~\cite{nowicki2019deterministic}, so recomputing from scratch is optimal also for input-dynamic algorithms.

In this section, we briefly discuss the batch dynamic congested clique, and in particular highlight \emph{triangle counting} (and hence \emph{triangle detection}) as an example of problem admitting a non-trivial batch dynamic algorithm in this setting.

\subsection{Universal upper bound}
First, we
make the simple observation
that the fully-connected communication topology gives faster universal upper bound than \theoremref{thm:universal-alg}.

\begin{theorem}\label{thm:universal-alg-clique}
  For any problem $\Pi$, there is a batch dynamic congested clique algorithm that runs in $O(\lceil \alpha / n \rceil)$ rounds and uses $O(m \log n)$ bits of auxiliary state.
\end{theorem}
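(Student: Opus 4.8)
The plan is to mirror the proof of \theoremref{thm:universal-alg}: each node keeps the entire input labelling $\ell$ as its auxiliary state, which takes $O(m \log n)$ bits, and upon a batch of $\alpha$ changes we make every node learn all of $\updated$, so that it can locally reconstruct $\ell_2$ and compute a valid solution $s \in \Pi(G,\ell_2)$. The only thing that changes relative to \theoremref{thm:universal-alg} is the dissemination step: instead of broadcasting the $\alpha$ changes over a BFS tree in $O(\alpha + D)$ rounds, we exploit the fact that in the clique every node can exchange $O(\log n)$ bits with every other node in a single round, and show that the changes can be spread to all nodes in $O(\lceil \alpha/n\rceil)$ rounds.

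Let $k = \lceil \alpha / n \rceil$. The dissemination proceeds in two phases. First I would \emph{balance} the changes: orient each changed edge toward its lower-ID endpoint so that every change is owned by exactly one node, let a fixed leader collect the $n$ ownership counts in one round, compute prefix sums to assign each change a distinct global rank in $\{0,\dots,\alpha-1\}$, and report the starting rank back to each owner in one further round. The change of rank $r$ is made the responsibility of node $(r \bmod n)+1$. Since each owner's changes occupy a contiguous block of ranks, every node is responsible for at most $\lceil \alpha/n\rceil = k$ changes, and the number of changes any owner $u$ must send across any single link is at most $\lceil c_u / n \rceil \le k$, where $c_u$ is the number of changes it owns. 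Hence routing each change to its responsible node loads no link by more than $k$ messages, and this phase completes in $O(k)$ rounds.

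In the second phase each node now holds at most $O(k)$ changes, and these sets partition $\updated$. I would then run $O(k)$ rounds of all-to-all broadcast: in round $t$ every node places the encoding of its $t$-th change (if any) on all of its $n-1$ incident edges. Since one such message is $O(\log n)$ bits and each edge carries it in a single round, after $O(k)$ rounds every node has received every change in $\updated$. Each node then reconstructs $\ell_2$ from its stored $\ell_1$ together with the received changes, locally computes a valid output and the new auxiliary state $x_2(v) = \ell_2$, exactly as in \theoremref{thm:universal-alg}. The leader can additionally broadcast $\alpha$ in $O(1)$ rounds so that all nodes agree on the number of broadcast rounds.

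The main obstacle is the first phase: a naive scheme in which each owner simply broadcasts its own changes breaks down when the changed edges are concentrated at a few nodes (for instance a star of $\alpha$ changed edges), where a single node would need $\Omega(\alpha)$ rounds to emit its load. The balancing step is precisely what removes this bottleneck, using the all-to-all connectivity of the clique to spread the $\alpha$ changes evenly before broadcasting; equivalently, one may invoke a standard congested-clique routing result to deliver the changes to their responsible nodes in $O(k)$ rounds. The remaining bookkeeping — that the leader's collect/prefix-sum/scatter steps each cost $O(1)$ rounds, and that $O(m\log n)$ bits suffice to store $\ell$ — is routine.
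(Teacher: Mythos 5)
Your proposal is correct and follows the same high-level route as the paper: reuse the universal algorithm of Theorem~\ref{thm:universal-alg} verbatim, keeping $\ell$ as the $O(m \log n)$-bit auxiliary state, and only replace the BFS-tree broadcast of the $\alpha$ changes by a clique dissemination running in $O(\lceil \alpha/n \rceil)$ rounds. The one substantive difference is how that dissemination step is justified: the paper invokes standard congested clique routing (Lenzen's routing scheme~\cite{lenzen2013optimal}) as a black box, whereas you construct an explicit two-phase scheme --- a leader-coordinated balancing step that assigns each change a global rank and ships the change of rank $r$ to node $(r \bmod n)+1$, followed by $\lceil \alpha/n \rceil$ rounds of all-to-all broadcast. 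Your balancing argument is sound: the ranks of the changes owned by a node $u$ form a contiguous block of length $c_u$, so $u$ sends at most $\lceil c_u/n \rceil \le \lceil \alpha/n \rceil$ messages across any single link, and after balancing every node is responsible for at most $\lceil \alpha/n \rceil$ changes, so the broadcast phase also finishes within the bound; the leader's collect, prefix-sum, and scatter steps are each $O(1)$ rounds. What the paper's citation buys is brevity and generality (Lenzen's result handles arbitrary point-to-point routing where each node sends and receives up to $n$ messages); what your construction buys is a self-contained, elementary proof --- and indeed, since all that is needed here is for every node to learn every change, the full power of Lenzen's routing is unnecessary and your direct scheme suffices.
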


\begin{proof}
Use the same algorithm as in \theoremref{thm:universal-alg}; the claim follows by observing that the message set $M$ can be learned by all nodes in $O(\lceil \alpha / n \rceil)$ rounds using standard congested clique routing techniques~\cite{lenzen2013optimal}.
\end{proof}

\subsection{Batch dynamic matrix multiplication and triangle detection}

As an example of a problem that has non-trivial batch dynamic algorithms in congested clique, we consider the following \emph{dynamic matrix multiplication} task. As input, we are given two $n \times n$ matrices $S$, $T$ so that each node $v$ receives row $v$ of $S$ and column $v$ of $T$, and the task is to compute the product matrix $P = ST$ so that node $v$ outputs row $v$ of $P$. Concretely, we assume that the input label on edge $\{ u, v \}$ the matrix entries $S[v,u]$, $S[u,v]$, $T[v,u]$ and $T[u,v]$. Note that in the dynamic version of the problem, the parameter $\alpha$ is an upper bound for changes to both matrices.

For matrix $S$, let \emph{density $\rho_S$} of $S$ be the smallest integer $\rho$ such that the number of non-zero elements in $S$ is less than $\rho n$. We use the following result:

\begin{theorem}[\cite{censor-hillel2019fast,censor-hillel2020sparse}]\label{thm:clique-mm}
There is a congested clique algorithm that computes the product $P = ST$ in
$O\bigl( (\rho_S \rho_T)^{1/3}/n^{1/3} + 1 \bigr)$
rounds.
\end{theorem}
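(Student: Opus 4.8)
The plan is to prove the sparse bound by generalising the standard dense matrix multiplication algorithm for the congested clique, which computes $P = ST$ in $O(n^{1/3})$ rounds by partitioning the evaluation cube $[n]\times[n]\times[n]$ (whose cell $(i,j,k)$ carries the term $S[i,k]\,T[k,j]$) into $n$ equal boxes, assigning one box per node, routing the relevant input entries via Lenzen's routing primitive~\cite{lenzen2013optimal}, and then summing partial products along the $k$-axis. First I would observe that the only quantity governing the round complexity is the number of messages each node must send and receive: since Lenzen routing delivers any pattern in which every node is the source and the destination of $O(n)$ messages in $O(1)$ rounds, it suffices to bound the per-node load of (i) routing the nonzero entries of $S$, (ii) routing the nonzero entries of $T$, and (iii) aggregating the $O(n^2)$ partial sums that form the output.

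The key idea is to replace the cubic boxes by rectangular boxes of dimensions $a\times b\times c$ with $abc = n^2$, so that there is again one box per node, and to choose $a,b,c$ as functions of $\rho_S,\rho_T,n$ rather than making them equal. If the nonzero entries were spread evenly, a box would contain $\Theta(\rho_S\,ac/n)$ nonzeros of $S$ and $\Theta(\rho_T\,bc/n)$ nonzeros of $T$, while each output cell is a sum over the $n/c$ boxes sharing its $(i,j)$-column. Balancing the three resulting per-node loads $\rho_S ac/n^2$, $\rho_T bc/n^2$ and $n/c$ and minimising subject to $abc=n^2$ is a routine optimisation; in the symmetric case $\rho_S=\rho_T=\rho$ it yields $a=b=\Theta((\rho n)^{1/3})$, $c=\Theta(n^{4/3}/\rho^{2/3})$, and a common load of $\Theta(\rho^{2/3}/n^{1/3})=\Theta((\rho_S\rho_T)^{1/3}/n^{1/3})$, with the asymmetric case entirely analogous. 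This is exactly where the exponent $1/3$ comes from, just as in the dense algorithm, which is recovered as the special case $\rho_S=\rho_T=n$.

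The even-spread assumption is false in general, and the hard part will be enforcing it: an adversarial sparsity pattern could dump all nonzeros into a single box. To handle this I would apply independent uniformly random permutations to the row indices, the column indices, and the summation index before partitioning (equivalently, hash indices to blocks), so that by a standard Chernoff/concentration argument each box receives $O(1)$ times its expected number of nonzeros with high probability, keeping every node's send/receive load within the bound above; Lenzen routing then delivers everything and $O(1)$ further rounds sum the relabelled partial products. The densities $\rho_S,\rho_T$ needed to fix $a,b,c$ cost only $O(1)$ rounds, since each node knows the nonzero count of its own row and column and these can be summed over the clique (or one doubles a guess). Finally, when $(\rho_S\rho_T)^{1/3}/n^{1/3}<1$, i.e.\ $\rho_S\rho_T<n$, the total number of nonzero products is $O(n)$ per node, so all nonzeros can be routed directly and the product computed in $O(1)$ rounds, which accounts for the additive $+1$ term.
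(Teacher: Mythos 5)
The first thing to note is that the paper itself contains no proof of this statement: Theorem~\ref{thm:clique-mm} is imported as a black box from the cited works \cite{censor-hillel2019fast,censor-hillel2020sparse}, so your proposal has to be judged on its own merits and against what those works actually do. Your skeleton is the right one---rectangular boxes $a\times b\times c$ with $abc=n^2$, Lenzen routing, and balancing the per-node loads $\rho_S/b$, $\rho_T/a$, $n/c$ indeed gives the optimum $(\rho_S\rho_T/n)^{1/3}$, which is where the exponent comes from. The genuine gap is at exactly the step you flag as the hard part: random permutations of the index sets do \emph{not} give the concentration you claim. A permutation moves whole rows and columns atomically, so it cannot spread out the nonzeros of a single dense row, and the ``standard Chernoff argument'' does not apply because a box load is a sum of a few large, dependent contributions rather than of independently placed entries. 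Concretely, take $\rho_S=\rho_T=\sqrt n$, let $S$ consist of $\sqrt n$ full rows and $T$ of $\sqrt n$ full columns; the theorem promises $O(1)$ rounds, and your parameters give $a=b=\sqrt n$, $c=n$. After random relabelling, the $\sqrt n$ heavy rows of $S$ fall into the $\sqrt n$ row blocks like balls into bins, so with high probability some block contains $\Theta(\log n/\log\log n)$ of them; a node owning a box in that block genuinely needs all of those entries (every product $S[i,k]T[k,j]$ with $i$ a heavy row and $j$ a heavy column is nonzero), so it must receive $\Theta(n\log n/\log\log n)$ words and hence spend $\Theta(\log n/\log\log n)$ rounds, contradicting both the claimed $O(1)$-factor concentration and the claimed bound. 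What closes this gap in the cited works is \emph{adaptive}, instance-dependent partitioning rather than oblivious relabelling: nodes first learn the exact row/column nonzero counts (this takes $O(1)$ rounds) and then cut the cube so that every part carries the same \emph{actual} load---in the example above, exactly one heavy row per block. That approach is also deterministic, which matters here: Table~\ref{tab:results} advertises all upper bounds as deterministic, and Theorem~\ref{thm:clique-dynamic-mm} consumes Theorem~\ref{thm:clique-mm} as a black box, so a Monte Carlo version would weaken those statements.

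A second, smaller flaw is in your final paragraph. The claim that $\rho_S\rho_T<n$ forces only $O(n)$ nonzero products per node is false: with $\rho_S=\rho_T=n^{2/5}$ (so $\rho_S\rho_T=n^{4/5}<n$), letting $S$ have $n^{2/5}$ full columns and $T$ the matching $n^{2/5}$ full rows produces $n^{2/5}\cdot n^2=n^{12/5}$ nonzero elementary products, i.e.\ $n^{7/5}$ per node, and ``routing all nonzeros directly'' is not a defined routing pattern. Relatedly, in this regime your unconstrained optimiser returns $c=n^{4/3}/\rho^{2/3}>n$, which is not a feasible box dimension; you need the caps $a,b,c\le n$, after which the same load computation (with $c=n$ and $ab=n$) bounds the loads by $O\bigl(\sqrt{\rho_S\rho_T/n}+1\bigr)=O(1)$. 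So this regime is subsumed by your main construction and the separate argument is unnecessary---but only once the load-balancing issue above is actually resolved.
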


We use \theoremref{thm:clique-mm} to obtain a non-trivial dynamic batch algorithm for matrix multiplication. This in turn implies an upper bound for triangle counting by a standard reduction.

\begin{theorem}
There is a batch dynamic algorithm for matrix multiplication in congested clique that runs in $O\bigl((\alpha / n)^{1/3} + 1 \bigr)$ rounds and uses $O(n \log n)$ bits of auxiliary state. \label{thm:clique-dynamic-mm}
\end{theorem}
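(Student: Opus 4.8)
The plan is to maintain the previous product matrix as the auxiliary state and to update it by computing only the \emph{difference} induced by the batch, exploiting the fact that the change matrices are sparse. Let $S_1,T_1$ and $S_2,T_2$ denote the old and new input matrices, and write $\Delta_S = S_2 - S_1$ and $\Delta_T = T_2 - T_1$. First I would store at each node $v$ its row $v$ of the old product $P_1 = S_1 T_1$ as the auxiliary state; since all entries are polynomially bounded this takes $O(n \log n)$ bits. Note that from the old and new labellings node $v$ already knows row $v$ and column $v$ of each of $S_1,T_1,S_2,T_2$, and hence row $v$ of $\Delta_S$ and column $v$ of $\Delta_T$, so the only quantity that genuinely needs to persist between batches is $P_1$.

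The update rests on the identity
\[
P_2 = S_2 T_2 = P_1 + \Delta_S T_1 + S_1 \Delta_T + \Delta_S \Delta_T,
\]
so it suffices to compute the three correction products and add them to $P_1$ row-by-row, locally. Each correction product is computed via the static sparse algorithm of \theoremref{thm:clique-mm}: for $\Delta_S T_1$, node $v$ supplies row $v$ of $\Delta_S$ and column $v$ of $T_1$; for $S_1 \Delta_T$, it supplies row $v$ of $S_1$ and column $v$ of $\Delta_T$; and for $\Delta_S \Delta_T$, it supplies row $v$ of $\Delta_S$ and column $v$ of $\Delta_T$. In each invocation node $v$ receives row $v$ of the corresponding product, and finally outputs (and stores as its new auxiliary row) the entrywise sum of the four rows.

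For the running time, I would observe that $\Delta_S$ and $\Delta_T$ each have at most $2\alpha$ nonzero entries, since each changed edge label touches at most two entries of each matrix; hence $\rho_{\Delta_S}, \rho_{\Delta_T} = O(\alpha/n + 1)$, whereas a possibly dense factor has density $O(n)$. Plugging into \theoremref{thm:clique-mm}, each of the two \emph{mixed} products $\Delta_S T_1$ and $S_1 \Delta_T$ costs $O\bigl((n \cdot (\alpha/n + 1))^{1/3}/n^{1/3} + 1\bigr) = O((\alpha/n)^{1/3} + 1)$ rounds, while the all-sparse product $\Delta_S \Delta_T$ costs $O((\alpha/n + 1)^{2/3}/n^{1/3} + 1)$ rounds, which is dominated by the mixed terms whenever $\alpha \le n^2$ (and is trivially fine otherwise). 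Summing the three calls keeps the total at $O((\alpha/n)^{1/3} + 1)$ rounds, with the local additions contributing only $O(1)$.

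The main thing to get right is the density bookkeeping: confirming that the dominant cost is the two mixed products and that feeding the sparse factors into the format expected by \theoremref{thm:clique-mm} (row $v$ of the left factor and column $v$ of the right factor at node $v$) is consistent with what node $v$ already holds from $\ell_1$ and $\ell_2$. A minor point is that the static algorithm's schedule may depend on the densities, but these can be learned globally in $O(1)$ congested-clique rounds by counting nonzeros, so this does not affect the asymptotic bound; no genuinely new difficulty arises beyond this verification.
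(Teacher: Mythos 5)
Your proposal is correct and follows essentially the same approach as the paper: storing row $v$ of $P_1 = S_1T_1$ as the auxiliary state, using the decomposition $P_2 = P_1 + \Delta_S T_1 + S_1\Delta_T + \Delta_S\Delta_T$, and invoking the static sparse matrix multiplication algorithm of \theoremref{thm:clique-mm} on the three correction products. Your density bookkeeping for the mixed and all-sparse products is in fact slightly more explicit than the paper's, which simply notes the densities of $\Delta_S, \Delta_T$ are at most $\lceil \alpha/n\rceil$ and cites the theorem.
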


\begin{proof}
Consider input matrices $S_1$ and $T_1$ and updated input matrices $S_2$ and $T_2$. As auxiliary data $x(v)$, each node $v$ keeps the row $v$ of the matrix $P_1 = S_1 T_1$.

We can write
\[ S_2 = S_1 + \Delta_S\,, \hspace{20mm} T_2 = T_1 + \Delta_T\,, \]
where $\Delta_S$ and $\Delta_T$ are matrices with at most $\alpha$ non-zero elements, which implies their density is at most $\lceil \alpha/n \rceil$. Thus, we can write the product $P_2 = S_2 T_2$ as
\begin{align*}
P_2 & = (S_1 + \Delta_S) (T_1 + \Delta_T)\\
    & = S_1 T_1 + \Delta_S T_1 + S_1 \Delta_T + \Delta_S \Delta_T\\
    & = P_1 + \Delta_S T_1 + S_1 \Delta_T + \Delta_S \Delta_T\,.
\end{align*}
That is, it suffices to compute the products $\Delta_S T_1$,  $S_1 \Delta_T$ and $\Delta_S \Delta_T$ to obtain $P_2$; by Theorem~\ref{thm:clique-mm}, this can be done in $O\bigl((\alpha / n)^{1/3} + 1 \bigr)$ rounds.
\end{proof}

\begin{corollary}
There is a batch dynamic algorithm for triangle counting in congested clique that runs in $O\bigl((\alpha / n)^{1/3} + 1 \bigr)$ rounds and uses $O(n \log n)$ bits of auxiliary state.
\end{corollary}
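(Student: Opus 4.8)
The plan is to reduce triangle counting to the dynamic matrix multiplication problem of \theoremref{thm:clique-dynamic-mm} via the standard trace identity. I would treat the input subgraph $H$ as its symmetric adjacency matrix $A$, where $A[u,v]=A[v,u]=\ell(\{u,v\})$ and $A[v,v]=0$; note that in the congested clique each node $v$ already receives the labels of all its incident edges, so $v$ knows exactly row $v$ (equivalently column $v$) of $A$. I then set $S=T=A$ and maintain the product $P=A^2$ using the dynamic matrix multiplication algorithm, keeping row $v$ of $P_1=A_1^2$ as the auxiliary state of node $v$; this is $O(n\log n)$ bits, matching the claimed space bound.

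First I would argue that the update parameter is preserved: a batch of $\alpha$ edge-label changes alters the entry $A[u,v]$ and its mirror $A[v,u]$ for each changed edge $\{u,v\}$, so at most $2\alpha$ entries of $A$ change. Hence both $\Delta_S$ and $\Delta_T$ have $O(\alpha)$ nonzero entries, and \theoremref{thm:clique-dynamic-mm} updates $P=A^2$ in $O((\alpha/n)^{1/3}+1)$ rounds.

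Next I would recover the triangle count from $A$ and $A^2$. The number of triangles equals $\tfrac16\operatorname{tr}(A^3)=\tfrac16\sum_{v}(A^3)_{vv}$, and since $A$ and $A^2$ are symmetric we have $(A^3)_{vv}=\sum_u A[v,u]\,(A^2)[u,v]=\sum_u A[v,u]\,(A^2)[v,u]$. Node $v$ knows row $v$ of $A$ (its neighbours) and, from the maintained product, row $v$ of $A^2$; therefore $v$ computes $(A^3)_{vv}$ locally with no communication. Summing these $n$ values over the clique takes $O(1)$ rounds using standard aggregation, after which the total $\tfrac16\sum_v(A^3)_{vv}$ can be broadcast to all nodes. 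The overall round complexity is thus dominated by the matrix-multiplication update, giving $O((\alpha/n)^{1/3}+1)$ rounds.

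There is essentially no hard obstacle here, as this is a well-known reduction; the only points requiring care are \textbf{(i)} verifying that an edge-label change perturbs only $O(1)$ matrix entries, so that the batch size translates faithfully into the density bound used by \theoremref{thm:clique-dynamic-mm}, and \textbf{(ii)} checking that the diagonal entries of $A^3$ — which is all that triangle counting needs — can be assembled purely locally from the row each node already holds, so that no additional communication beyond a single $O(1)$-round aggregation is required.
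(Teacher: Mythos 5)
Your proposal is correct and matches the paper's approach: the paper derives this corollary from \theoremref{thm:clique-dynamic-mm} by exactly the ``standard reduction'' you spell out (maintain $A^2$ dynamically, count triangles via $\tfrac16\operatorname{tr}(A^3)$ computed locally from each node's rows of $A$ and $A^2$, then aggregate). Your two points of care --- that an edge change perturbs only $O(1)$ matrix entries, and that the diagonal of $A^3$ needs no extra communication beyond an $O(1)$-round aggregation --- are precisely the details the paper leaves implicit.
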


\subsection*{Acknowledgements}

We thank Jukka Suomela for discussions.
We also thank our shepherd Mohammad Hajiesmaili and the reviewers for their time and suggestions on how to improve the paper.
This project has received funding from the European Research Council (ERC) under the European Union's Horizon 2020 research and innovation programme (grant agreement No 805223 ScaleML), from the European Union's Horizon 2020 research and innovation programme under the Marie Sk\l{}odowska--Curie grant agreement No. 840605, from the Vienna Science and Technology Fund (WWTF) project WHATIF, ICT19-045, 2020-2024, and from the Austrian Science Fund (FWF) and netIDEE SCIENCE project P 33775-N.

\DeclareUrlCommand{\Doi}{\urlstyle{same}}
\renewcommand{\UrlFont}{\footnotesize\sf}
\renewcommand{\doi}[1]{\href{http://dx.doi.org/#1}{\footnotesize\sf doi:\Doi{#1}}}

\bibliographystyle{plainnat}
\bibliography{dynamic-congest}

\end{document}